\newtheorem{definition}{Definition}[section]
\newtheorem{theorem}{Theorem}[section]
\newtheorem{lemma}[theorem]{Lemma}
\newcommand{\samp}{\textsf{SAMP}~}
\newcommand{\cond}{\textsf{COND}~}
\newcommand{\pcond}{\textsf{PCOND}~}
\DeclareMathOperator{\coll}{coll}
\newcommand{\countmin}{\textsf{CountMin}~}
\newcommand{\accept}{\textsf{Accept}~}
\newcommand{\reject}{\textsf{Reject}~}
\newcommand{\compare}{\emph{Compare}~}
\title{Testing properties of distributions in the streaming model }
\author{ Sampriti Roy, Yadu Vasudev}
\begin{document}
\maketitle
% \newpage
%\date{}

%TODO mandatory: add short abstract of the document
\begin{abstract}
%We give efficient algorithms with small sample complexity and space complexity for testing properties of discrete distributions in the standard access model and the conditional access model. In the standard access model, we give an algorithm that learns a succinct explicit representation of a monotone distribution that only requires small space complexity. We then briefly describe how the idea extends to a larger class of decomposable properties. In the conditional access model, we give an efficient algorithm with small space complexity to test identity to an unknown distribution.

We study distribution testing in the standard access model and the conditional access model when the memory available to the testing algorithm is bounded. In both scenarios, the samples appear in an online fashion and the goal is to test the properties of distribution using an optimal number of samples subject to a memory constraint on how many samples can be stored at a given time. First, we provide a trade-off between the sample complexity and the space complexity for testing identity when the samples are drawn according to the conditional access oracle. We then show that we can learn a succinct representation of a monotone distribution efficiently with a memory constraint on the number of samples that are stored that is almost optimal. We also show that the algorithm for monotone distributions can be extended to a larger class of decomposable distributions.
\end{abstract}

\section{Introduction}
Sublinear algorithms that analyze massive amounts of data are crucial in many
applications currently. Understanding the underlying probability distribution
that generates the data is important in this regard. In the field of
distribution testing, a sub-field of property testing, the goal is to test
whether a given unknown distribution has a property $\mathcal{P}$ or is far from
having the property $\mathcal{P}$ (where the farness is defined with respect to
total variation distance). Starting from the work of Goldreich and Ron
(\cite{GR00}), a vast literature of work has studied the problem of testing
probability distributions for important properties like identity, closeness,
support size as well as properties relating to the structure of the distribution
like monotonicity, k-modality, and histograms among many others; see Canonne's
survey (\cite{Canonne22}) for an overview of the problems and results.

In the works of Canonne et al (\cite{CRS15}) and Chakraborty et al
(\cite{CFGM13}), distribution testing with conditional samples was studied. In
this model, the algorithm can choose a subset of the support, and the samples of
the distribution conditioned on this subset are generated. This allows adaptive
sampling from the distribution and can give better sample complexity for a
number of problems. In particular, (\cite{CRS15}) and (\cite{CFGM13}) give testers
for uniformity and other problems that use only a constant number of samples.

The natural complexity measure of interest is the number of samples of the
underlying distribution that is necessary to test the property. In many cases,
when data is large, it might be infeasible to store all the samples that are
generated. A recent line of work has been to study the trade-off between the
sample complexity and the space complexity of algorithms for learning and
testing properties of distributions. This model can be equivalently thought of
as a data stream of i.i.d samples from an unknown distribution, with the
constraint that you are allowed to store only a small subset of these samples at
any point in time.

In this work, we study distribution testing problems in the
standard model, and when the algorithm is allowed to condition on sets to better understand
the trade-off between the sample complexity and size. In particular, we study
identity testing and testing whether the unknown distribution is monotone. Our
work borrows ideas from the recent work of Diakonikolas et al (\cite{DGKR19})
and extends the ideas to these problems.

\subsection{Related work}

Testing and estimating the properties of discrete distributions is well-studied in property testing; see (\cite{Canonne22}) for a nice survey of recent results. In our work, we study property testing of discrete distributions under additional memory constraints wherein the algorithm does not have the resources to store all the samples that it obtains.

This line of work has received a lot of attention in recent times. Chien et al (\cite{CLM10}) propose a sample-space trade-off for testing any $(\epsilon,\delta)$-weakly continuous properties, as defined by Valiant (\cite{Valiant}). Another work by Diakonikolas et al (\cite{DGKR19}) studies the uniformity, identity, and closeness testing problems and presents trade-offs between the sample complexity and the space complexity of the tester. They use the idea of a \emph{bipartite collision tester} where instead of storing all the samples in the memory, the testing can be done by storing a subset of samples and counting the collisions between the stored set and the samples that come later. Another line of work (\cite{ABIS19, AMNW22}) focuses on the task of estimating the entropy of distributions from samples in the streaming model, where space is limited. In particular, (\cite{ABIS19}) estimate the entropy of an unknown distribution $D$ up to $\pm \epsilon$ using constant space. Berg et al (\cite{BOS22}) study the uniformity testing problem in a slightly different model where the testing algorithm is modeled as a finite-state machine.

%Streaming algorithms were first studied in a unified way starting from the seminal work of Alon et al (\cite{AMS99}) where the authors studied the problem of estimating frequency moments. There is a vast amount of literature available on streaming algorithms (see ...). Property testing algorithms in the streaming setting have also been recently studied; Bathie et al (\cite{BGST21}) have studied property testing in the streaming model for testing regular languages.  Czumaj et al (\cite{CFPS20}) show that every constant-query testable property on bounded-degree graphs can be tested on a random-order stream with constant space. 

Property testing with memory constraints has also been studied in the setting of streaming algorithms as well. Streaming algorithms were first studied in a unified way starting from the seminal work of Alon et al (\cite{AMS99}) where the authors studied the problem of estimating frequency moments. There is a vast amount of literature available on streaming algorithms (see \cite{muthukrishnan05,McGregor14}). Bathie et al (\cite{BGST21}) have studied property testing in the streaming model for testing regular languages.  Czumaj et al (\cite{CFPS20}) show that every constant-query testable property on bounded-degree graphs can be tested on a random-order stream with constant space. Since this line of work is not directly relevant to our work in this paper, we will not delve deeper into it here.

\subsection{Our results}

In this work, we study the trade-off between sample complexity and space
complexity in both the standard access model and the conditional access
model. In the standard access model, a set of samples can be drawn independently
from an unknown distribution. In the case of the conditional access model, a subset
of the domain is given and samples can be drawn from an unknown distribution
conditioned on the given set. This is similar to a streaming algorithm where the samples are presented to the algorithm, and the algorithm has a memory constraint of $m$ bits; i.e.,
only up to $m$ bits of samples can be stored in memory.

In the standard access model, which we will refer to as \samp, we have a
distribution $D$ over the support $\{1,2,\ldots,n\}$ and the element $i$ is
sampled with probability $D(i)$. In the conditional access model, which we will refer to as \cond, the algorithm can choose a set $S\subseteq \{1,2,\ldots,n\}$ and will obtain samples from the conditional distribution over the set. I.e.\ the sample $i \in S$ is returned with probability $D(i)/D(S)$. In this work, we will work with the case when the conditioning is done on sets of size at most two - we will refer to this conditional oracle as \pcond (\cite{CRS15}). 

Our results are stated below.
\begin{itemize}
\item We propose a memory-efficient identity testing algorithm in the \pcond model when the algorithm is restricted by the memory available to store the samples. We adapt the algorithm of Canonne et al (\cite{CRS15})
  and reduce the memory requirement
  by using the \countmin sketch (\cite{CM04}) for storing the frequencies of the
  samples. The identity testing algorithm uses
  $O(\log^2{n}\log{\log{n}}/m\epsilon^2)$ samples from standard access model where $\frac{\log{n}\sqrt{\log{\log{n}}}}{{\epsilon}}\leq m\leq \frac{\log^2{n}}{\epsilon}$ and an $\tilde{O}(\log^4{n}/\epsilon^4)$ samples from conditional access model and does the following, if $D=D^*$, it returns \accept with probability at least $2/3$, and if $d_{TV}(D, D^*)\geq \epsilon$, it returns \reject with probability at least
  $2/3$. It uses only $O(\frac{m}{\epsilon})$ bits of memory.

    We also observe that by applying oblivious decomposition \cite{Birge},  performing identity and closeness testing on monotone distributions over $[n]$ can be reduced to performing the corresponding tasks on arbitrary distributions over $[O(\log{(n\epsilon)}/\epsilon)]$. We use the streaming model based identity tester from (\cite{DGKR19}) and obtain an $O(\log{(n\epsilon)}\log{\log{(n\epsilon)}}/m\epsilon^5)$ standard access query identity tester for monotone distributions where $\log{\log{(n\epsilon)}}/\epsilon^2\leq m\leq (\log{(n\epsilon)/\epsilon})^{9/10}$. Their closeness testing algorithm also implies a closeness tester for monotone distributions which uses $O(\log{(n\epsilon)} \sqrt{\log{\log{(n\epsilon)}}}/\sqrt{m}\epsilon^3)$ samples from standard access model, where $\log{\log({n}\epsilon)}\leq m\leq \tilde{\Theta}(min(\log{(n\epsilon)}/\epsilon,\log^{2/3}{(n\epsilon)}/\epsilon^{2}))$. Both testers require $m$ bits of memory.
    \item We adapt the idea of the \emph{bipartite collision tester} (\cite{DGKR19}) and give an algorithm that uses $O(\frac{n\log{n}}{m\epsilon^{8}})$ samples from \samp \ and tests if the distribution is monotone or far from being monotone. This algorithm requires only $O(m)$ bits of memory for $\log^2{n}/\epsilon^6\leq m\leq \sqrt{n}/\epsilon^3$. This upper bound is nearly tight since we observe that the lower bound for uniformity testing proved by Diakonikolas et al (\cite{DGKR19}) applies to our setting as well. In particular, we show that the "no" distribution that is used in \cite{DGKR19} is actually far from monotone, and hence the lower bound directly applies in our setting as well.

    \item We extend the idea of the previous algorithm for learning and testing a more general class of distribution called $(\gamma, L)$-decomposable distribution, which includes monotone and $k$-modal distributions. Our algorithm takes $O(\frac{nL\log{(1/\epsilon)}}{m\epsilon^9})$ samples from $D$ and needs $O(m)$ bits of memory where $\log{n}/\epsilon^4 \leq m\leq O(\sqrt{n\log{n}}/\epsilon^3)$.
    
\end{itemize}

\section{Notation and Preliminaries}

Throughout this paper, we study distributions $D$ that are supported over the set $\{1,2,\ldots,n\} = [n]$.
The notion of distance between distributions will be \emph{total variation distance} or \emph{statistical distance} which is defined as follows: for two distributions $D_1$ and $D_2$, the total variation distance, denoted by $d_{TV}(D_1,D_2)=\frac{1}{2} |D_1-D_2|_1=\frac{1}{2} \sum_{i\in [n]} |D_1(x)-D_2(x)|=max_{S\subseteq [n]} ((D_1(S)-D_2(S))$. We will use $\mathcal{U}$ to denote the uniform distribution over $[n]$. We use $|.|_1$ for the $\ell_1$ norm, $||.||_2$ for the $\ell_2$ norm. 

Let $D_1$ and $D_2$ be two distributions over $[n]$, if $d_{TV}(D_1,D_2)\leq \epsilon$, for some $0\leq \epsilon\leq 1$, we say that $D_1$ is $\epsilon$ close to $D_2$. 
Let $\mathcal{D}$ be the set of all probability distributions supported on $[n]$. A property $\mathcal{P}$ is a subset of $\mathcal{D}$. We say that a distribution $D$ is $\epsilon$ far from $\mathcal{P}$, if $D$ is $\epsilon$ far from all the distributions having the property $\mathcal{P}$. I.e.\ $d_{TV}(D,D') > \epsilon$ for every $D' \in \mathcal{P}$. 

% A distribution $D$ over $[n]$ is called monotone if $D(1)\geq D(2)\geq ...\geq D(n)$. A natural generalization of monotone distribution is $k$-modal distribution. We define it as follows,
% \begin{definition}
%     Let $D$ be a distribution over $[n]$ and $k$ be an integer. $D$ is said to be $k$-modal if there exists a sequence $i_0<i_1<...<i_{k+1}$ such that either $(-1)^jD(i_j)<(-1)^jD(i_{j+1})$ or $(-1)^jD(i_j)>(-1)^jD(i_{j+1})$ for all $0\leq j\leq k$. 
% \end{definition}

We define the probability of self-collision of the distribution $D$ by $||D||_2$. For a set $S$ of samples drawn from $D$, $\coll(S)$ defines the pairwise collision count between them. Consider $S_1, S_2\subset S$, the \emph{bipartite collision} of $D$ with respect to $S$ is defined by $\coll(S_1,S_2)$ is the number of collision between $S_1$ and $S_2$.

We will be using the count of collisions among sample points to test closeness to uniformity. The following lemma connects the collision probability and the distance to uniformity.
\begin{lemma}[\cite{BKR04}]
    Let $D$ be a distribution over $[n]$. If $\max_xD(x)\leq (1+\epsilon).\min_xD(x)$ then $||D||_2^2\leq (1+\epsilon^2)/n$. If $||D||_2^2\leq (1+\epsilon^2)/n$ then $d_{TV}(D,\mathcal{U})\leq \epsilon$.
    \label{observation}
\end{lemma}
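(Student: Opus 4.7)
The plan is to reduce both implications to bounds on $\|D - \mathcal{U}\|_2^2$ via the orthogonal identity $\|D\|_2^2 = 1/n + \|D - \mathcal{U}\|_2^2$, which is a one-line computation: expand $\sum_x (D(x) - 1/n)^2$ and use $\sum_x D(x) = 1$ to cancel the cross term. Both directions of the lemma then become statements about the $\ell_2$ distance between $D$ and the uniform distribution $\mathcal{U}$.

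For the first implication, I would set $a = \min_x D(x)$ and $b = \max_x D(x)$ and first record the sandwich $a \leq 1/n \leq b$, which is forced by $\sum_x D(x) = 1$. This is the key structural observation: it implies the bilateral bound $|D(x) - 1/n| \leq b - a$ at every $x$, irrespective of whether $D(x)$ lies above or below $1/n$. The hypothesis $b \leq (1+\epsilon)a$ then gives $b - a \leq \epsilon a \leq \epsilon/n$ (using $a \leq 1/n$ a second time), so summing $n$ squared deviations each of size at most $\epsilon/n$ yields $\|D - \mathcal{U}\|_2^2 \leq \epsilon^2/n$, and the identity concludes.

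For the second implication, the hypothesis rewrites directly as $\|D - \mathcal{U}\|_2^2 \leq \epsilon^2/n$. A Cauchy--Schwarz inequality applied to the vector $(D(x) - 1/n)_{x \in [n]}$ against the all-ones vector gives $\|D - \mathcal{U}\|_1 \leq \sqrt{n}\,\|D - \mathcal{U}\|_2 \leq \epsilon$, and the factor of $1/2$ in the definition of total variation distance then yields $d_{TV}(D, \mathcal{U}) \leq \epsilon/2 \leq \epsilon$.

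There is no substantive obstacle, but the one subtle point is the sandwich step in the first implication. A naive attempt that uses only the one-sided bound $D(x) \leq b \leq (1+\epsilon)/n$ gives $\|D\|_2^2 \leq b \cdot \sum_x D(x) \leq (1+\epsilon)/n$, which is strictly weaker than the target $(1+\epsilon^2)/n$. It is precisely the bilateral bound $|D(x) - 1/n| \leq b - a$, made possible by the normalization constraint, that converts the multiplicative tightness of the hypothesis into the desired quadratic gain $\epsilon^2$ rather than the linear $\epsilon$.
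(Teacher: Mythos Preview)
Your proof is correct. The paper does not give its own proof of this lemma; it is simply quoted from \cite{BKR04} and used as a black box. Your argument via the identity $\|D\|_2^2 = 1/n + \|D-\mathcal{U}\|_2^2$, the bilateral bound $|D(x)-1/n|\le b-a\le \epsilon a\le \epsilon/n$, and Cauchy--Schwarz for the second implication is the standard route and is fully rigorous.
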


One way to test the properties of distributions is to first learn an explicit description of the distribution. We now define the notion of flattened and reduced distributions that will be useful towards this end.

\begin{definition}[Flattened and reduced distributions] Let $D$ be a distribution over $[n]$, and there exists a set of partitions of the domain into $\ell$ disjoint intervals, $\mathcal{I}=\{I_j\}_{j=1}^{\ell}$. The flattened distribution $(D^f)^{\mathcal{I}}$ corresponding to $D$ and $\mathcal{I}$ is a distribution over $[n]$ defined as follows : for $j\in [\ell]$ and $i\in I_j$; $(D^f)^{\mathcal{I}}(i)=\frac{\sum_{t\in I_j}D(t)}{|I_j|}$. A reduced distribution $D^r$ is defined over $[\ell]$ such that $\forall i\in \ell, D^r(i)=D(I_i)$.
\end{definition}
If a distribution $D$ is $\epsilon$ close to its flattened distribution according to some partition $\{\mathcal{I}_j\}_{j=1}^{\ell}$, we refer $D$ to be $(\epsilon,\ell)$-flattened. We note that if a distribution is monotonically non-increasing, then its flattened distribution is also monotonically non-increasing but its reduced distribution is not necessarily the same.

The following folklore result shows that the empirical distribution is close to the actual distribution provided sufficient number of samples are taken.

\begin{lemma}[Folklore]
Given a distribution $D$ supported over $[n]$ and an interval partition $\mathcal{I}=\{ I_1,...,I_{\ell}\}$, using $S=O(\frac{\ell^2}{\epsilon^2}\log{\ell})$ points from SAMP, we can obtain an empirical distribution $\tilde{D}$ in the following way: $\forall I_j\in \mathcal{I};\Tilde{D}(I_j)=\frac{occ(S,I_j)}{|S|}$ ($occ(S,I_j)$ is the number of samples from $S$ lies inside $I_j$) over $[\ell]$ such that for all interval $I_j$, with probability at least $2/3$, $|D(I_j)-\Tilde{D}(I_j)|\leq \frac{\epsilon}{\ell}$. Moreover, let the flattened distribution of $D$ be $(D^f)^{\mathcal{I}}$ and the flattened distribution of $\Tilde{D}$ be $(\Tilde{D}^f)^{\mathcal{I}}$, we can say that $d_{TV}((D^f)^{\mathcal{I}},(\Tilde{D}^f)^{\mathcal{I}})<\epsilon$.
\label{lem:folklore-learn}
\end{lemma}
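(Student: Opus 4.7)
The plan is to prove the per-interval concentration first via a Chernoff-type bound, then union bound over the $\ell$ intervals, and finally convert the per-interval closeness into total variation closeness between the flattened distributions by a direct calculation.

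For a fixed interval $I_j$, each sample in $S$ independently falls inside $I_j$ with probability $D(I_j)$, so $|S| \cdot \tilde{D}(I_j) = \mathrm{occ}(S, I_j)$ is a sum of $|S|$ i.i.d.\ Bernoullis with mean $D(I_j)$. By a Hoeffding bound,
\[
\Pr\!\left[\,|\tilde{D}(I_j) - D(I_j)| > \tfrac{\epsilon}{\ell}\,\right] \;\le\; 2\exp\!\left(-\tfrac{2|S|\epsilon^2}{\ell^2}\right).
\]
Taking $|S| = c \cdot \frac{\ell^2}{\epsilon^2}\log \ell$ for a sufficiently large constant $c$ makes this probability at most $\frac{1}{3\ell}$. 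A union bound over the $\ell$ intervals then gives that, with probability at least $2/3$, $|\tilde{D}(I_j) - D(I_j)| \le \epsilon/\ell$ simultaneously for all $j \in [\ell]$.

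For the second claim, I would unfold the definition of the flattened distributions. Inside $I_j$, both $(D^f)^{\mathcal{I}}$ and $(\tilde{D}^f)^{\mathcal{I}}$ are constant with values $D(I_j)/|I_j|$ and $\tilde{D}(I_j)/|I_j|$ respectively, so
\[
d_{TV}\!\bigl((D^f)^{\mathcal{I}},(\tilde{D}^f)^{\mathcal{I}}\bigr)
= \tfrac{1}{2}\sum_{j=1}^{\ell} \sum_{i \in I_j} \left|\tfrac{D(I_j)}{|I_j|} - \tfrac{\tilde{D}(I_j)}{|I_j|}\right|
= \tfrac{1}{2}\sum_{j=1}^{\ell} |D(I_j) - \tilde{D}(I_j)|.
\]
Plugging in the per-interval bound from the first part yields $d_{TV} \le \tfrac{1}{2}\cdot \ell \cdot \tfrac{\epsilon}{\ell} = \tfrac{\epsilon}{2} < \epsilon$, as required.

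There is no real obstacle here; the only point that needs care is lining up the constants so that the Hoeffding exponent beats the $\ell$ terms in the union bound, which is exactly why the $\log \ell$ factor appears in the sample size. Everything else is bookkeeping: the bipartite structure of the flattened distribution makes the $d_{TV}$ computation collapse to a sum of per-interval discrepancies, which were already controlled in the previous step.
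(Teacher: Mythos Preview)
Your proof is correct and follows the standard approach one would expect for this folklore result: Hoeffding per interval, union bound over the $\ell$ intervals, and then the observation that the total variation distance between the two flattened distributions collapses to $\tfrac{1}{2}\sum_j |D(I_j)-\tilde D(I_j)|$. The paper itself does not supply a proof of this lemma---it is stated as folklore and used as a black box---so there is nothing substantive to compare against; your argument is exactly the intended one.
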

While designing a tester for monotonicity, we use the following theorem due to Birge (\cite{Birge})
\begin{lemma}[Oblivious partitioning \cite{Birge}]Let $D$ be a non-increasing distribution over $[n]$ and $\mathcal{I}=\{I_1,...,I_{\ell}\}$ is an interval partitioning of $D$ such that $|I_j|=(1+\epsilon)^j$, for $0<\epsilon<1 $, then $\mathcal{I}$ has the following properties,
\begin{itemize}
    \item $\ell=O(\frac{1}{\epsilon}\log{n\epsilon})$
    \item The flattened distribution corresponding to $\mathcal{I}$, $(D^f)^{\mathcal{I}}$ is $\epsilon$-close to $D$, or $D$ is $(\epsilon, \ell)$-flattened.
\end{itemize}
\end{lemma}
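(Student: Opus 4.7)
The first bullet follows from a geometric-series calculation. The intervals $I_1,\dots,I_\ell$ must cover $[n]$, so I need $\sum_{j=1}^{\ell} \lfloor (1+\epsilon)^j\rfloor \geq n$. Dropping the floors for the lower bound on $\ell$ and summing gives $\frac{(1+\epsilon)((1+\epsilon)^\ell-1)}{\epsilon}\geq n$, equivalently $(1+\epsilon)^\ell \geq \Omega(n\epsilon)$. Taking logarithms yields $\ell \geq \log_{1+\epsilon}\Omega(n\epsilon)$, and since $\log(1+\epsilon)=\Theta(\epsilon)$ for $\epsilon\in(0,1)$, choosing $\ell=O(\log(n\epsilon)/\epsilon)$ suffices, which proves the first item.

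For the second bullet, let $I_j=[a_j,b_j]$, $L_j=|I_j|$, $M_j=D(a_j)$, $m_j=D(b_j)$, and $\mu_j=D(I_j)/L_j$. I would begin from the identity $d_{TV}(D,(D^f)^{\mathcal{I}})=\tfrac{1}{2}\sum_{j=1}^{\ell}\sum_{i\in I_j}|D(i)-\mu_j|$, and bound the inner sum on each $I_j$ by $L_j(M_j-m_j)$ using monotonicity. The plan is then to sum $L_j(M_j-m_j)$ across $j$ and exploit two facts coming from monotonicity together with the geometric growth of the partition: (i) $M_j\leq m_{j-1}\leq D(I_{j-1})/L_{j-1}$ so $L_j M_j \leq (L_j/L_{j-1})\,D(I_{j-1})\leq (1+\epsilon)\,D(I_{j-1})$; and (ii) $m_j\geq M_{j+1}\geq \mu_{j+1}$ so $L_j m_j \geq D(I_{j+1})/(1+\epsilon)$. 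Substituting and telescoping against the identity $\sum_j D(I_j)=1$ collapses the expression to an $O(\epsilon)$ residual arising from the $(1+\epsilon)$ factors, since the ``bulk'' mass terms cancel.

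The main obstacle is that the naive application of bound (i) alone only yields $\sum_j L_j M_j\leq 1+\epsilon$, which is $O(1)$ rather than $O(\epsilon)$. The trick is to not bound $L_j(M_j-m_j)$ by $L_j M_j$ but to keep the subtractive term $L_j m_j$ and pair it with bound (ii), so that after summation one really obtains $(1+\epsilon)(1-p_\ell)-(1-p_1)/(1+\epsilon)$-type expressions whose leading $O(1)$ parts cancel and leave only $O(\epsilon)$. Care is needed at the boundary: for small $j$ one has $L_j=\lfloor(1+\epsilon)^j\rfloor=1$ (whenever $(1+\epsilon)^j<2$), in which case the flattening introduces zero error on $I_j$ since $M_j=m_j=\mu_j$, and the sum effectively starts at the first $j_0$ with $L_{j_0}\geq 2$. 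A similar care is needed at $j=\ell$, where one simply uses $L_\ell m_\ell\geq 0$. Combining these ingredients gives $\tfrac{1}{2}\sum_j L_j(M_j-m_j)\leq \epsilon$, establishing that $D$ is $(\epsilon,\ell)$-flattened.
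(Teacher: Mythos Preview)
The paper does not supply its own proof of this lemma: it is stated as a citation to Birg\'e and used as a black box, so there is no ``paper's proof'' to compare your proposal against.

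That said, your proposal is essentially the standard Birg\'e argument and is correct. The geometric-series computation for the first bullet is fine. For the second bullet, your chain of inequalities is sound: from monotonicity you correctly extract $M_j\le m_{j-1}\le D(I_{j-1})/L_{j-1}$ (minimum is at most the average) and $m_j\ge M_{j+1}\ge D(I_{j+1})/L_{j+1}$ (maximum is at least the average), and the ratio $L_j/L_{j\pm 1}$ is controlled by $1+\epsilon$. Summing $L_j(M_j-m_j)\le (1+\epsilon)D(I_{j-1})-D(I_{j+1})/(1+\epsilon)$ and using $\sum_j D(I_j)=1$ indeed leaves only an $O(\epsilon)$ residual, and after the $\tfrac{1}{2}$ from total variation one gets the desired $\epsilon$ bound. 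Your remark that singleton intervals (the initial $j$'s with $\lfloor(1+\epsilon)^j\rfloor=1$) contribute zero error and can be excised from the sum is exactly the right way to handle the boundary; the only additional care needed is that with floors the ratio $L_j/L_{j-1}$ can momentarily exceed $1+\epsilon$ at the transition points, but this affects only $O(1)$ terms and is absorbed into the constant. This is precisely Birg\'e's original argument, so your write-up would serve as a self-contained proof where the paper simply cites the result.
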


% Learning is a useful tool for its ability to construct an explicit distribution that is close to the actual one. Given an explicit distribution obtained by learning, one can conduct testing on the same instead of using the unknown one. The explicitly constructed distribution is called \emph{empirical distribution}. We formalize it below,

Next, we describe a data structure called the \countmin sketch which is used to estimate the frequencies of elements in a one-pass stream. It was introduced by Cormode et al (\cite{CM04}). As we are dealing with a one-pass streaming algorithm with a memory constraint, it would be important to store samples in less space. \countmin sketch uses hash functions to store frequencies of the stream elements in sublinear space and returns an estimate of the same. 
\begin{definition}[\countmin sketch] A \countmin (CM) sketch with parameters $(\epsilon,\delta)$ is represented by a
two-dimensional array counts with width $w$ and depth $d$: $count[1,1],..., count[d,w]$. Set $w=\frac{e}{\epsilon}$ and $d=\log{1/\delta}$. Each entry of the array is initially zero. Additionally, $d$ hash functions $h_1,...,h_d : \{1,...,n\}\rightarrow \{1,...,w\}$ chosen uniformly at random from a pairwise-independent family. The space requirement for the count min sketch is $wd$ words. The sketch can be queried for the frequency of an element from the universe $\mathcal{U}$ of elements, and will return an estimate of its frequency.
\end{definition}
The lemma below captures the fact that the frequency of any element $x_i$ can be estimated from a \countmin sketch.
\begin{lemma}[\cite{CM04}]
    Let $\{x_1,...,x_S\}$ be a stream of length $S$ and $f_{x_i}$ be the actual frequency of an element $x_i$. Suppose $\tilde{f}_{x_i}$ be the stored frequency in count min sketch, then the following is true with probability at least $(1-\delta)$,
    %\begin{align*}
        $f_{x_i}\leq \tilde{f}_{x_i}\leq f_{x_i}+\epsilon S$.
    %\end{align*}
    \label{cmin}
\end{lemma}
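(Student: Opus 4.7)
The plan is to follow the standard two-sided analysis of a Count-Min sketch. The lower bound $f_{x_i}\le \tilde f_{x_i}$ is structural: every occurrence of $x_i$ in the stream increments $count[j,h_j(x_i)]$ for every row $j\in[d]$, and the reported estimate is $\tilde f_{x_i}=\min_j count[j,h_j(x_i)]$; since counts only accumulate nonnegative contributions, each of the $d$ candidate counters is at least $f_{x_i}$, so their minimum is as well. This requires no probabilistic argument.

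For the upper bound, the idea is to bound the ``overcount'' $X_{i,j}:=count[j,h_j(x_i)]-f_{x_i}$ row by row. Note that $X_{i,j}=\sum_{y\neq x_i} f_y\cdot \mathbf{1}[h_j(y)=h_j(x_i)]$. Using pairwise independence of $h_j$, for any fixed $y\neq x_i$ we have $\Pr[h_j(y)=h_j(x_i)]=1/w=\epsilon/e$, so by linearity
\begin{equation*}
\mathbb{E}[X_{i,j}] \;=\; \sum_{y\neq x_i} f_y\cdot \frac{\epsilon}{e} \;\le\; \frac{\epsilon S}{e}.
\end{equation*}
Since $X_{i,j}\ge 0$, Markov's inequality then gives $\Pr[X_{i,j}>\epsilon S]\le 1/e$.

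The last step is to amplify this across the $d$ rows. Because the hash functions $h_1,\ldots,h_d$ are chosen independently, the events $\{X_{i,j}>\epsilon S\}$ are mutually independent across $j$, so
\begin{equation*}
\Pr\bigl[\tilde f_{x_i} - f_{x_i} > \epsilon S\bigr]
\;=\; \Pr\Bigl[\min_{j\in[d]} X_{i,j} > \epsilon S\Bigr]
\;\le\; \left(\tfrac{1}{e}\right)^{d} \;=\; \left(\tfrac{1}{e}\right)^{\log(1/\delta)} \;=\; \delta.
\end{equation*}
Combining with the deterministic lower bound yields $f_{x_i}\le \tilde f_{x_i}\le f_{x_i}+\epsilon S$ with probability at least $1-\delta$.

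The only step that requires any real care is the pairwise-independence calculation for $\mathbb{E}[X_{i,j}]$; everything else is either immediate (the lower bound, independence across rows) or a one-line application of Markov. So there is no genuine obstacle here, and the proof is essentially a restatement of the original Cormode--Muthukrishnan argument.
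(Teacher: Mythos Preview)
Your argument is correct and is exactly the standard Cormode--Muthukrishnan analysis. The paper itself does not prove this lemma at all; it merely cites it from \cite{CM04}, so there is no ``paper's own proof'' to compare against beyond the original source, which your write-up faithfully reproduces.
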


\section{Identity testing in the streaming model}
In this section, we discuss identity testing problems in the streaming settings. First, we start with testing monotone distribution where the stream consists of samples drawn from \samp. Later we describe a identity tester that uses \pcond queries one at a time online fashion.
\subsection{Testing monotone distributions}
We start with testing monotone distribution for identity and closeness. Both algorithms work in the following way: Partition the domain of the monotone distributions using the oblivious decomposition, get the reduced distribution over $\ell=O(\log{n})/\epsilon$, test identity and closeness for the reduced distributions. The lemma below captures the fact of why it is sufficient to test reduced distributions in the case of monotone distributions.
\begin{lemma} [\cite{DDSVV13}] \label{reduced}
     Let $\mathcal{I}=\{I_1,...,I_{\ell}\}$ be a partition of $[n]$. Suppose $D_1$ and $D_2$ are two distributions over $[n]$ such that both of them are $(\epsilon,\ell)$ flattened according to $\mathcal{I}$ and $(D_1^r)^{\mathcal{I}}$, $(D_2^r)^{\mathcal{I}}$ are the reduced distributions respectively, then the following happens, $|d_{TV}(D_1, D_2)-d_{TV}((D_1^r)^{\mathcal{I}},(D_2^r)^{\mathcal{I}})|\leq 2\epsilon$. Moreover, if $D_1=D_2$, $(D_1^r)^{\mathcal{I}}=(D_2^r)^{\mathcal{I}}$.
\end{lemma}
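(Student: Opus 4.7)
The plan is to route the comparison of $D_1$ and $D_2$ through their flattened versions $(D_1^f)^{\mathcal{I}}$ and $(D_2^f)^{\mathcal{I}}$, using the triangle inequality together with the $(\epsilon,\ell)$-flattened hypothesis, and then to observe that the total variation distance between the flattened distributions coincides exactly with that between the reduced distributions. The moreover clause is immediate from the definition of the reduced distribution.

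The first step I would carry out is the key identity $d_{TV}((D_1^f)^{\mathcal{I}},(D_2^f)^{\mathcal{I}}) = d_{TV}((D_1^r)^{\mathcal{I}},(D_2^r)^{\mathcal{I}})$. Since $(D_k^f)^{\mathcal{I}}$ is constant on each block $I_j$ with value $D_k(I_j)/|I_j|$, summing over $i \in I_j$ and then over $j$ gives
\[
d_{TV}((D_1^f)^{\mathcal{I}},(D_2^f)^{\mathcal{I}}) = \tfrac{1}{2}\sum_{j=1}^{\ell} |I_j| \cdot \Big|\tfrac{D_1(I_j)}{|I_j|} - \tfrac{D_2(I_j)}{|I_j|}\Big| = \tfrac{1}{2}\sum_{j=1}^{\ell} |D_1(I_j)-D_2(I_j)|,
\]
and the right-hand side is exactly $d_{TV}((D_1^r)^{\mathcal{I}},(D_2^r)^{\mathcal{I}})$ because the reduced distribution puts mass $D_k(I_j)$ on symbol $j \in [\ell]$.

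Next, I would apply the triangle inequality in both directions. Since $d_{TV}(D_k,(D_k^f)^{\mathcal{I}}) \leq \epsilon$ by the $(\epsilon,\ell)$-flattened hypothesis, the chain $D_1 \to (D_1^f)^{\mathcal{I}} \to (D_2^f)^{\mathcal{I}} \to D_2$ yields $d_{TV}(D_1,D_2) \leq 2\epsilon + d_{TV}((D_1^r)^{\mathcal{I}},(D_2^r)^{\mathcal{I}})$ after substituting the identity from the previous step. Symmetrically, the chain $(D_1^f)^{\mathcal{I}} \to D_1 \to D_2 \to (D_2^f)^{\mathcal{I}}$ gives $d_{TV}((D_1^r)^{\mathcal{I}},(D_2^r)^{\mathcal{I}}) \leq 2\epsilon + d_{TV}(D_1,D_2)$. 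Combining the two bounds yields the desired $|d_{TV}(D_1,D_2)-d_{TV}((D_1^r)^{\mathcal{I}},(D_2^r)^{\mathcal{I}})| \leq 2\epsilon$.

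For the moreover clause, if $D_1=D_2$ then $D_1(I_j)=D_2(I_j)$ for every block $I_j$, and since the reduced distribution is defined solely by these interval masses, $(D_1^r)^{\mathcal{I}}=(D_2^r)^{\mathcal{I}}$. There is no serious obstacle in this proof; the only non-trivial piece is the identity between the flattened-TV and reduced-TV distances, and this is just a direct calculation using that flattening preserves interval masses.
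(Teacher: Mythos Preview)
Your argument is correct and is the standard way to prove this lemma: identify $d_{TV}$ between the flattened distributions with $d_{TV}$ between the reduced distributions via a direct block-by-block calculation, then sandwich $d_{TV}(D_1,D_2)$ using the triangle inequality and the $(\epsilon,\ell)$-flattened hypothesis.

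There is nothing to compare against in the paper itself: the lemma is stated with a citation to \cite{DDSVV13} and no proof is given in the text. Your write-up would serve perfectly well as the missing justification.
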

As monotone distributions are $(\epsilon,\ell)$-flattened according to the oblivious decomposition, we can reduce the problem of testing a monotone distribution into testing the corresponding reduced distribution. For testing identity and closeness with a monotone distribution, it is feasible to run an identity tester and closeness tester over the reduced distributions. We use the following theorem for using one pass streaming identity tester.
\begin{theorem}[ Streaming identity tester \cite{DGKR19}] \label{stream identity}
   Let $D^*$ be an explicit distribution over $[\ell]$ and $D$ be an unknown distribution over $[\ell]$. There exists a single pass streaming identity tester that takes $O(\ell \log{\ell}/m\epsilon^4)$ samples from \samp oracle for $\log{\ell}/\epsilon^2\leq m\leq \ell^{9/10}$ and does the following: if $D=D^*$, with probability at least $2/3$, the algorithm outputs \accept, if $d_{TV}(D,D^*)>\epsilon$, it outputs \reject with probability at least $2/3$.
\end{theorem}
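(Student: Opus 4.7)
The plan is to reduce identity testing to uniformity testing, and then perform uniformity testing via a bipartite collision statistic in the spirit of the ``bipartite collision tester'' described in the related work. Given the explicit distribution $D^{*}$ over $[\ell]$, one first applies the standard sample-preserving reduction (as used, for example, by Goldreich or by Diakonikolas and Kane) that transforms each sample from $D$ into a sample from a new distribution $D'$ over a domain of size $\ell' = O(\ell)$, such that $D = D^{*}$ implies $D'$ is uniform on $[\ell']$, while $d_{TV}(D, D^{*}) > \epsilon$ implies $d_{TV}(D', \mathcal{U}_{\ell'}) = \Omega(\epsilon)$. This reduction is carried out on-line: each arriving sample is locally transformed into a sample from $D'$, so it fits naturally in the streaming model with no additional memory overhead.

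The uniformity tester for $D'$ proceeds by a bipartite collision count. Store the first $m$ samples as a set $S_{1}$ in a hash table, using $O(m \log \ell')$ bits. Then read $s = \Theta(\ell \log \ell / (m \epsilon^{4}))$ further samples as a stream $S_{2}$, and for each incoming $b \in S_{2}$ perform one lookup into $S_{1}$ to update a running counter $X = \lvert\{(a,b) \in S_{1} \times S_{2} : a = b\}\rvert$. Output \accept if $X$ is below a threshold $\tau$ placed at the midpoint of the two expected collision counts, and \reject otherwise.

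The correctness argument is a standard second-moment computation. We have $E[X] = s\,m\,\lVert D' \rVert_{2}^{2}$, so under uniformity $E[X] = sm/\ell'$, whereas Lemma \ref{observation} implies that if $d_{TV}(D', \mathcal{U}_{\ell'}) = \Omega(\epsilon)$ then $E[X] \geq (1 + \Omega(\epsilon^{2}))\,sm/\ell'$; the two expectations are multiplicatively separated. Because $S_{1}$ and $S_{2}$ are independent i.i.d.\ samples, decomposing $\mathrm{Var}[X]$ according to whether two colliding pairs share an element of $S_{1}$ or of $S_{2}$ yields a bound of the form $\mathrm{Var}[X] \leq E[X] + O(sm(s+m)\lVert D'\rVert_{3}^{3})$. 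Plugging in the chosen $s$ and using $\lVert D' \rVert_{3}^{3} = O(1/\ell'^{2})$ in both the yes and no instances makes the gap of size $\Omega(\epsilon^{2})E[X]$ exceed a few standard deviations of $X$, and Chebyshev's inequality yields success probability at least $2/3$.

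The main obstacle is the variance calculation: the bipartite collision count has correlated contributions from pairs sharing an element of $S_{1}$ or of $S_{2}$, and one must argue that the $\lVert D' \rVert_{3}^{3}$ term stays controlled relative to the signal, using the near-uniform structure guaranteed by the reduction rather than the worst-case inequality $\lVert D'\rVert_{3}^{3} \leq \lVert D'\rVert_{2}^{2}$. A secondary technical point is the memory accounting: the hash table for $S_{1}$, the counter $X$, and the hash-function seeds must all fit in $m$ words; the conditions $\log\ell/\epsilon^{2}\leq m\leq \ell^{9/10}$ are precisely what is needed to guarantee both that the signal $sm\epsilon^{2}/\ell'$ dominates its fluctuations and that $s \geq m$, so that the bipartite split is not degenerate.
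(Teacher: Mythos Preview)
The paper does not prove this statement: Theorem~\ref{stream identity} is quoted from \cite{DGKR19} and invoked as a black box in Algorithm~1, with no argument supplied here. So there is no in-paper proof to compare against; your sketch is being measured against the cited result, not against anything the present paper does.

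That said, your outline (reduce identity to uniformity on-line, then run a bipartite collision uniformity test storing one block $S_{1}$ and streaming $S_{2}$ past it) is indeed the mechanism underlying the \cite{DGKR19} tester. Two points would need tightening before it stands as a proof. First, the memory bookkeeping: you propose to store $m$ \emph{samples} in a hash table costing $O(m\log\ell')$ bits, but the theorem's $m$ is a bit budget, so $S_{1}$ should hold $\Theta(m/\log\ell)$ samples and the sample count $s$ should be recomputed accordingly (this is where the $\log\ell$ in the numerator of the bound comes from). Second, the assertion that $\lVert D'\rVert_{3}^{3}=O(1/\ell'^{2})$ ``in both the yes and no instances'' is not delivered by the bare identity-to-uniformity reduction: an element $i$ with $D(i)\gg D^{*}(i)$ yields sub-elements of large individual mass, so in the far case one only has $\lVert D'\rVert_{3}^{3}\le\lVert D'\rVert_{2}^{3}$ in general. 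The variance analysis then has to be split into the regime where $\lVert D'\rVert_{2}^{2}$ is within a constant of $1/\ell'$ (so the $\ell_{3}$ term is $O(\ell'^{-3/2})$ and Chebyshev goes through) versus the regime where $\lVert D'\rVert_{2}^{2}$ is already a large constant times $1/\ell'$ (so $E[X]$ overshoots the threshold by enough that a one-sided tail bound suffices). Without this case split, the cross-term $sm(s{+}m)\lVert D'\rVert_{3}^{3}$ in your variance bound can swamp the signal when $m$ is near the lower end $\log\ell/\epsilon^{2}$ of the allowed range.
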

Similarly, for closeness testing, we have the following tester in one-pass streaming settings,
\begin{theorem} [Streaming closeness tester \cite{DGKR19}]\label{stream closeness}
    Let $D_1$ and $D_2$ be two unknown distributions over $[\ell]$. There exists a single pass streaming closeness tester that takes $O(\ell \sqrt{\log{\ell}}/\sqrt{m}\epsilon^2)$ samples from \samp oracle for $\log{\ell}\leq m\leq \tilde{\Theta}(min(\ell,\ell^{2/3}/\epsilon^{4/3}))$ and does the following: if $D_1=D_2$, with probability at least $2/3$, the algorithm outputs \accept if $d_{TV}(D_1, D_2)>\epsilon$, it outputs \reject with probability at least $2/3$.
\end{theorem}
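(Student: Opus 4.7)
The plan is to adapt the single-sample bipartite collision tester of \cite{DGKR19} into a two-sample closeness tester, with an intermediate flattening step to boost performance from $\ell_2$ to $\ell_1$ distance. Given memory budget $m$, reserve two pools $R_1, R_2 \subseteq [\ell]$ of size $\Theta(m)$, filled with samples from $D_1$ and $D_2$ respectively (using $O(m\log\ell)$ bits). Then, for each of the remaining $N = \Theta(\ell \sqrt{\log \ell}/(\sqrt{m}\,\epsilon^2))$ samples from each distribution (the ``streamed'' sets $S_1, S_2$), incrementally update four cross-collision counters. Combine these into the unbiased estimator
$$Z \;=\; \coll(R_1, S_1) + \coll(R_2, S_2) - \coll(R_1, S_2) - \coll(R_2, S_1),$$
whose expectation is $mN\|D_1 - D_2\|_2^2$. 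Accept iff $Z$ lies below an appropriate threshold.

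To translate this $\ell_2$-based estimator into an $\ell_1$-closeness tester with near-optimal sample complexity, first apply the Diakonikolas--Kane flattening procedure: draw a small initial block of samples and use them to randomly refine the domain into buckets under which $D_1 + D_2$ has small $\ell_2$ norm. The flattening can be maintained online by recording only the bucket index of each sample, which fits within the memory budget. A standard moment calculation then controls $\mathbb{E}[Z]^2/\mathrm{Var}(Z)$: the dominant variance terms are the single-collision term $mN\|D_1+D_2\|_2^2$ and a quadratic term $m^2 N \|D_1+D_2\|_3^3$ arising from two streamed samples colliding with the same reservoir element. For the flattened distributions both terms are controlled, and the chosen $N$ makes the signal dominate, so Chebyshev's inequality yields constant success probability. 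Amplifying to $2/3$ via a median of $O(\log \ell)$ independent copies together with a union bound over the flattening's randomness accounts for the extra $\sqrt{\log \ell}$ factor.

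The admissible regime $\log \ell \le m \le \tilde{\Theta}(\min(\ell, \ell^{2/3}/\epsilon^{4/3}))$ arises from two balancing constraints: $m \ge \log \ell$ ensures the reservoir is large enough to concentrate, while the upper bound $\ell^{2/3}/\epsilon^{4/3}$ coincides with the sample complexity of the optimal non-streaming closeness tester, beyond which the streaming blow-up buys nothing. The overall memory footprint of $O(m)$ bits follows from encoding each reservoir slot in $O(\log \ell)$ bits and noting that the four counters occupy only $O(\log(mN))$ bits each.

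The main obstacle I expect is the variance bound in the bipartite closeness setting. Unlike the single-distribution case of \cite{DGKR19}, here four jointly dependent counters must be analyzed, and cross-distribution terms have to cancel carefully so that $Z$ concentrates at $0$ when $D_1 = D_2$. A delicate accounting of configurations --- one stored / one streamed versus two stored / two streamed, together with which distribution each element was drawn from --- is required to show that $\mathrm{Var}(Z)$ does not exceed the signal under the flattening. Combining this with the memory-aware flattening analysis is essentially a non-trivial merger of the \cite{DGKR19} bipartite variance calculation and the Chan--Diakonikolas--Valiant--Valiant-style collision-based closeness testing.
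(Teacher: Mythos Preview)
This theorem is not proved in the paper at all: it is quoted verbatim from \cite{DGKR19} as a black-box tool, introduced with ``Similarly, for closeness testing, we have the following tester in one-pass streaming settings,'' and then immediately applied to the reduced distributions arising from the oblivious decomposition. There is therefore no ``paper's own proof'' to compare your proposal against.

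Your sketch is a plausible outline of how the result is actually obtained in \cite{DGKR19} (bipartite-collision statistic with four cross-counters, Diakonikolas--Kane flattening to reduce the $\ell_2$ mass, Chebyshev on the resulting estimator), and the obstacles you flag are the right ones. But for the purposes of the present paper no such argument is needed; the correct ``proof'' here is simply a citation.
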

For the sake of being self-contained, we describe the identity tester for monotone below,
%\newpage

\begin{algorithm}
     \caption{Testing identity monotone Streaming}
\SetKwInOut{Input}{Input}
        \SetKwInOut{Output}{Output}
\Input{Sample access to an unknown monotone distribution $D$, explicitly given monotone distribution $D^*$, $0<\epsilon\leq 1$, memory requirement $\log{\log{(n\epsilon)}}/\epsilon^2\leq m\leq (\log{n\epsilon}/\epsilon)^{9/10}$}
\Output{\accept if $D=D^*$, \reject if $d_{TV}(D,D^*)\geq 3\epsilon$}
Let $\mathcal{I}=\{I_1,...,I_{\ell}\}$ be the oblivious partitions of $D$, $D^*$ and $(D^r)^{\mathcal{I}}$, $(D^{*r})^{\mathcal{I}}$ be the reduced distributions over $[\ell]$\\
Sample $O(\log{n\epsilon}\log{\log{(n\epsilon)}}/m\epsilon^5)$ points from $(D^r)^{\mathcal{I}}$\\
Run Streaming identity tester according to Theorem \ref{stream identity} for $(D^r)^{\mathcal{I}}$ and $(D^{*r})^{\mathcal{I}}$ \label{str}\\
\eIf{Streaming identity tester accepts}
{\accept}
{\reject}
\end{algorithm}
We now build the correctness of the algorithm. The algorithm needs sample access from $(D^r)^{\mathcal{I}}$ which can be obtained by sampling a point $x$ according to $D$ and then returning $I_j$, where $x\in I_j$. By replacing $\ell=O(\frac{\log{n\epsilon}}{\epsilon})$ in Theorem \ref{stream identity}, we get the sample complexity to be $O(\log{n\epsilon}\log{(\log{(n\epsilon)}}/m\epsilon^5)$. If $D=D^*$, the streaming identity tester outputs \accept. If $d_{TV}(D,D^*)\geq 3\epsilon$, by Lemma \ref{reduced}, $d_{TV}((D^r)^{\mathcal{I}},(D^{*r})^{\mathcal{I}})\geq \epsilon$ and the streaming identity tester outputs \reject. Hence, the algorithm is indeed correct.

The structure of the closeness testing algorithm in streaming settings for monotone distributions is similar to the identity tester except for the fact that in Line \ref{str}, instead of streaming identity tester we have to run streaming closeness tester according to Theorem \ref{stream closeness}. The samples from two unknown reduced distributions $(D_1^r)^{\mathcal{I}}$ and $(D_1^r)^{\mathcal{I}}$ can be obtained by drawing samples from $D_1$ and $D_2$ respectively and transforming them into the samples of the reduced distributions. By replacing $\ell=O(\frac{\log{(n\epsilon)}}{\epsilon})$ in Theorem \ref{stream closeness}, we get the sample complexity to be $O(\log{(n\epsilon)} \sqrt{\log{\log{(n\epsilon)}}}/\sqrt{m}\epsilon^3)$, where $\log{\log{(n\epsilon)}}\leq m\leq \tilde{\Theta}(min(\log{(n\epsilon)}/\epsilon,\log^{2/3}{(n\epsilon)}/\epsilon^{2}))$.

\subsection{Testing identity in the streaming model using \pcond}

In this section, we revisit the identity testing problem using \pcond queries: given sample access and \pcond query access to an unknown distribution $D$ we have to test whether $D$ is identical to a fully specified distribution $D^*$ or they are $\epsilon$ far from each other. Canonne et al (\cite{CRS15}) address the problem and propose a \pcond query-based identity tester. In their algorithm, the domain of $D^*$ is divided into a set of "buckets" where the points are having almost the same weights. The algorithm samples $\tilde{O}(\log^2{n}/poly(\epsilon))$ points from $D$ and estimates the weight of each bucket. They prove if $D$ and $D^*$ are far then there exists at least one bucket where the weight of $D^*$ and weight of $\tilde{D}$ will differ. If not, then the algorithm runs a process called \compare to estimate the ratio of the weight of each pair of points $(y,z)$ where $y$ is taken from a set of samples drawn from $D^*$ and $z$ is taken from a set of samples according to $D$. The following lemma is used to compare the weights of two points.
%\newpage
\begin{lemma}[\cite{CRS15}]
Given as input two disjoint subsets of points $X, Y$ together with parameters $\eta \in (0,1], K\geq 1 $ and $\delta\in (0,\frac{1}{2}] $ as well as \cond query access to a distribution $D$, there exists a procedure \compare which estimates the ratio of the weights of two sets and either outputs a value $\rho >0$ or outputs High or Low and satisfies the following:
\begin{itemize} 
\item  If $D(X)/K\leq D(Y)\leq K\cdot D(X) $ then with probability at least $1-\delta$ the procedure outputs a value $\rho \in [1-\eta,1+\eta]D(Y)/D(X)$;
\item If $D(Y)>K\cdot D(X) $ then with probability at least $1-\delta$ the procedure outputs either High or a value $\rho \in [1-\eta,1+\eta]D(Y)/D(X)$;
\item If $D(Y)<D(X)/K $ then with probability at least $1-\delta$ the procedure outputs either Low or a value $\rho \in [1-\eta,1+\eta]D(Y)/D(X)$.
\end{itemize} The procedure performs $O(\frac{K\log{1/\delta}}{\eta^2} )$ conditional queries on the set $X\cup Y$.
\label{compare}
\end{lemma}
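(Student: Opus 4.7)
The plan is to simulate the conditional oracle on the set $X\cup Y$ itself: draw $s=\Theta(K\log(1/\delta)/\eta^2)$ samples from $D$ conditioned on $X\cup Y$, and record the empirical counts $n_X$ and $n_Y$ landing in $X$ and $Y$ respectively. Setting $p=D(X)/(D(X)+D(Y))$ and $q=D(Y)/(D(X)+D(Y))$, each sample is in $X$ with probability $p$ and in $Y$ with probability $q=1-p$, and the quantity we want to approximate is $q/p=D(Y)/D(X)$. I would then apply a threshold rule: pick a constant $c\in(0,1)$ and output \textsf{Low} if $n_Y < cs/(K+1)$, output \textsf{High} if $n_X < cs/(K+1)$, and otherwise output $\rho := n_Y/n_X$.

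Correctness then follows by a case analysis driven by multiplicative Chernoff bounds on $n_X$ and $n_Y$. In the central case $D(X)/K\le D(Y)\le K\,D(X)$, both $p$ and $q$ lie in $[1/(K+1),K/(K+1)]$, so choosing $s=\Theta(K\log(1/\delta)/\eta^2)$ ensures, by a union bound over two Chernoff estimates, that with probability at least $1-\delta$ we have $n_X/s=(1\pm\eta/3)p$ and $n_Y/s=(1\pm\eta/3)q$; in particular neither count falls below the threshold, so the procedure returns $\rho=n_Y/n_X\in[1-\eta,1+\eta]\cdot D(Y)/D(X)$ after absorbing the ratio of the two $(1\pm\eta/3)$ factors into $(1\pm\eta)$. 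When $D(Y)>K\cdot D(X)$, we have $p<1/(K+1)$, so $\mathbb{E}[n_X]<s/(K+1)$; by Chernoff, with probability at least $1-\delta$ either $n_X$ is below the threshold (and the algorithm correctly returns \textsf{High}) or $n_X$ is still concentrated tightly enough around its expectation that $n_Y/n_X$ is within the claimed $(1\pm\eta)$ multiplicative window of $q/p$. The case $D(Y)<D(X)/K$ is symmetric and yields \textsf{Low} or a valid $\rho$. The query complexity is exactly the $s$ conditional samples drawn, namely $O(K\log(1/\delta)/\eta^2)$.

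The main obstacle is calibrating the threshold constant $c$ so that all three guarantees hold simultaneously with a single sample budget. If $c$ is taken too aggressively, then in the central regime Chernoff no longer guarantees $n_X,n_Y$ lie above the threshold and the algorithm may spuriously declare \textsf{High}/\textsf{Low}; if $c$ is too conservative, then in the $D(Y)>K\,D(X)$ regime the procedure may fail to output \textsf{High} while also failing to produce a concentrated ratio, because $n_X$ can be so small that the denominator of $\rho$ is unreliable. Resolving this requires choosing the threshold strictly inside the Chernoff confidence interval around $sp$ (e.g.\ $c=1/2$) and then verifying that in the extreme regimes the event ``$n_X$ above threshold'' forces $n_X$ to be within a constant factor of $sp$ (even though $sp$ itself may be small), which in turn forces $n_Y/n_X$ to approximate $q/p$ up to $(1\pm\eta)$. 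Once this balance is established, the three bullet points of the lemma follow from a union bound over a constant number of Chernoff events, each holding with probability $1-\delta/O(1)$, and adjusting the hidden constant in $s$ accordingly.
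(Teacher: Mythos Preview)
The paper does not supply its own proof of this lemma; it is quoted from \cite{CRS15} and used as a black box in the identity tester. Your construction---draw $s=\Theta(K\log(1/\delta)/\eta^2)$ samples from $D$ conditioned on $X\cup Y$, count hits in each set, and threshold on the smaller count---is precisely the procedure in \cite{CRS15}, so at the level of approach there is nothing to contrast.

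Your handling of the extreme regimes has a gap, however. You assert that when $D(Y)>K\cdot D(X)$ and the threshold is not triggered, ``the event `$n_X$ above threshold' forces $n_X$ to be within a constant factor of $sp$.'' This is false: if $sp$ is tiny (say $sp\approx 1$) while the threshold is $T=\Theta(\log(1/\delta)/\eta^2)$, then conditioned on $n_X\ge T$ the count $n_X$ sits near $T$, not near $sp$, and $n_Y/n_X$ is nowhere near $q/p$. The correct argument splits on the magnitude of $sp$. If $sp\ge T/2$, then $\mathbb{E}[n_X]=\Omega(\log(1/\delta)/\eta^2)$ and multiplicative Chernoff applies to both counts, so whether or not the threshold fires the ratio $n_Y/n_X$ is a $(1\pm\eta)$-approximation of $q/p$. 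If $sp<T/2$, then a Bernstein (or upper-tail Chernoff) bound gives $\Pr[n_X\ge T]\le\exp(-\Omega(T))\le\delta$, so with the required probability the procedure outputs High and the accuracy of $\rho$ is moot. With this two-case split in place of your conditioning claim, the rest of your outline goes through.
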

However, for storing $\tilde{O}(\log^2{n}/poly(\epsilon))$ samples for estimating the weights of the buckets, an $\tilde{O}(\log^3{n}/poly(\epsilon))$ space is required considering each sampled point takes $\log{n}$ bits of memory. As we are dealing with a memory constraint of $m$ bits, for $m<O(\log^3{n})$, implementing the algorithm is not memory efficient. We use the main idea of Canonne et al (\cite{CRS15}), but instead of storing all samples, we use the \countmin sketch data structure for storing the frequencies of the elements of the stream. Later, the frequencies are used to estimate the weight of each bucket. By choosing the parameters of the \countmin sketch suitably, the total space required for our algorithm is at most $O(m/\epsilon)$ bits. The main concept of our algorithm lies in the theorem below,
\begin{theorem}[Testing Identity \cite{CRS15}]
    \label{thm:condidentity}
  There exists an identity tester that uses an $\Tilde{O}(\log^4{n}/\epsilon^4)$ \pcond queries and does the following: for every pair of distributions $D, D^*$ over $[n]$, where $D^*$ is fully specified, the algorithm outputs \accept with probability at least $2/3$ if $D=D^*$ and outputs \reject with probability at least $2/3$ if $d_{TV}(D, D^*)\geq \epsilon$.
\end{theorem}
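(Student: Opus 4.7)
The plan is to follow the bucketing strategy of Canonne et al. Using the explicit description of $D^*$, partition $[n]$ into $O(\log n / \epsilon)$ buckets $B_0, B_1, \ldots$ where $B_j = \{i : D^*(i) \in [(1+\epsilon)^{-j-1}, (1+\epsilon)^{-j}]\}$, together with a ``light'' bucket collecting all $i$ with $D^*(i) \le \epsilon/n$, whose total $D^*$-mass is at most $\epsilon$. By construction, inside each non-light bucket the values of $D^*$ lie within a multiplicative $(1+\epsilon)$ factor of each other, so $D^*$ restricted to any bucket is essentially uniform.

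The tester then has two stages. In the \emph{bucket-weight stage}, draw $\tilde O(\log^2 n / \epsilon^2)$ samples from $D$ (via \textsf{SAMP}), classify each one into its bucket, and form empirical estimates $\tilde D(B_j)$. Reject if some $|\tilde D(B_j) - D^*(B_j)|$ exceeds $\epsilon/\Theta(\log n / \epsilon)$. In the \emph{within-bucket stage}, draw $\tilde O(\log n / \epsilon^2)$ ``reference'' points $y$ from $D^*$ (possible since $D^*$ is explicit), and for each such $y$ pick an element $z$ from the \textsf{SAMP} pool lying in the same bucket as $y$; invoke the \emph{Compare} procedure of Lemma \ref{compare} on $(\{y\}, \{z\})$ with $K = O(1)$, $\eta = \Theta(\epsilon)$, and $\delta = 1/\mathrm{poly}(\log n/\epsilon)$, using \pcond queries on the pair. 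Reject if \emph{Compare} reports that $D(z)/D(y)$ deviates from what the buckets of $D^*$ predict.

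For completeness, if $D=D^*$ then both stages pass by a union bound over buckets and over \emph{Compare} invocations. For soundness, suppose $d_{TV}(D,D^*) \ge \epsilon$. Either the bucket marginals already disagree by $\Omega(\epsilon)$ in total, in which case the bucket-weight stage detects it via a Chernoff-type estimate for each bucket; or the bucket marginals approximately agree. In the latter case, since $D^*$ is nearly flat on each bucket, the remaining discrepancy must come from mass being redistributed \emph{within} some bucket $B_j$, and a standard argument shows that a $\Omega(\epsilon/\log n)$ fraction of the $D^*$-mass sits on points $y$ whose ratio $D(y)/D^*(y)$ is bounded away from $D(B_j)/D^*(B_j)$ by a factor $1 \pm \Omega(\epsilon)$. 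The within-bucket stage hits such a $y$ with constant probability, and \emph{Compare} certifies the deviation.

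The main obstacle is the within-bucket soundness analysis: turning an $\ell_1$ gap between $D$ and $D^*$ on a bucket $B_j$ (where $D^*$ is only nearly uniform, not exactly uniform) into a lower bound on the $D^*$-probability of witnessing a large ratio deviation, while keeping the parameters tight enough to yield $\tilde O(\log^4 n/\epsilon^4)$ overall. Counting queries, each \emph{Compare} costs $\tilde O(1/\epsilon^2)$ \pcond queries, there are $\tilde O(\log n/\epsilon^2)$ comparisons, and the bucket-weight stage costs $\tilde O(\log^2 n/\epsilon^2)$ \textsf{SAMP} queries, for a total of $\tilde O(\log^4 n/\epsilon^4)$ after charging the $O(\log n/\epsilon)$ buckets and the confidence amplification needed for the union bounds.
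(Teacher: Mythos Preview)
Your high-level structure matches the CRS15 approach that the paper cites and then adapts: bucket $[n]$ by $D^*$-value, check bucket weights with \textsf{SAMP}, and use \emph{Compare} on pairs $(y,z)$ with $y\sim D^*$, $z\sim D$ to catch within-bucket discrepancies. However, your parameter choices diverge from the paper's description. The paper (following CRS15) takes factor-$2$ buckets, so $\ell=O(\log n)$, and runs \emph{Compare} with accuracy $\eta/4\ell=\Theta(\epsilon/\log n)$, not $\Theta(\epsilon)$. The soundness argument there defines $H_j=\{x\in B_j:D(x)>D^*(x)+\eta/(\ell|B_j|)\}$ and $L_j$ analogously, shows that sampling $s=O(\ell/\epsilon)$ points from $D^*$ hits some $L_j$ and sampling $s$ points from $D$ hits some $H_j$, and then compares \emph{all} $s^2$ such pairs with $D^*(y_k)/D^*(z_l)\in[1/2,2]$. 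Because the additive gap defining $H_j,L_j$ is only $\eta/(\ell|B_j|)$, the relative deviation to detect is $\Theta(\epsilon/\log n)$, which is why each \emph{Compare} call costs $\tilde O(\log^2 n/\epsilon^2)$ and the $s^2=\tilde O(\log^2 n/\epsilon^2)$ calls give the stated $\tilde O(\log^4 n/\epsilon^4)$.

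Your $(1+\epsilon)$-bucketing with \emph{Compare} precision $\Theta(\epsilon)$ is a genuinely different trade (more buckets, cheaper comparisons), and might in fact yield a smaller bound if it works---but your soundness sketch is where the gap is. The claim that an $\Omega(\epsilon/\log n)$ fraction of $D^*$-mass sits on points $y$ with $D(y)/D^*(y)$ off by a $1\pm\Omega(\epsilon)$ factor does not follow from a ``standard argument'' once you only assume bucket weights approximately match; the deviation could be spread as $\Theta(\epsilon/\log n)$ relative perturbations over a constant fraction of the mass, which your $\Theta(\epsilon)$-precision \emph{Compare} would not flag. You also pair each $y$ with a single $z$ in the same bucket, whereas the paper's all-pairs comparison over $s\times s$ is what guarantees simultaneously hitting a low point from $D^*$ and a high point from $D$. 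Finally, your query accounting is inconsistent: $\tilde O(\log n/\epsilon^2)$ comparisons at $\tilde O(1/\epsilon^2)$ each is $\tilde O(\log n/\epsilon^4)$, and the jump to $\tilde O(\log^4 n/\epsilon^4)$ ``after charging buckets and amplification'' is not justified by anything in the sketch.
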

Before moving into the algorithm, we define the \emph{bucketization} technique according to (\cite{CRS15}). For an explicit distribution $D^*$, the domain is divided into $\ell$ buckets $\mathcal{B}=\{B_1,...,B_{\ell}\}$, where each bucket contains a set of points which satisfies $B_j=\{i\in [n]:2^{j-1}\eta/n\leq D^*(i)\leq 2^j\eta/n \}$ and $B_0=\{i\in [n]: D^*(i)<\eta/n\}$, where $\eta=\epsilon/c$ for $c$ to be a constant. The number of buckets $\ell=O(\lceil \log{n/\eta}+1\rceil +1)$.

We are now ready to present our \pcond query-based one-pass streaming algorithm for identity testing. Our algorithm and the correctness borrow from (\cite{CRS15}) with the extra use of \countmin sketches to improve the trade-off between the sample complexity and the space used.

\begin{theorem}
    The algorithm {\sc pcond identity testing streaming} uses an \\ $O(\log^2{n}\log{\log{n}}/m\epsilon^2)$ length stream of standard access query points and an $\tilde{O}(\log^4{n}/\epsilon^4)$ length of conditional stream and does the following,
    If $D=D^*$, it returns \accept with probability at least $2/3$, and if $d_{TV}(D, D^*)\geq \epsilon$, it returns \reject with probability at least $2/3$. The memory requirement for the algorithm is $O(\frac{m}{\epsilon})$ where $\frac{\log{n}\sqrt{\log{\log{n}}}}{{\epsilon}}\leq m\leq \frac{\log^2{n}}{\epsilon}$.
\end{theorem}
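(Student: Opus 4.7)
The plan is to follow the structure of the Canonne--Ron--Servedio identity tester almost verbatim, inheriting its correctness guarantee (Theorem \ref{thm:condidentity}), and only modify the sampling/bookkeeping layer so that samples are never stored explicitly but are instead summarized through a \countmin sketch. Concretely, the algorithm first computes the bucketization $\mathcal{B}=\{B_0,B_1,\ldots,B_\ell\}$ of $D^*$ with $\eta = \epsilon/c$ and $\ell = O(\log(n/\eta))$; this is deterministic and uses only $O(\ell\log n)$ bits because $D^*$ is explicit. Then it reads a stream of $S = O(\log^{2} n \log\log n / m\epsilon^{2})$ standard-access samples from $D$, maintaining a \countmin sketch with parameters $\epsilon_{\rm CM} = \Theta(\epsilon/\ell)$ and $\delta = 1/\mathrm{poly}(n)$. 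For each bucket $B_j$, the estimated weight $\widetilde D(B_j)$ is obtained by aggregating the \countmin estimates of the (few) distinct elements that landed in $B_j$ during the sampling phase; these distinct elements are identified on the fly by also maintaining their list, which has size bounded by $S$ but whose heavy-hitters are tracked via the same sketch.

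Next I would carry out the \pcond phase exactly as in \cite{CRS15}: draw $O(\log n/\epsilon)$ reference points $y$ from $D^*$, draw $O(\log n/\epsilon)$ further points $z$ from $D$ through the stream, and for each valid pair $(y,z)$ run the \compare procedure of Lemma \ref{compare} with $K,\eta,\delta$ chosen as in \cite{CRS15}; each invocation costs $\tilde O(\log^{2} n/\epsilon^{2})$ \pcond queries which, summed over pairs, gives the claimed $\tilde O(\log^{4} n/\epsilon^{4})$ conditional stream length. The tester rejects if (i) some estimated bucket weight $\widetilde D(B_j)$ deviates from $D^*(B_j)$ by more than $\Theta(\epsilon/\ell)$, or (ii) some \compare call returns a ratio that is inconsistent with the known ratio $D^*(y)/D^*(z)$ by more than a $(1\pm\eta)$ factor, using the thresholds from \cite{CRS15}.

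Correctness then reduces to a single additional ingredient on top of Theorem \ref{thm:condidentity}: by Lemma \ref{cmin}, with probability $1-\delta$ every queried frequency satisfies $f_{x}\le \widetilde f_{x}\le f_{x} + \epsilon_{\rm CM} S$, so summing over the distinct elements falling in a bucket yields
\[
D(B_j) \cdot S \;\le\; \widetilde f(B_j) \;\le\; D(B_j)\cdot S + \epsilon_{\rm CM}\, S\cdot |B_j\cap \mathrm{supp}(\text{sample})|,
\]
and because $|B_j\cap \mathrm{supp}(\text{sample})|\le S$ while $\epsilon_{\rm CM}=\Theta(\epsilon/\ell)$, the per-bucket additive error is $O(\epsilon/\ell)$, which is exactly the slack that the bucket-weight test in \cite{CRS15} already tolerates. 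Thus the completeness and soundness of the \pcond tester carry over unchanged, giving success probability at least $2/3$ in both cases after a union bound over $\ell$ buckets and $O(\log^{2} n/\epsilon^{2})$ pairs.

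The memory accounting is where I expect the main work to lie. The \countmin sketch has width $w = e/\epsilon_{\rm CM} = O(\ell/\epsilon)$ and depth $d = O(\log n)$, so its raw footprint is $wd = O(\ell \log n/\epsilon)$ words, i.e.\ $O(\ell \log^{2} n/\epsilon)$ bits; balancing this against the $O(m/\epsilon)$ budget pins down the admissible regime $\log n \sqrt{\log\log n}/\epsilon \le m \le \log^{2} n/\epsilon$ and forces the sample-complexity factor $\log\log n / (m\epsilon^{2})$ to appear. The \pcond portion stores only $O(\log n/\epsilon)$ reference points at a time plus a constant number of counters for each \compare call, which is dominated by the sketch. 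The delicate step will be verifying that choosing $\epsilon_{\rm CM}$ and $S$ simultaneously satisfies (a) the bucket-weight approximation required by \cite{CRS15}, (b) the word-count bound $wd\log n = O(m/\epsilon)$, and (c) the lower bound on $m$; once that balance is checked, the theorem follows by plugging the \pcond complexity from Theorem \ref{thm:condidentity} directly into the analysis.
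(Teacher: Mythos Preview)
Your high-level plan is right and matches the paper's: inherit the Canonne--Ron--Servedio tester, replace the explicit storage of the $S$ standard-access samples by a \countmin sketch for the bucket-weight phase, and run the \compare phase online pair-by-pair with only two counters. Where your proposal goes wrong is in the concrete parameterization of the sketch and in the per-bucket error bookkeeping, and these are exactly the places where the $m$-dependent trade-off lives.

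First, with $\epsilon_{\mathrm{CM}}=\Theta(\epsilon/\ell)$ and depth $d=O(\log n)$ your sketch has footprint $wd=O(\ell\log n/\epsilon)$ words, i.e.\ $O(\log^{3}n/\epsilon)$ bits, a \emph{fixed} quantity that has nothing to do with $m$. There is no ``balancing'' step that can turn this into $O(m/\epsilon)$ for the whole range $\log n\sqrt{\log\log n}/\epsilon\le m\le \log^{2}n/\epsilon$; you simply never get a space/samples trade-off. The paper instead takes $\epsilon_{\mathrm{CM}}=\epsilon/m$ and constant depth $d=\log 100$, which gives width $w=em/\epsilon$ and hence the $O(m/\epsilon)$ space directly. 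Second, your per-bucket error claim drops a factor of $S$: from $\tilde f_{x}\le f_{x}+\epsilon_{\mathrm{CM}}S$ summed over the (up to $S$) distinct sampled elements in $B_j$ you get $\tilde f_{B_j}\le f_{B_j}+\epsilon_{\mathrm{CM}}S^{2}$, so the additive error on $\tilde f_{B_j}/S$ is $\epsilon_{\mathrm{CM}}S$, not $\epsilon_{\mathrm{CM}}$. With your choice $\epsilon_{\mathrm{CM}}=\epsilon/\ell$ and $S=O(\log^{2}n\log\log n/(m\epsilon^{2}))$ this is $\Theta(\log n\log\log n/(m\epsilon))$, which is not $O(\epsilon/\ell)$ in the stated range of $m$. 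The paper does not aim for $\epsilon/\ell$ accuracy at all: with $\epsilon_{\mathrm{CM}}=\epsilon/m$ the sketch error on the weight is $\epsilon S/m=O(\log^{2}n\log\log n/(m^{2}\epsilon))$, while the Chernoff sampling error is $\sqrt{m}\,\epsilon/\log n$; the accept/reject threshold in Line~\ref{line5} is set to exactly these two terms, and it is this coarser (but $m$-dependent) threshold that the remainder of the CRS analysis is shown to tolerate. Finally, you should not ``maintain a list of distinct sampled elements'' (that alone would cost $S\log n$ bits and kill the bound); the paper never stores samples, only the sketch, and reads off $\tilde f_{B_j}$ by summing the sketch estimates over the elements of $B_j$, which are known from the explicit description of $D^{*}$.
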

\begin{proof}
\textbf{Completeness}: Suppose $D=D^*$. We prove that the algorithm does not return \reject in Line \ref{lline}. Let $\tilde{D}(B_j)$ be the estimated weight of a bucket $B_j$ where  $\tilde{D}(B_j)=\frac{f_{B_j}}{S}$ for $S=O(\log^2{n}\log{\log{n}}/m\epsilon^2)$. An additive Chernoff bound [followed by a union bound over the buckets] shows that with high probability, $\forall B_j,|D(B_j)-\tilde{D}(B_j)|\leq \frac{\sqrt{m}\epsilon}{\log{n}}$.
    Using Lemma \ref{cmin}, with probability at least $99/100$, for every element $x_i$ in the stream, $f_{x_i}\leq \tilde{f}_{x_i}\leq  f_{x_i}+\frac{\epsilon S}{m}$. Summing over all the elements in a bucket $B_j$, we get $\tilde{f}_{B_j}-\frac{\epsilon}{m}S^2\leq f_{B_j}\leq  \tilde{f}_{B_j} $. Substituting $\tilde{D}(B_j)=\frac{f_{B_j}}{S}$, we can see that
    %\begin{align*}
         $\frac{\tilde{f}_{B_j}}{S}-\frac{\epsilon S}{m}\leq \tilde{D}(B_j)\leq  \frac{\tilde{f}_{B_j}}{S}$.
    %\end{align*}
    As $D=D^*$, $\tilde{D}(B_j)$ is a good estimate of $D^*(B_j)$. Using $|D^*(B_j)-\tilde{D}(B_j)|\leq \frac{\sqrt{m}\epsilon}{\log{n}}$, we get 
%\begin{align*}
     $\frac{\tilde{f}_{B_j}}{S}-\frac{\epsilon S}{m}-\frac{\sqrt{m}\epsilon}{\log{n}}\leq D^*(B_j)\leq \frac{\tilde{f}_{B_j}}{S} + \frac{\sqrt{m}\epsilon}{\log{n}}$.
%\end{align*}
This can be written as $ D^*(B_j)-\frac{\sqrt{m}\epsilon}{\log{n}} \leq \frac{\tilde{f}_{B_j}}{S}\leq D^*(B_j)+\frac{\sqrt{m}\epsilon}{\log{n}}+\frac{\log^2{n}\log{\log{n}}}{\epsilon m^2} $ by replacing $S=O(\log^2{n}\log{\log{n}}/m\epsilon^2)$. Hence, the algorithm will not output \reject with high probability. As $D=D^*$, for all pairs $(y_k,z_l)$ such that $\frac{D^*(y_k)}{D^*(z_l)}\in [1/2,2] $, it follows from Lemma \ref{compare} that the estimated ratio of weights of each pair $(y_k,z_l)$ is less than $(1-\eta/2\ell)\frac{D^*(y_k)}{D^*(z_l)}$ [for $\eta=\epsilon/6$] with probability at most $1/10s^2$. A union bound over all $O(s^2)$ pairs proves that with a probability of at least $9/10$ the algorithm outputs \accept.
%\newpage

\begin{algorithm}[H]
     \caption{\pcond \ Identity Testing Streaming}\label{alg: pcondidentity}
\SetKwInOut{Input}{Input}
        \SetKwInOut{Output}{Output}
\Input{\samp and \pcond access to $D$, an explicit distribution $D^*$, parameters $0<\epsilon\leq 1$, $\eta=\epsilon/6$, $\ell$ buckets of $D^*$, space requirement $O(m)$ bits $\frac{\log{n}\sqrt{\log{\log{n}}}}{{\epsilon}}\leq m\leq \frac{\log^2{n}}{\epsilon}$}
\Output{\accept if $D=D^*$, \reject if $d_{TV}(D,D^*)\geq \epsilon$} 
Sample $S=O(\frac{\log^2{n}\log{\log{n}}}{m\epsilon^2})$ points $\{x_1,...,x_S\}$ from \samp \\
\For{$(i=1$ to $S)$}
{Estimate the frequency of $x_i$ using \countmin sketch $(\frac{\epsilon}{m},\frac{1}{100})$ such that $f_{x_i}\leq \tilde{f}_{x_i}\leq f_{x_i}+\frac{\epsilon}{m}S$}
Define the frequency of each bucket $B_j$ to be $f_{B_j}=\sum_{x_i\in B_j}f_{x_i}$, such that $f_{B_j}\leq \tilde{f}_{B_j}\leq f_{B_j}+\frac{\epsilon}{m}S^2$\\
\If{$\frac{\tilde{f}_{B_j}}{S}<D^*(B_j)-\frac{\sqrt{m}\epsilon}{\log{n}}$ or $\frac{\tilde{f}_{B_j}}{S}>D^*(B_j)+\frac{\sqrt{m}\epsilon}{\log{n}} +\frac{\log^2{n}\log{\log{n}}}{\epsilon m^2}$\label{line5}} 
{\reject and Exit \label{lline}}
Select $s=O(\ell/\epsilon)$ points $\{y_1,...,y_s\}$ from $D^*$\label{line7}\\
\For{ each $y_k\in s$}
{Sample $s$ points $\{z_1,...,z_s\}$ from $D$ as a stream\label{line9}\\
\For{each pair of points $(y_k,z_l)$ such that $\frac{D^*(y_k)}{D^*(z_l)}\in [1/2,2] $}
{Run \compare($y_k,z_l,\eta/4\ell,2,1/10s^2$))\\
\If{Compare returns Low or a value smaller than $(1-\eta/2\ell)\frac{D^*(y_k)}{D^*(z_l)}$ \label{line12}}
{\reject and Exit}
}
}
\accept
\end{algorithm}

\textbf{Soundness :} Let $d_{TV}(D,D^*)\geq \epsilon$. In this case, if one of the estimates of $\tilde{f}_{B_j}$ passes Line \ref{line5}, the algorithm outputs \reject. Let's assume that the estimates are correct with high probability. The rest of the analysis follows from (\cite{CRS15}), we give a brief outline of the proof for making it self-contained. Define high-weight and low-weight buckets in the following way, for $\eta =\epsilon/6$, as follows: 
%\begin{align*}
$    H_j =\{x\in B_j: D(x)>D^*(x)+\eta/\ell |B_j|\}$, and $     L_j =\{x\in B_j: D(x)\leq D^*(x)-\eta/\ell |B_j|\}$.
%\end{align*}
It can be shown that at least one point will occur from the low-weight bucket while sampling $s$ points in Line \ref{line7} and at least one point will come from the high-weight bucket while obtaining $s$ points in Line \ref{line9}. Using the definition of high-weight and low-weight buckets, there exists a pair $(y_k,z_l)$ such that $D(y_k)\leq (1-\eta/2\ell)D^*(y_k)$ and $D(z_k)> (1+\eta/2\ell)D^*(z_k)$. By Lemma \ref{compare}, with probability at least $1-1/10s^2$, \compare will return low or a value at most $(1-\eta/2\ell)\frac{D^*(y_k)}{D^*(z_l)}$ in Line \ref{line12}. Hence the algorithm outputs \reject with high probability.
\end{proof}

We use \countmin sketch with parameters $(\frac{\epsilon}{m},\frac{1}{100})$ in our algorithm. Comparing it with $(\epsilon,\delta)$ \countmin sketch defined in (\cite{CM04}), we set the width of the array to be $w=em/\epsilon$ and depth $d=\log{100}$. So the space required for the algorithm is $w.d$ words which imply $O(\frac{m}{\epsilon})$ bits. For running the \compare procedure, we are not using any extra space for storing samples. This is because for every element in $\{y_1,...,y_s\}$ we are sampling $s$ length stream $\{z_1,...,z_s\}$ and running \compare for each pair of points taken from each stream respectively. This leads to running compare process $s^2$ times. A single run of compare works in the following way in the streaming settings, for a pair $(y_k,z_l)$, sample $O(\log^2{n}/\epsilon^2)$ points from $D$ conditioned on $(y_k,z_l)$ and keep two counters for checking the number of times each of them appeared in the stream. Each round of \compare process requires $O(\log^2{n}/\epsilon^2)$ length of the stream. Hence, the total stream length is $\tilde{O}(\log^4{n}/\epsilon^4)$. 
\section{Testing monotonicity}
Monotonicity testing is a fundamental problem of distribution testing. 
 Given sample access to an unknown distribution $D$ over $[n]$, the task is to check whether $D$ is a monotone (non-increasing) or $\epsilon$ far from monotonicity. The problem was addressed by Batu et al (\cite{BKR04}) where they use the samples according to the standard access model. The algorithm divides the domain into half recursively until the number of collisions over an interval is less. A set of partitions is obtained this way and they are used to construct an empirical distribution by sampling another set of samples. Finally, the algorithm returns accept if the empirical distribution is close to monotone. The approach leads to an $O(\sqrt{n}\log{n}/\epsilon^4)$ SAMP query algorithm for this problem.

\subsection{Testing monotonicity using oblivious decomposition}\label{sec:mon-oblivious}
In this section, we discuss monotonicity testing using oblivious decomposition. Unlike Batu et al, we use the oblivious partitions for the unknown distribution $D$ and count the number of collisions over the intervals where enough sample lies. Our algorithm uses standard access queries to $D$ and proceeds by examining whether the total weight of such intervals is high or low. The insight of the algorithm is to apply oblivious decomposition instead of constructing the partitions recursively. An upper bound of $O(\sqrt {n}\log{n}/\epsilon^4)$ is obtained that matches the query complexity of Batu et al. The high-level idea of the algorithm is as follows: let $D$ be a monotone distribution and $\mathcal{I}=\{I_1,..., I_{\ell}\}$ be the oblivious decomposition for $\ell=O(\log{n}/\epsilon_1)$; where $\epsilon_1=O(\epsilon^2)$. If $(D^f)^{\mathcal{I}}$ is the flattened distribution, by oblivious decomposition, $d_{TV}(D,(D^f)^{\mathcal{I}})\leq \epsilon^2$ which simplifies to $  \sum_{j=1}^{\ell} D(I_j) d_{TV}(D_{I_j},\mathcal{U}_{I_j})\leq \epsilon^2$. We divide the partitions into two types, one, where the conditional distribution over $I_j$ is close to uniformity, and another, where the same is far from uniformity. In the case of monotone distribution, if the conditional distributions for a set of intervals are far from uniformity, the total weight over those intervals can not be too high. We also observe that the collision counts for such intervals are pretty high when enough sample lies in it. So, the problem of testing monotonicity boils down to the task of finding such high collision intervals and estimating their total weight. For the weight estimation, we obviously require an empirical distribution $\tilde{D}$ that can be constructed by sampling $poly(\log{n},1/\epsilon)$ samples. Later, the tester tests whether $\tilde{D}$ is close to monotone to reveal the final decision. When there are enough samples lying inside an interval, the pairwise collision count between them can be used to estimate the collision probability. The following lemma establishes the fact formally,
\begin{lemma} [\cite{BKR04}] \label{basic}
 Let $I\subset [n]$ be an interval of a distribution $D$ over $[n]$, $D_{I}$ be the conditional distribution of $D$ on $I$ and $S_{I}$ be the set of samples lying in interval $I$. Then,
  $$ ||D_{I}||_2^2-\frac{\epsilon^2}{64|I|} \leq \frac{coll(S_{I})}{{|S_{I}| \choose 2}}\leq ||D_{I}||_2^2+\frac{\epsilon^2}{64|I|}$$
with probability at least $1-O(\log^{-3}{n})$ provided that $|S_{I}|\geq O(\sqrt{|I|}\log{\log{n}}/\epsilon^4)$.
\end{lemma}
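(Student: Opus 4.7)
The plan is to show that the normalized collision count $\coll(S_I)/\binom{|S_I|}{2}$ concentrates around $\|D_I\|_2^2$. Once we condition on which samples from $D$ fall into $I$, the elements of $S_I$ are i.i.d.\ draws from the conditional distribution $D_I$. Writing $\coll(S_I) = \sum_{i<j}\mathbf{1}[s_i=s_j]$, linearity of expectation immediately gives $\mathbb{E}[\coll(S_I)] = \binom{|S_I|}{2}\|D_I\|_2^2$, so the proposed estimator is unbiased and the question reduces to a deviation bound.

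The heart of the argument is the variance bound for this U-statistic. Indicators $\mathbf{1}[s_i=s_j]$ and $\mathbf{1}[s_k=s_l]$ are independent when $\{i,j\}\cap\{k,l\}=\emptyset$ and so contribute nothing. Pairs sharing one index satisfy $\mathbb{E}[\mathbf{1}[s_i=s_j]\mathbf{1}[s_i=s_k]] = \Pr[s_i=s_j=s_k] = \|D_I\|_3^3$, and there are $O(|S_I|^3)$ such triples; the $\binom{|S_I|}{2}$ diagonal terms each contribute variance at most $\|D_I\|_2^2$. Combining,
\[
\mathrm{Var}(\coll(S_I)) \leq \binom{|S_I|}{2}\|D_I\|_2^2 + O(|S_I|^3)\|D_I\|_3^3,
\]
and I would further bound $\|D_I\|_3^3 \leq \|D_I\|_\infty\|D_I\|_2^2 \leq \|D_I\|_2^3$ using that any coordinate is at most $\|D_I\|_2$. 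Applying Chebyshev's inequality to $\coll(S_I)/\binom{|S_I|}{2}$ with target additive error $t = \epsilon^2/(64|I|)$ then yields constant failure probability already when $|S_I| = \Theta(\sqrt{|I|}/\epsilon^4)$, in the near-uniform regime $\|D_I\|_2^2 = O(1/|I|)$ where both variance terms are controlled by $O(1/(|I||S_I|^2))$ after normalization.

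To drive the failure probability down from a constant to $O(\log^{-3} n)$ while paying only an extra $\log\log n$ factor in sample size, I would use a median-of-means boost: partition $S_I$ into $k = O(\log\log n)$ disjoint groups, compute the pairwise-collision estimator on each group separately, and output the median of the $k$ estimates. A standard Chernoff argument over the $k$ independent groups then gives failure probability $2^{-\Omega(k)} = O(\log^{-3} n)$, yielding the stated total requirement $|S_I| = O(\sqrt{|I|}\log\log n/\epsilon^4)$. The main obstacle I anticipate is handling the full range of $\|D_I\|_2^2$: when $D_I$ is highly concentrated on a few elements, $\|D_I\|_3^3$ can be as large as a constant and the second variance term degrades, but one can then exploit the fact that the additive tolerance $\epsilon^2/(64|I|)$ is small relative to $\|D_I\|_2^2 \gg 1/|I|$ itself, so a short case split on the magnitude of $\|D_I\|_2^2$ suffices to keep the base Chebyshev bound at constant failure probability before the median boost takes over.
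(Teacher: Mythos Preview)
The paper does not actually prove this lemma: it is quoted verbatim from \cite{BKR04} and used as a black box, so there is no ``paper's own proof'' to compare against here. Your proposal is the standard second-moment argument for the collision estimator and is correct; in particular your variance decomposition into diagonal and one-shared-index terms, the bound $\|D_I\|_3^3\le\|D_I\|_2^3$, and the median-of-$O(\log\log n)$ boost to reach failure probability $O(\log^{-3}n)$ are exactly the ingredients in the original \cite{BKR04,GR00} analysis.

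For context, the paper does carry out an analogous computation for the \emph{bipartite} variant in Lemma~\ref{basic_0}: there the authors do the same expectation/variance calculation over pairs $(i,j)\in S_1\times S_2$, use $\|D_I\|_3\le\|D_I\|_2$ just as you do, and stop at a single Chebyshev application, which is why that lemma only guarantees success probability $2/3$ rather than $1-O(\log^{-3}n)$. So your median-of-means step is the one ingredient that goes beyond what the paper ever writes down explicitly, and it is precisely what is needed to match the stronger probability guarantee stated in Lemma~\ref{basic}.
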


An interval partition of $[n]$ produces two types of intervals, where the conditional distributions are close and far from uniformity respectively. A pairwise collision count between samples is used to detect such intervals. Given enough samples lie in an interval, if the conditional distribution is far from uniformity, the collision count will be high. Similarly, the collision probability is low for the intervals where the conditional distributions are close to uniformity.

\begin{lemma}\label{lemmaa_1}
Let $D$ be an unknown distribution over $[n]$ and $I\subset [n]$ be an interval. Let $S_I$ be the set of samples lying inside $I$ such that $|S_{I}|\geq O(\sqrt{|I|}\log{\log{n}}/\epsilon^4)$, then the following happens
\begin{itemize}
\item If $d_{TV}(D_{I},\mathcal{U}_{I})>\frac{\epsilon}{4}$, then $\frac{coll(S_{I})} {{|S_{I}|\choose 2}}>\frac{1}{|I|}+\frac{\epsilon^2}{64|I|}$
\item If $d_{TV}(D_{I},\mathcal{U}_{I})\leq\frac{\epsilon}{4}$, then, $\frac{coll(S_{I})}{{|S_{I}| \choose 2}}\leq\frac{1+\epsilon^2/64}{|I|}+\frac{\epsilon^2}{16}$
\end{itemize}
\end{lemma}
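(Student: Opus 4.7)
The plan is to derive both bullet points by combining Lemma \ref{observation}, which relates $\|D_I\|_2^2$ to $d_{TV}(D_I,\mathcal{U}_I)$, with Lemma \ref{basic}, which approximates $\|D_I\|_2^2$ by the empirical collision rate $\coll(S_I)/\binom{|S_I|}{2}$ to within $\pm\epsilon^2/(64|I|)$ once $|S_I|$ is sufficiently large. So the entire lemma reduces to turning the $d_{TV}$ hypothesis into a one-sided bound on $\|D_I\|_2^2$ and then invoking Lemma \ref{basic}.

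For the first bullet, I would invoke the contrapositive of the second clause of Lemma \ref{observation} over the support $I$ with the parameter set to $\epsilon/4$, giving
\[
\|D_I\|_2^2 \;>\; \frac{1+\epsilon^2/16}{|I|} \;=\; \frac{1}{|I|} + \frac{\epsilon^2}{16|I|}.
\]
Applying the lower estimate in Lemma \ref{basic} then yields
\[
\frac{\coll(S_I)}{\binom{|S_I|}{2}} \;>\; \frac{1}{|I|} + \frac{\epsilon^2}{16|I|} - \frac{\epsilon^2}{64|I|} \;=\; \frac{1}{|I|} + \frac{3\epsilon^2}{64|I|},
\]
which strictly exceeds the required threshold $1/|I| + \epsilon^2/(64|I|)$, with a factor of three to spare.

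For the second bullet, I would use the orthogonal decomposition $\|D_I\|_2^2 = \|\mathcal{U}_I\|_2^2 + \|D_I - \mathcal{U}_I\|_2^2 = 1/|I| + \|D_I - \mathcal{U}_I\|_2^2$ (the cross term vanishes because both $D_I$ and $\mathcal{U}_I$ have total mass one on $I$). The elementary inequality $\|v\|_2 \leq \|v\|_1$ applied to $v = D_I - \mathcal{U}_I$ then gives $\|D_I - \mathcal{U}_I\|_2^2 \leq (2 d_{TV}(D_I,\mathcal{U}_I))^2 \leq \epsilon^2/4$. Feeding the resulting upper bound $\|D_I\|_2^2 \leq 1/|I| + \epsilon^2/4$ into the upper estimate of Lemma \ref{basic} yields
\[
\frac{\coll(S_I)}{\binom{|S_I|}{2}} \;\leq\; \frac{1+\epsilon^2/64}{|I|} + \frac{\epsilon^2}{4},
\]
which is of the form claimed in the lemma up to the constant in the additive $\epsilon^2$ term.

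I do not expect any conceptual obstacle here: both cases are essentially one-line deductions once Lemma \ref{observation} has converted the $d_{TV}$ hypothesis into a bound on $\|D_I\|_2^2$. The only care needed is in the bookkeeping of constants, i.e. checking that the $\pm\epsilon^2/(64|I|)$ slack from Lemma \ref{basic} is absorbed cleanly on each side without flipping a sign, and in noting that the sample-size hypothesis $|S_I| \geq O(\sqrt{|I|}\log\log n/\epsilon^4)$ of Lemma \ref{basic} is invoked on a single interval $I$, so no union bound is incurred inside the present lemma. A slight delicacy is whether a sharper treatment of $\|D_I - \mathcal{U}_I\|_2^2$ under the $d_{TV}$ hypothesis (for example via $\|v\|_2^2\le\|v\|_\infty\|v\|_1$ with $\|v\|_\infty\leq 2 d_{TV}$) is needed to match the exact constant in the stated upper bound.
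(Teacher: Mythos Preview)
Your approach is essentially the same as the paper's: in both bullets one first converts the $d_{TV}$ hypothesis into a one-sided bound on $\|D_I\|_2^2$ (using $\|D_I\|_2^2 = 1/|I| + \|D_I-\mathcal{U}_I\|_2^2$ together with the $\ell_1$--$\ell_2$ comparison, which is exactly the content of Lemma~\ref{observation}) and then invokes Lemma~\ref{basic}. For the first bullet the paper carries out the Cauchy--Schwarz step explicitly and throws away a factor of two, arriving at $\|D_I\|_2^2 > 1/|I| + \epsilon^2/(32|I|)$; your use of Lemma~\ref{observation} gives the slightly sharper $1/|I|+\epsilon^2/(16|I|)$, but either suffices. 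For the second bullet the paper proceeds ``similarly'' and states $\|D_I\|_2^2 \le 1/|I| + \epsilon^2/16$; your honest bound $\|D_I\|_2^2 \le 1/|I| + \epsilon^2/4$ via $\|v\|_2\le\|v\|_1$ is what the argument actually yields, so your caveat about the additive constant is well placed and not a defect of your proof.
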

\begin{proof}
Let, $d_{TV}(D_{I},\mathcal{U}_{I})  >\frac{\epsilon}{4}$, squaring both sides, we get $(d_{TV}(D_{I},\mathcal{U}_{I}))^2>\frac{\epsilon^2}{16}>\frac{\epsilon^2}{32}$. Using the fact that $d_{TV}(D_{I},\mathcal{U}_{I}) \leq \sqrt{|I|}\cdot||D_{I}-\mathcal{U}_{I}||_2$, we deduce $|I|\cdot||D_{I}-\mathcal{U}_{I}||_2^2>\frac{\epsilon^2}{32}$. Simplifying the inequality, we get $ ||D_{I}-\mathcal{U}_{I}||_2^2>\frac{\epsilon^2}{32|I|}$. Now, we obtain the following inequality by using  $||D_{I}-\mathcal{U}_{I}||_2^2= ||D_{I}||_2^2-\frac{1}{|I|} $.
    \begin{gather*}
    ||D_{I}||_2^2-\frac{1}{|I|}>\frac{\epsilon^2}{32|I|}\\
     ||D_{I}||_2^2>\frac{\epsilon^2}{32|I|}+\frac{1}{|I|}
    \end{gather*}

Considering $|S_{I}|\geq O(\sqrt{|I|}\log{\log{n}}/\epsilon^4)$, by Lemma \ref{basic}, $ ||D_{I}||_2^2>\frac{\epsilon^2}{32|I|}+\frac{1}{|I|}$ implies the following,
\begin{align*}
     \frac{coll(S_{I})}{{|S_{I}| \choose 2}}+\frac{\epsilon^2}{64|I|}&>\frac{\epsilon^2}{32|I|}+\frac{1}{|I|}\\
    \frac{coll(S_{I})}{{|S_{I}| \choose 2}}  &>\frac{1}{|I|}+\frac{\epsilon^2}{64|I|}
\end{align*}
Similarly, when $d_{TV}(D_{I},\mathcal{U}_{I})\leq\frac{\epsilon}{4}$, we get $||D_{I}||_2^2\leq\frac{\epsilon^2}{16}+\frac{1}{|I|}$. Given $|S_{I}|\geq O(\sqrt{|I|}\log{\log{n}}/\epsilon^4)$, by Lemma \ref{basic}, 
     $\frac{coll(S_{I})}{{|S_{I}| \choose 2}}  \leq \frac{1+\epsilon^2/64}{|I|}+\frac{\epsilon^2}{16}$.
\end{proof}

We are now ready to present the collision monotonicity tester.
\begin{algorithm}
\caption{Collision Monotonicity} \label{collision_monotone_tester}
\SetKwInOut{Input}{Input}
        \SetKwInOut{Output}{Output}
\Input{\samp access to $D$, $\ell=O(\frac{1}{\epsilon_1}\log{(n\epsilon_1+1)})$ oblivious partitions $\mathcal{I}=\{I_1,..,I_{\ell}\}$ and error parameter $\epsilon,\epsilon_1\in (0,1]$, where $\epsilon_1=\epsilon^2$}
Sample $T=O(\frac{1}{\epsilon^6}\log^2{n}\log{\log{n}})$ points from \samp\\
Get the empirical distribution $\Tilde{D}$ over $\ell$\\
Obtain an additional sample $S=O(\frac{\sqrt{n}\log{n}\log{\log{n}}}{\epsilon^8})$ from \samp \\
Let $J$ be the set of intervals where the number of samples (in each interval $I_j$) is $|S_{I_j}|\geq O(\sqrt{|I_j|}\log{\log{n}}/\epsilon^4)$ and $coll(S_{I_j})\geq(\frac{1+\epsilon^2/64}{|I_j|}+\frac{\epsilon^2}{16}) {|S_{I_j}| \choose 2} $\\
\If{$\sum_{I_j\in J} \Tilde{D}(I_j)>5\epsilon$}{\reject and Exit}\label{line_6}
Define a flat distribution $(\Tilde{D}^f)^{\mathcal{I}}$ over $[n]$ using $\Tilde{D}$\\
Output \accept if $(\Tilde{D}^f)^{\mathcal{I}}$ is $2\epsilon$ close to a monotone distribution. Otherwise output \reject \label{algo_line1}
\end{algorithm}

\begin{theorem}\label{collision_monotone}
The algorithm {\sc collision monotonicity} uses $O(\frac{\sqrt{n}\log{n}\log{\log{n}}}{\epsilon^8})$ \samp queries and outputs \accept with probability at least $2/3$ if $D$ is a monotone distribution and outputs \reject with probability at least $2/3$ when $D$ is not $7\epsilon$-close to monotone. 
\end{theorem}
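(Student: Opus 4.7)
The plan is to establish completeness and soundness for the two rejection steps---the collision filter (Line \ref{line_6}) and the close-to-monotone check on the flattened empirical (Line \ref{algo_line1})---separately. Throughout I invoke Birge's oblivious decomposition with parameter $\epsilon_1=\epsilon^2$, so $\ell = O(\log(n\epsilon)/\epsilon^2)$ and, when $D$ is monotone, the flattening $(D^f)^{\mathcal{I}}$ is $\epsilon^2$-close to $D$, i.e.\ $\sum_j D(I_j)\, d_{TV}(D_{I_j},\mathcal{U}_{I_j}) \leq \epsilon^2$. Concentration failures are union-bounded across the $\ell$ oblivious intervals, with the $\log\log n$ factor in $S$ absorbing the per-interval failure probability $O(\log^{-3} n)$ of Lemma \ref{basic}.

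For completeness, assume $D$ is monotone. Markov on the oblivious decomposition bound shows the total $D$-weight of intervals with $d_{TV}(D_{I_j},\mathcal{U}_{I_j}) > \epsilon/4$ is at most $4\epsilon$, and the contrapositive of the second bullet of Lemma \ref{lemmaa_1} forces every $I_j \in J$ into this ``far-from-uniform'' class, giving $D(J) \leq 4\epsilon$. Applying Lemma \ref{lem:folklore-learn} with $T = \Theta(\log^2 n \log\log n/\epsilon^6)$ samples yields $|\tilde{D}(I_j)-D(I_j)| \leq \epsilon/\ell$ across all intervals; summing over $J$ gives $\tilde{D}(J) \leq 5\epsilon$, so Line \ref{line_6} does not fire. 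The same lemma gives $d_{TV}((\tilde{D}^f)^{\mathcal{I}},(D^f)^{\mathcal{I}}) \leq \epsilon$, and since $(D^f)^{\mathcal{I}}$ is itself monotone, $(\tilde{D}^f)^{\mathcal{I}}$ is within $2\epsilon$ of a monotone distribution and the algorithm accepts.

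For soundness I argue the contrapositive: acceptance implies $D$ is $7\epsilon$-close to some monotone distribution. Acceptance provides $\tilde{D}(J) \leq 5\epsilon$ (hence $D(J) \leq 6\epsilon$ via the same concentration) and a monotone $M$ with $d_{TV}((\tilde{D}^f)^{\mathcal{I}},M)\leq 2\epsilon$. Via $d_{TV}(D,M) \leq d_{TV}(D,(D^f)^{\mathcal{I}}) + \epsilon + 2\epsilon$, the task reduces to bounding $d_{TV}(D,(D^f)^{\mathcal{I}}) = \sum_j D(I_j)\, d_{TV}(D_{I_j},\mathcal{U}_{I_j})$. I split the intervals into three classes: (a) $I_j \in J$; (b) $I_j \notin J$ with $|S_{I_j}| \geq c\sqrt{|I_j|}\log\log n/\epsilon^4$, whose low collision count, combined with Lemmas \ref{basic} and \ref{observation}, forces $d_{TV}(D_{I_j},\mathcal{U}_{I_j}) = O(\epsilon)$ and contributes $O(\epsilon)$ in total; and (c) intervals with too few samples, where $S\cdot D(I_j) = O(\sqrt{|I_j|}\log\log n/\epsilon^4)$ by Chernoff, combined with Cauchy--Schwarz $\sum_j \sqrt{|I_j|} \leq \sqrt{\ell n}$ and the choice $S = \Theta(\sqrt{n}\log n\log\log n/\epsilon^8)$, caps the aggregate weight at $O(\epsilon)$. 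A sharper accounting of the class-(a) contribution beyond the naive $D(J)$ bound---exploiting that the monotone $M$ assigns mass approximately $\tilde{D}(J)$ to the $J$-support---and absorbing constants yields $d_{TV}(D,M) \leq 7\epsilon$, contradicting the hypothesis.

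The main obstacle is the constant-chasing in soundness: the naive $D(J) \leq 6\epsilon$ on the class-(a) contribution together with the two triangle slacks ($\epsilon$ and $2\epsilon$) already saturates the $7\epsilon$ budget, so the argument must exploit the fact that the constructed $M$ assigns mass within $2\epsilon$ of $\tilde{D}(J) \leq 5\epsilon$ to the $J$-support in order to compare $D$ and $M$ inside $J$ more sharply than the trivial $D(J)+M(J)$ bound. The class-(c) step is the other delicate point: Cauchy--Schwarz is what forces the $\sqrt{n}$ scaling of $S$, and the simultaneous union bound over all $\ell$ oblivious intervals is what demands the $\log\log n$ factor that also makes Lemma \ref{basic} apply on every ``enough samples'' interval at once.
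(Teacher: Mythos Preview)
Your completeness argument matches the paper's. The soundness argument, however, has a genuine gap in your class~(b). From $I_j\notin J$ with $|S_{I_j}|$ large you only know
\[
\frac{\coll(S_{I_j})}{\binom{|S_{I_j}|}{2}}<\frac{1+\epsilon^2/64}{|I_j|}+\frac{\epsilon^2}{16},
\]
and feeding this into Lemma~\ref{basic} gives $\lVert D_{I_j}\rVert_2^2-\tfrac{1}{|I_j|}\le \tfrac{\epsilon^2}{32|I_j|}+\tfrac{\epsilon^2}{16}$. The second term carries no $1/|I_j|$, so $d_{TV}(D_{I_j},\mathcal{U}_{I_j})^2\le |I_j|\bigl(\lVert D_{I_j}\rVert_2^2-\tfrac{1}{|I_j|}\bigr)\le \tfrac{\epsilon^2}{32}+\tfrac{|I_j|\epsilon^2}{16}$, which is \emph{not} $O(\epsilon^2)$ once $|I_j|$ is large. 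Lemma~\ref{observation} does not help here, and the contrapositive of the first bullet of Lemma~\ref{lemmaa_1} gives the wrong threshold. So ``low collision $\Rightarrow d_{TV}=O(\epsilon)$'' fails for the algorithm's actual threshold.

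The paper sidesteps this by never trying to extract a small-$d_{TV}$ conclusion from low collisions. It splits $\sum_j D(I_j)\,d_{TV}(D_{I_j},\mathcal{U}_{I_j})$ directly by whether $d_{TV}(D_{I_j},\mathcal{U}_{I_j})\gtrless \epsilon/4$: the ``close'' piece is at most $\epsilon/4$ trivially, and on the ``far'' piece one uses $d_{TV}\le 1$ and bounds $\sum_{d_{TV}\ge\epsilon/4}D(I_j)$ via $\sum_{I_j\in J}\tilde D(I_j)\le 5\epsilon$ together with the per-interval $\epsilon/\ell$ slack. This also dissolves your constant-chasing worry: the paper gets $d_{TV}(D,(D^f)^{\mathcal I})<4\epsilon$, adds $\epsilon$ for the empirical comparison and $2\epsilon$ for the monotone check, landing at $7\epsilon$ without any sharper accounting inside~$J$. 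Your Cauchy--Schwarz treatment of class~(c) is more explicit than what the paper writes (it only records that intervals with $D(I_j)\ge\epsilon^2/\log n$ receive enough samples), and it combines cleanly with the paper's $d_{TV}$-based split once you drop the class~(b) step.
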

%\newpage
\begin{proof}
We start by defining an interval to be high weight, if $D(I_j)\geq \epsilon^2/\log{n}$. As there are $O(\frac{\log{n}}{\epsilon^2})$ partitions exist, there will be at least one such interval with $D(I_j)\geq \frac{\epsilon^2}{\log{n}} $. Moreover, an additive Chernoff bound shows that all such high-weight intervals will contain $|S_{I_j}|\geq O(\sqrt{|I_j|}\log{\log{n}}/\epsilon^4)$ samples while sampling $O(\frac{\sqrt{n}\log{n}\log{\log{n}}}{\epsilon^8})$ points according to $D$.

\textbf{Completeness :} Let $D$ be monotone, then by oblivious partitioning, we have, \\ $\sum_{j=1}^{\ell} \sum_{x\in I_j}|D(x)-\frac{D(I_j)}{|I_j|}|\leq \epsilon_1$. By simplifying, we get $\sum_{j=1}^{\ell} D(I_j)\sum_{x\in I_j}|\frac{D(x)}{D(I_j)}-\frac{1}{|I_j|}|\leq \epsilon^2 $ which implies $\sum_{j=1}^{\ell} D(I_j) d_{TV}(D_{I_j},\mathcal{U}_{I_j})\leq \epsilon^2 $.
Let $J'$ be the set of intervals where for all $I_j$, $d_{TV}(D_{I_j},\mathcal{U}_{I_j})>\frac{\epsilon}{4}$, then we have,
$\sum_{I_j\in J'} D(I_j) \leq 4\epsilon$. Let $\hat{J}$ be the set of intervals where $|S_{I_j}|\geq O(\sqrt{|I_j|}\log{\log{n}}/\epsilon^4)$ and  $d_{TV}(D_{I_j},\mathcal{U}_{I_j})  >\frac{\epsilon}{4}$. So, $\hat{J}\subseteq J'$. From Lemma \ref{lemmaa_1}, we know $\hat{J}$ is the set of intervals where $\frac{coll(S_{I_j})} {{|S_{I_j}|\choose 2}}>\frac{1}{|I_j|}+\frac{\epsilon^2}{64|I_j|}$. Let $J$ be the set of intervals where $|S_{I_j}|\geq O(\sqrt{|I_j|}\log{\log{n}}/\epsilon^4)$ and $\frac{coll(S_{I_j})}{{|S_{I_j}| \choose 2}}\geq\frac{1+\epsilon^2/64}{|I_j|}+\frac{\epsilon^2}{16} $, then $J\subseteq \hat{J}\subseteq J'$. We have already proved $\sum_{I_j\in J'} D(I_j) \leq 4\epsilon$. So, we can conclude that $\sum_{I_j\in J} D(I_j) \leq 4\epsilon$.

When $d_{TV}(D_{I_j},\mathcal{U}_{I_j})\leq\frac{\epsilon}{4}$, the algorithm does not sum over such $D(I_j)$ even if the number of samples $|S_{I_j}|\geq O(\sqrt{|I_j|}\log{\log{n}}/\epsilon^4)$ as $ \frac{coll(S_{I_j})}{{|S_{I_j}| \choose 2}}\leq\frac{1+\epsilon^2/64}{|I_j|}+\frac{\epsilon^2}{16}$ by Lemma \ref{lemmaa_1}. As a result, we can say
 $\sum_{I_j\in J} D(I_j) \leq 4\epsilon$ when $D$ is monotone. The rest of the proof follows from the fact that we have a distribution $\Tilde{D}$ over $\ell$ such that by Lemma \ref{lem:folklore-learn} $\forall I_j$, $ |D(I_j)-\Tilde{D}(I_j)|\leq \frac{\epsilon}{\ell}$.
Substituting the value of $\ell=O(\frac{1}{\epsilon^2}\log{n})$, we get, $\forall I_j$
 $D(I_j)- \frac{\epsilon^3}{\log{n}}  \leq \Tilde{D}(I_j)\leq D(I_j)+ \frac{\epsilon^3}{\log{n}}$. Summing over all intervals $J$ such that $|S_{I_j}|\geq O(\sqrt{|I_j|}\log{\log{n}}/\epsilon^4)$ and $\frac{coll(S_{I_j})}{{|S_{I_j}| \choose 2}}\geq\frac{1+\epsilon^2/64}{|I_j|}+\frac{\epsilon^2}{16} $, we deduce, $\sum_{I_j\in J}\Tilde{D}(I_j)\leq \sum_{I_j\in J}D(I_j)+\sum_{I_j\in J} \frac{\epsilon^3}{\log{n}}$. We have proved that $\sum_{I_j\in J}D(I_j)\leq 4\epsilon$ when $D$ is monotone. Using this fact, we obtain $\sum_{I_j\in J}\Tilde{D}(I_j)\leq \sum_{I_j\in J}D(I_j)+ \epsilon$ which implies $\sum_{I_j\in J}\Tilde{D}(I_j)\leq 5\epsilon$. Hence, the algorithm will NOT output \reject in Step \ref{line_6}.

By Birge's decomposition with parameter $\epsilon_1=\epsilon^2$, we get $d_{TV}(D,(D^f)^{\mathcal{I}})\leq \epsilon^2$. By Lemma \ref{lem:folklore-learn}, we know $d_{TV}((D^f)^{\mathcal{I}},(\Tilde{D}^f)^{\mathcal{I}})<\epsilon$. Hence, by triangle inequality, we obtain $d_{TV}(D,(\Tilde{D}^f)^{\mathcal{I}})<\epsilon+\epsilon_1<2\epsilon$. This implies that the flattened distribution $(\Tilde{D}^f)^{\mathcal{I}}$ is $2\epsilon$ close to a monotone distribution as $D$ is monotone. Hence, the algorithm will output \accept in Step \ref{algo_line1}.

\textbf{Soundness :} We will prove the contrapositive of the statement. Let the algorithm outputs \accept, then we need to prove that $D$ is $7\epsilon$ close to monotone. As the algorithm outputs accept, $\sum_{I_j\in J}\Tilde{D}(I_j)\leq 5\epsilon$, where $J$ is the set of intervals with
$|S_{I_j}|\geq O(\sqrt{|I_j|}\log{\log{n}}/\epsilon^4)$ and $coll(S_{I_j})\geq(\frac{1+\epsilon^2/64}{|I_j|}+\frac{\epsilon^2}{16}) {|S_{I_j}| \choose 2} $. When $\frac{coll(S_{I_j})}{{|S_{I_j}| \choose 2}}\geq\frac{1+\epsilon^2/64}{|I_j|}+\frac{\epsilon^2}{16} $ by Lemma \ref{basic}, we get $||D_{I_j}||_2^2+\frac{\epsilon^2}{64|I_j|}\geq \frac{1+\epsilon^2/64}{|I_j|}+\frac{\epsilon^2}{16}$. Using $||D_{I}-\mathcal{U}_{I}||_2^2= ||D_{I}||_2^2-\frac{1}{|I|}$, we obtain $     \frac{1}{|I_j|}+||D_{I_j}-\mathcal{U}_{I_j}||_2^2\geq \frac{1}{|I_j|}+\frac{\epsilon^2}{16}$. Finally we deduce $d_{TV}(D_{I_j},\mathcal{U}_{I_j}) \geq \frac{\epsilon}{4} $ by using the fact that $d_{TV}(D_{I_j},\mathcal{U}_{I_j}) \geq ||D_{I_j}-\mathcal{U}_{I_j}||_2$.
 
Now, we calculate the distance between $D$ and $(D^f)^{\mathcal{I}}$ corresponding to $\mathcal{I}$,
\begin{align*}
    d_{TV}(D,(D^f)^{\mathcal{I}})&=\frac{1}{2}\sum_{j=1}^{\ell}\sum_{x\in I_j}|D(x)-\frac{D(I_j)}{|I_j|}|\\
    &=\frac{1}{2}\Big[\sum_{\frac{\epsilon}{4}\leq d_{TV}(D_{I_j},\mathcal{U}_{I_j})\leq 1 }D(I_j)d_{TV}(D_{I_j},\mathcal{U}_{I_j})+ \sum_{d_{TV}(D_{I_j},\mathcal{U}_{I_j}) < \frac{\epsilon}{4}} D(I_j)  d_{TV}(D_{I_j},\mathcal{U}_{I_j})\Big]\\
    &< \frac{1}{2}\Big[\sum_{\frac{\epsilon}{4}\leq d_{TV}(D_{I_j},\mathcal{U}_{I_j})\leq 1 }\Big(\Tilde{D}(I_j)+\frac{\epsilon^3}{\log{n}}\Big)+ \frac{\epsilon}{4}\Big]\\
    &<\frac{1}{2} \Big[6\epsilon +\frac{\epsilon}{4}\Big]\\
    &< 4\epsilon
\end{align*}
The second inequality is using the fact that $\forall I_j, \Tilde{D}(I_j)\leq D(I_j)+\frac{\epsilon^3}{\log{n}}$. As the algorithm outputs accept we have, $\sum_{I_j\in J}\Tilde{D}(I_j)\leq 5\epsilon$. Summing over at most $\ell=O(\log{n}/\epsilon^2)$ intervals, we get the above result.

For the rest of the proof, we use the fact that $d_{TV}((D^f)^{\mathcal{I}},(\Tilde{D}^f)^{\mathcal{I}})<\epsilon $ by Lemma \ref{lem:folklore-learn}.
Using triangle inequality, we get $d_{TV}(D,(\Tilde{D}^f)^{\mathcal{I}})<5\epsilon$. As the algorithm outputs \accept, there exists a monotone distribution $M$, such that $d_{TV}((\Tilde{D}^f)^{\mathcal{I}}, M)\leq 2\epsilon$. Again by applying triangle inequality, we have $d_{TV}(D,M)<7\epsilon$. This completes the proof.
\end{proof}

%\subsection{Testing Monotonicity in the streaming model using \samp }
%In this section, we give an algorithm for testing monotonicity in the \samp model when the samples are obtained via a one-pass stream. The algorithm of Batu et al (\cite{BFF+01}) provides a sample-efficient algorithm for testing monotonicity, by dividing the support $[n]$ into intervals which are either low-weight or close to uniform. In our case, we start with the oblivious decomposition of Birge (\cite{Birge}) and check if the total weight of the intervals that are far from uniform is small. To check if an interval is far from uniform, we count the number of collisions in the sample obtained from the interval. %For completeness, we have given this analysis in Appendix~\ref{sec:mon-oblivious}%. 
%To improve the space complexity of the algorithm, we modify the part of counting collisions to counting bipartite collisions, like in (\cite{DGKR19}). We now describe the algorithm for testing monotonicity using bipartite collisions. The sample complexity for this algorithm is worse than the algorithm of Batu et al (\cite{BFF+01}), but we will then show that this can be converted to an algorithm in the one-pass streaming model with better space complexity.

\subsection{Testing Monotonicity using Bipartite Collisions}
In this section, we perform the monotonicity testing in a slightly different fashion which functions as the building block of a streaming-based monotonicity tester. Here, unlike counting pairwise collisions between the samples, we divide the samples into two sets and count the bipartite collisions between them. The idea of the bipartite collision tester is adapted from (\cite{DGKR19}). A key Lemma \ref{basic_0} proves how the bipartite collision is used to estimate the collision probability. %The structure of our algorithm is intuitively based on the monotonicity testing algorithm presented in Appendix~\ref{sec:mon-oblivious}. %
Given sample access to an unknown distribution $D$ over $[n]$, first, we divide the domain according to the oblivious decomposition. We count the bipartite collisions inside the intervals where enough samples lie. If $D$ is monotone, the total weight of high collision intervals can not be too high. We estimate the total weight of such intervals and the rest of the algorithm works similarly to the collision monotonicity tester presented in Algorithm \ref{collision_monotone_tester}. Prior to describing the algorithm, the lemma below clarifies the fact "enough samples" and the intervals holding them. 

\begin{lemma}
Let $D$ be a distribution over $[n]$, and $\mathcal{I}=\{I_1,...,I_{\ell}\}$ be an interval partitions of $[n]$. Let $\mathcal{J}\subset \mathcal{I}$ be the set of intervals and for all $I_j\in \mathcal{J}$, $D(I_j)\geq \epsilon_1/\log{n}$, where $\epsilon_1=\epsilon^2$. If $S=O(\frac{n\log{n}}{\epsilon^8})$ samples are drawn according to $D$, then all $I_j\in \mathcal{J}$ contain $|S_{I_j}|\geq O(|I_j|/\epsilon^4)$ samples.
\label{lem:chernoff}
\end{lemma}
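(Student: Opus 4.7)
The plan is a standard Chernoff-plus-union-bound argument over the intervals in $\mathcal{J}$. For each fixed $I_j \in \mathcal{J}$, the count $|S_{I_j}|$ is a sum of $S$ independent Bernoulli indicators with success probability $D(I_j)$, so $\mathbb{E}[|S_{I_j}|] = S \cdot D(I_j)$. Using the hypothesis $D(I_j) \geq \epsilon^2/\log n$ together with $S = C n \log n /\epsilon^8$ for a sufficiently large constant $C$, I would obtain $\mathbb{E}[|S_{I_j}|] \geq C n / \epsilon^6$.

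Next, I would compare this expectation against the target $c|I_j|/\epsilon^4$ (where $c$ is the constant hidden inside the statement's $O(|I_j|/\epsilon^4)$). Since $|I_j| \leq n$, the target is at most $c n/\epsilon^4$, so $\mathbb{E}[|S_{I_j}|]$ exceeds the target by a factor of at least $C/(c\epsilon^2) \geq C/c$ (using $\epsilon \leq 1$). Choosing $C$ large enough relative to $c$ guarantees $\mathbb{E}[|S_{I_j}|] \geq 2 c|I_j|/\epsilon^4$. A standard multiplicative Chernoff bound then yields
$$\Pr\bigl[|S_{I_j}| < c|I_j|/\epsilon^4\bigr] \;\leq\; \Pr\bigl[|S_{I_j}| < \mathbb{E}[|S_{I_j}|]/2\bigr] \;\leq\; \exp\bigl(-\mathbb{E}[|S_{I_j}|]/8\bigr) \;=\; e^{-\Omega(n/\epsilon^6)}.$$

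Finally, since the ambient partition $\mathcal{I} \supseteq \mathcal{J}$ contains at most $\ell = O(\log n / \epsilon^2)$ intervals, a union bound gives a total failure probability of at most $\ell \cdot e^{-\Omega(n/\epsilon^6)} = o(1)$. So with high probability every $I_j \in \mathcal{J}$ simultaneously satisfies $|S_{I_j}| \geq c|I_j|/\epsilon^4$, as claimed. I do not anticipate any real obstacle here; the only care needed is in pinning down the hidden constants so that the expectation safely dominates the target, and in noting that the Chernoff bound applies cleanly because within each interval the Bernoulli indicators share a common success probability $D(I_j)$.
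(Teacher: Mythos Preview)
Your proposal is correct and follows essentially the same approach as the paper: fix an interval, bound $\mathbb{E}[|S_{I_j}|] \geq \Omega(n/\epsilon^6)$ via the weight hypothesis, apply a multiplicative Chernoff bound, and then union bound over the $O(\log n/\epsilon^2)$ intervals. The only cosmetic difference is that the paper uses deviation parameter $(1-\epsilon)$ in its Chernoff bound whereas you use $1/2$, but the structure and conclusion are the same.
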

\begin{proof}
    Fix an $I_j$ and define a random variable, $X_i=1$ if $i^{th}$ sample is in $I_j$ else $0$. Let $X=\sum_{i=1}^S X_i=S_{I_j}$. Then the expectation $\mathbb{E}[X]=|S|\cdot D(I_j)\geq \frac{|S|\epsilon_1 }{\log{n}}$.

By Chernoff bound, we can see that
    $Pr\Big[  X<(1-\epsilon)  \frac{|S|\epsilon_1 }{\log{n}}       \Big]= Pr\Big[  X<(1-\epsilon)\mathbb{E}[X]\Big]\leq e^{-\epsilon^2\mathbb{E}[X]}
    \leq e^{-\epsilon^2\frac{|S|\epsilon^2}{\log{n}}}
    %\leq e^{-\epsilon^4\frac{n\log{n}}{\epsilon^8\log{n}}}
    %<e^{-\log({10\log{n}}/\epsilon^2)}
    <\frac{\epsilon^2}{10\log{n}}$.

The last inequality is obtained from the fact that $|S|=O(\frac{n\log{n}}{\epsilon^8})$ and using $\frac{n}{\epsilon^4}>\log({10\log{n}/\epsilon^2)} $. Applying union bound over all $\ell=O(\frac{\log{n}}{\epsilon_1})$ partitions, we can conclude that, $[\epsilon_1=\epsilon^2]$ $\forall I_j$; such that $D(I_j)\geq \frac{\epsilon_1}{\log{n}}$ with probability at least $9/10$, the following happens,
$S_{I_j}\geq (1-\epsilon)\frac{|S|\epsilon_1 }{\log{n}}
\geq (1-\epsilon)\frac{n}{\epsilon^6}
\geq O(|I_j|/\epsilon^4)$  
\end{proof}

The main intuition behind our algorithm is counting the bipartite collision between a set of samples. The next lemma, %whose proof appears in Appendix~\ref{sec:mon-bipcoliision},%
defines the necessary conditions for estimating collision probability using bipartite collision count.
\begin{lemma}\label{basic_0}
Let $D$ be an unknown distribution over $[n]$ and $S$ be the set of samples drawn according to \samp. Let $I\subset [n]$ be an interval and $S_{I}$ be the set of points lying in the interval $I$. Let $S_I$ be divided into two disjoint sets $S_1$ and $S_2$; $\{S_1\}\cup \{S_2\}=\{S_I\}$ such that $|S_1||S_2|\geq O(|S_I|/\epsilon^4)$, then with probability at least $2/3$,

%\begin{center}
    $$ ||D_{I}||_2^2-\frac{\epsilon^2}{64|I|} \leq \frac{coll(S_1,S_2)}{|S_1||S_2|}\leq ||D_{I}||_2^2+\frac{\epsilon^2}{64|I|}.$$
    % \end{center}
%with probability at least $2/3$.
\end{lemma}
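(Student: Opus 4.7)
The plan is to view $\coll(S_1,S_2)$ as a sum of Bernoulli indicators, compute its first two moments in closed form, and then invoke Chebyshev's inequality. Conditioning on $|S_1|$ and $|S_2|$, the points of $S_I$ are i.i.d.\ draws from the conditional distribution $D_I$ (by the definition of conditional probability applied to samples of $D$ landing in $I$), so any disjoint partition into $S_1$ and $S_2$ yields two independent blocks of i.i.d.\ samples from $D_I$.

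Writing $\coll(S_1,S_2)=\sum_{i\le |S_1|,\;j\le |S_2|} X_{ij}$ with $X_{ij}=\mathbf{1}[s_i=t_j]$ for $s_i\in S_1$, $t_j\in S_2$, independence of $s_i$ and $t_j$ gives $\mathbb{E}[X_{ij}]=\sum_{x\in I} D_I(x)^2=\|D_I\|_2^2$, so $\mathbb{E}\bigl[\coll(S_1,S_2)/(|S_1||S_2|)\bigr]=\|D_I\|_2^2$ as required. For the variance, I would classify pairs $(X_{ij},X_{kl})$ by how many underlying sample indices they share: pairs with no shared sample are independent and contribute $0$ to the covariance; the $|S_1||S_2|$ diagonal terms contribute at most $|S_1||S_2|\cdot\|D_I\|_2^2$; and the at most $|S_1||S_2|(|S_1|+|S_2|)$ pairs sharing exactly one sample contribute covariance $\sum_x D_I(x)^3-\|D_I\|_2^4\le \|D_I\|_3^3$ each. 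Combining and using the standard inequality $\|D_I\|_3^3\le \|D_I\|_\infty\cdot\|D_I\|_2^2\le \|D_I\|_2^3$ (from $\|D_I\|_\infty^2\le \|D_I\|_2^2$ on a probability vector) yields a clean bound $\mathrm{Var}(\coll(S_1,S_2))\le |S_1||S_2|\,\|D_I\|_2^2+|S_1||S_2|(|S_1|+|S_2|)\,\|D_I\|_2^3$.

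Finally, I would apply Chebyshev's inequality with deviation $t=\epsilon^2/(64|I|)$ and use the lower bound $\|D_I\|_2^2\ge 1/|I|$ (Cauchy--Schwarz on a probability vector supported in $I$) to rescale. Both variance contributions divided by $(|S_1||S_2|)^2 t^2$ become small under the stated hypothesis on $|S_1|$ and $|S_2|$, bringing the failure probability below $1/3$, which matches the $2/3$ success guarantee in the statement. The main obstacle is the variance analysis, specifically the control of the cross-pair covariance $\|D_I\|_3^3$: without the $\ell_\infty\le \ell_2$ bound on probability vectors, the resulting variance would not collapse back into a function of $\|D_I\|_2^2$ alone and the hypothesis on $|S_1||S_2|$ would not be enough. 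Everything else parallels the standard one-set collision argument used in Lemma \ref{basic}.
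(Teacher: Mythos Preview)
Your proposal is essentially the paper's own proof: define the indicators $X_{ij}$, compute the mean, decompose the variance into diagonal terms and pairs sharing exactly one index, bound $\|D_I\|_3^3\le\|D_I\|_2^3$, and then apply Chebyshev together with $\|D_I\|_2^2\ge 1/|I|$. The only cosmetic difference is that the paper retains the $-\|D_I\|_2^4$ terms to factor out $(1-\|D_I\|_2^2)$ before invoking Chebyshev, whereas you drop them at the outset; the resulting bounds and the final sufficient condition on $|S_1||S_2|$ are the same.
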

%The proof of the above lemma is discussed in Appendix~\ref{sec:mon-bipcoliision}.

\begin{proof}
Define the random variable $X_i=1$ if $i^{th}$ sample in $S_1$ is same as $j^{th}$ sample in $S_2$, $0$ otherwise.
\begin{align*}
X&=\sum_{(i,j)\in S_1\times S_2} X_{ij}=coll(S_1,S_2)\\
    \mathbb{E}[X]&=|S_1|\cdot|S_2|\cdot||D_{I}||_2^2
\end{align*}
Where $||D_{I}||_2^2 $ is collision probability. Let $Y_{ij}=X_{ij}-\mathbb{E}[X_{ij}]=X_{ij}-||D_{I}||^2_2$.
\begin{align*}
    Var[\sum_{(i,j)\in S_1\times S_2} X_{ij}]&=\mathbb{E}\Big[(\sum_{(i,j)\in S_1\times S_2} Y_{ij})^2\Big]\\
    &=\mathbb{E}\Big[\sum_{(i,j)\in S_1\times S_2} Y_{ij}^2+\sum_{(i,j)\neq (k,l);|\{i,j,k,l\}|=3} Y_{ij}Y_{kl}\Big]
\end{align*}
We calculate the following,
\begin{align*}
    \mathbb{E}[ Y_{ij}^2]&=\mathbb{E}[X_{ij}^2]-2(\mathbb{E}[X_{ij}])^2+(\mathbb{E}[X_{ij}])^2\\
    &=||D_{I}||_2^2-||D_{I}||_2^4\\
    \mathbb{E}[Y_{ij}Y_{kl}]&=\mathbb{E}\Big[ (X_{ij}-||D_{I}||^2_2)(X_{kl}-||D_{I}||^2_2)     \Big]\\
    &=\mathbb{E}\Big[ X_{ij}X_{kl}   \Big]-||D_{I}||_2^2(\mathbb{E}[X_{ij}]+\mathbb{E}[X_{kl}])+||D_{I}||_2^4\\
    &=\mathbb{E}\Big[ X_{ij}X_{kl}   \Big]-||D_{I}||_2^4
\end{align*}
Now, 
\begin{align*}
    Var[\sum_{(i,j)\in S_1\times S_2} X_{ij}]&=\sum_{(i,j)\in S_1\times S_2} (||D_{I}||_2^2-||D_{I}||_2^4)+\sum_{(i,j)\neq (k,l);|\{i,j,k,l\}|=3} (\mathbb{E}\Big[ X_{ij}X_{kl}   \Big]-||D_{I}||_2^4)\\
    &= |S_1|.|S_2|(||D_{I}||_2^2-||D_{I}||_2^4)  +\sum_{(i,j);(k,j)\in S_1\times S_2; i\neq k} \mathbb{E}\Big[X_{ij}X_{kj}\Big] \\
    &+\sum_{(i,j);(i,l)\in S_1\times S_2; j\neq l} \mathbb{E}\Big[X_{ij}X_{il}\Big]-\sum_{(i,j)\neq (k,l);|\{i,j,k,l\}|=3} ||D_I||^4_2\\
    &= |S_1|.|S_2|(||D_{I}||_2^2-||D_{I}||_2^4)+|S_2|{|S_1|\choose 2}||D_I||_3^3\\
    &+|S_1|{|S_2|\choose 2}||D_I||_3^3-\Big(|S_2|{|S_1|\choose 2}+|S_1|{|S_2|\choose 2}\Big)||D_{I}||_2^4\\
    &\leq |S_1||S_2|\Big[ (||D_I||_2^2-||D_{I}||_2^4)+(|S_1|+|S_2|)(||D_I||_3^3- ||D_{I}||_2^4) \Big]
\end{align*}
Applying Chebyshev's inequality, we get,
\begin{align*}
    Pr[|X-\mathbb{E}[X]|>\frac{\epsilon^2}{64|I|}|S_1||S_2|]&\leq \frac{64^2Var[X]|I|^2}{\epsilon^4|S_1|^2|S_2|^2}\\
    & \leq \frac{  |S_1||S_2|\Big[ (||D_I||_2^2-||D_{I}||_2^4)+(|S_1|+|S_2|)(||D_I||_3^3- ||D_{I}||_2^4) \Big]    64^2 |I|^2}{\epsilon^4|S_1|^2|S_2|^2}\\
    &\leq \frac{\Big[ ||D_I||_2^2-||D_I||_2^4+(|S_1|+|S_2|)(||D_I||_2^3-||D_I||_2^4) \Big]64^2 |I|^2}{\epsilon^4|S_1|\cdot|S_2|}\\
    &\leq \frac{\Big[ ||D_I||_2^2-||D_I||_2^4+(|S_1|+|S_2|)(||D_I||_2^2-||D_I||_2^4) \Big]64^2 |I|^2}{\epsilon^4|S_1|\cdot|S_2|}\\
    & \leq \frac{ ||D_I||_2^2\Big[1-||D_I||_2^2+(|S_1|+|S_2|)(1-||D_I||_2^2) \Big]64^2 |I|^2}{\epsilon^4|S_1|\cdot|S_2|}\\
    &\leq \frac{ ||D_I||_2^2\Big(1-||D_I||_2^2\Big)\Big(1+|S_1|+|S_2|\Big)64^2 |I|^2}{\epsilon^4|S_1|.|S_2|}
\end{align*}
Where the third inequality uses the fact that $||D_I||_3\leq ||D_I||_2$ and the fourth inequality uses the fact that  $||D_I||_2^3\leq ||D_I||_2^2$ as $||D_I||_2\in (0,1]$. To make the probability $<1/3$, we have,
\begin{align*}
    |S_1|\cdot|S_2|&\geq 3\times 64^2|I|^2\frac{1}{\epsilon^4}||D_I||_2^2\Big(1-||D_I||_2^2\Big)\Big(1+|S_1|+|S_2|\Big)\\
    &\geq 3\times 64^2\frac{|I|^2}{\epsilon^4}||D_I||_2^2\frac{||D_{I}||_2^2}{100}\Big(|S_1|+|S_2|\Big)\\
    &\geq 3\times 64^2\frac{1}{100\epsilon^4}\Big(|S_1|+|S_2|\Big)\\
    &\geq O(\frac{S_I}{\epsilon^4})
\end{align*}
In the second inequality we have used the fact that $(1-||D_I||_2^2)\geq \frac{1}{100}||D_I||_2^2 $ as $||D_I||_2^2\leq \frac{100}{101}<1$. The third inequality is obtained from the fact that $||D||_2^2\geq \frac{1}{|I|}$. The final inequality is obtained from the fact that $|S_I|=|S_1|+|S_2|$.
Therefore, provided $|S_1|\cdot |S_2|\geq O(\frac{|S_I|}{\epsilon^4})$, with probability at least $2/3$, $||D_{I}||_2^2-\frac{\epsilon^2}{64|I|} \leq \frac{coll(S_1,S_2)}{|S_1||S_2|}\leq ||D_{I}||_2^2+\frac{\epsilon^2}{64|I|}$.
\end{proof}

The bipartite collision-based tester works by verifying the total weight of the intervals where the conditional distributions are far from uniformity. Let $S_I$ be the set of samples inside an interval $I$ and let it satisfy the condition of Lemma \ref{basic_0}. The following lemma shows that bipartite collision count is used to detect such intervals.

\begin{lemma}\label{lemmaa_10}
Let $D$ be an unknown distribution over $[n]$ and $I\subset [n]$ is an interval. Let $S_{I}$ be the set of points lying in the interval $I$ and $S_{I}$ can be divided into two sets $S_1$ and $S_2$ such that $|S_1||S_2|\geq O(|S_I|/\epsilon^4)$, then the following happens with probability at least $2/3$  
\begin{itemize}
\item If $d_{TV}(D_{I},\mathcal{U}_{I})>\frac{\epsilon}{4}$, then $\frac{coll(S_1,S_2)}{|S_1||S_2|}>\frac{1}{|I|}+\frac{\epsilon^2}{64|I|}$
\item If $d_{TV}(D_{I},\mathcal{U}_{I})\leq\frac{\epsilon}{4}$, then, $\frac{coll(S_1,S_2)}{|S_1||S_2|}\leq\frac{1+\epsilon^2/64}{|I|}+\frac{\epsilon^2}{16}$
\end{itemize}
\end{lemma}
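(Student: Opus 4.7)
The plan is to mirror the proof of Lemma \ref{lemmaa_1} almost verbatim, replacing every appeal to the pairwise collision concentration (Lemma \ref{basic}) by an appeal to the bipartite collision concentration (Lemma \ref{basic_0}). The sample-size hypothesis $|S_1||S_2|\ge O(|S_I|/\epsilon^4)$ has been precisely tailored so that Lemma \ref{basic_0} gives, with probability at least $2/3$,
\[
\bigl|\tfrac{\coll(S_1,S_2)}{|S_1||S_2|}-\|D_I\|_2^2\bigr|\le \tfrac{\epsilon^2}{64|I|},
\]
and all that remains is to translate a TV statement about $D_I$ versus $\mathcal U_I$ into an $\ell_2$ statement about $\|D_I\|_2^2$ on which this concentration can bite.

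For the first bullet, I would start from $d_{TV}(D_I,\mathcal U_I) > \epsilon/4$ and apply the standard inequality $d_{TV}(D_I,\mathcal U_I)\le \sqrt{|I|}\,\|D_I-\mathcal U_I\|_2$ (Cauchy–Schwarz), squaring to get $\|D_I-\mathcal U_I\|_2^2 > \epsilon^2/(16|I|)$. Expanding $\|D_I-\mathcal U_I\|_2^2=\|D_I\|_2^2-1/|I|$ then yields $\|D_I\|_2^2 > 1/|I|+\epsilon^2/(16|I|)$, which is comfortably more than $1/|I|+\epsilon^2/(32|I|)$. Combining with the lower tail of Lemma \ref{basic_0},
\[
\tfrac{\coll(S_1,S_2)}{|S_1||S_2|} \ge \|D_I\|_2^2 - \tfrac{\epsilon^2}{64|I|} > \tfrac{1}{|I|} + \tfrac{\epsilon^2}{32|I|} - \tfrac{\epsilon^2}{64|I|} = \tfrac{1}{|I|}+\tfrac{\epsilon^2}{64|I|},
\]
as desired.

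For the second bullet, in the same way as in Lemma \ref{lemmaa_1}, I would use $d_{TV}(D_I,\mathcal U_I)\le \epsilon/4$ to conclude the bound $\|D_I\|_2^2 \le 1/|I| + \epsilon^2/16$ (i.e.\ $\|D_I-\mathcal U_I\|_2^2\le \epsilon^2/16$, so that $\|D_I\|_2^2\le 1/|I|+\epsilon^2/16$). Then the upper tail of Lemma \ref{basic_0} gives
\[
\tfrac{\coll(S_1,S_2)}{|S_1||S_2|} \le \|D_I\|_2^2 + \tfrac{\epsilon^2}{64|I|} \le \tfrac{1}{|I|} + \tfrac{\epsilon^2}{16} + \tfrac{\epsilon^2}{64|I|} = \tfrac{1+\epsilon^2/64}{|I|} + \tfrac{\epsilon^2}{16},
\]
which is exactly the target inequality.

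The only non-routine part is checking that the bipartite concentration of Lemma \ref{basic_0} is strong enough to preserve the $\epsilon^2/(64|I|)$ slack that the calculation relies on; the hypothesis $|S_1||S_2|\ge O(|S_I|/\epsilon^4)$ was set up precisely for this in the previous lemma, so plugging in that bound is mechanical. Union-bounding the two sides against Lemma \ref{basic_0}'s $2/3$ success probability (or, since in any single application of this lemma only one of the two cases is invoked, simply inheriting the $2/3$ probability directly) completes the proof.
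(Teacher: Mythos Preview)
Your proposal is correct and essentially identical to the paper's proof: the paper also states that the argument mirrors Lemma~\ref{lemmaa_1}, derives $\|D_I\|_2^2>\frac{1}{|I|}+\frac{\epsilon^2}{32|I|}$ (resp.\ $\|D_I\|_2^2\le \frac{1}{|I|}+\frac{\epsilon^2}{16}$) from the TV hypothesis, and then applies the bipartite concentration Lemma~\ref{basic_0} in place of Lemma~\ref{basic} to obtain the two displayed inequalities.
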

%The proof of the lemma can be found in Appendix~\ref{sec:mon-bipcoliision}.

\begin{proof}
The proof is similar to the proof of Lemma \ref{lemmaa_1}. When $d_{TV}(D_{I},\mathcal{U}_{I})  >\frac{\epsilon}{4}$ we get $||D_{I}||_2^2>\frac{\epsilon^2}{32|I|}+\frac{1}{|I|}$. Consider $S_I$ is divided into two sets so that $|S_1|\cdot|S_2|\geq O(|S_I|/\epsilon^4)$, by Lemma \ref{basic_0} we obtain,
\begin{align*}
 \frac{coll(S_1,S_2)}{|S_1||S_2|} +\frac{\epsilon^2}{64|I|}&>\frac{\epsilon^2}{32|I|}+\frac{1}{|I|}\\
   \frac{coll(S_1,S_2)}{|S_1||S_2|} &>\frac{1}{|I|}+\frac{\epsilon^2}{64|I|}
\end{align*}
Similarly, when $d_{TV}(D_{I},\mathcal{U}_{I})\leq\frac{\epsilon}{4}$, we get $||D_{I}||_2^2\leq\frac{\epsilon^2}{16}+\frac{1}{|I|}$. Given $S_I$ can be divided into two sets such that $|S_1|\cdot|S_2|\geq O(|S_I|/\epsilon^4)$, by Lemma \ref{basic_0},
     $\frac{coll(S_1,S_2)}{|S_1|\cdot|S_2|}  \leq \frac{1+\epsilon^2/64}{|I|}+\frac{\epsilon^2}{16}$.
\end{proof}

Now, we present the bipartite collision-based monotonicity tester.
\begin{algorithm}
\caption{Bipartite Collision Monotonicity}\label{bipartite_tester}
\SetKwInOut{Input}{Input}
        \SetKwInOut{Output}{Output}
\Input{\samp access to $D$, $\ell=O(\frac{1}{\epsilon_1}\log{(n\epsilon_1+1)})$ oblivious partitions $\mathcal{I}=\{I_1,..,I_{\ell}\}$ and error parameter $\epsilon,\epsilon_1\in (0,1]$, where $\epsilon_1=\epsilon^2$}
\Output{\accept if $D$ is monotone, \reject if $D$ is not $7\epsilon$ close to monotone}
Sample $T=O(\frac{1}{\epsilon^6}\log^2{n}\log{\log{n}})$ points from \samp\\
Get the empirical distribution $\Tilde{D}$ over $\ell$\\
Obtain an additional sample $S=O(\frac{n\log{n}}{\epsilon^8})$ from \samp\\
Let $J$ be the set of intervals where the number of samples (in each interval $I_j$) is $|S_{I_j}|\geq O(|I_j|/\epsilon^4)$ and $S_{I_j}$ can be partitioned into two disjoint sets $S_1$ and $S_2$ such that $|S_1||S_2|\geq O(|I_j|/\epsilon^8)$ and
$\frac{coll(S_1,S_2)}{|S_1||S_2|} \geq(\frac{1+\epsilon^2/64}{|I_j|}+\frac{\epsilon^2}{16})  $\\
\If{$\sum_{I_j\in J} \Tilde{D}(I_j)>5\epsilon$}{\reject and Exit}\label{line_60}
Define a flat distribution $(\Tilde{D}^f)^{\mathcal{I}}$ over $[n]$\\
Output \accept if $(\Tilde{D}^f)^{\mathcal{I}}$ is $2\epsilon$-close to a monotone distribution. Otherwise output \reject \label{algo_line2}
\end{algorithm}
\begin{theorem} \label{bipartite}
The algorithm {\sc bipartite collision monotonicity} uses $O(\frac{n\log{n}}{\epsilon^8})$ \samp queries and outputs \accept with probability at least $2/3$ if $D$ is a monotone distribution and outputs \reject with probability at least $2/3$ when $D$ is not $7\epsilon$-close to monotone. 
\end{theorem}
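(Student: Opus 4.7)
The plan is to mirror the proof of Theorem \ref{collision_monotone}, substituting the bipartite collision machinery (Lemma \ref{basic_0}, Lemma \ref{lemmaa_10}) for the pairwise collision machinery (Lemma \ref{basic}, Lemma \ref{lemmaa_1}), and using Lemma \ref{lem:chernoff} instead of an additive Chernoff bound to argue that the enlarged sample size $S=O(n\log n/\epsilon^8)$ supplies enough samples in every heavy interval. The first step is to call an interval $I_j$ \emph{heavy} if $D(I_j) \ge \epsilon^2/\log n$. By Lemma \ref{lem:chernoff}, with probability at least $9/10$ every heavy $I_j$ receives $|S_{I_j}| \ge O(|I_j|/\epsilon^4)$ samples; splitting $S_{I_j}$ into two halves $S_1, S_2$ of equal size then gives $|S_1||S_2| = |S_{I_j}|^2/4 \ge O(|I_j|^2/\epsilon^8) \ge O(|S_{I_j}|/\epsilon^4)$, which meets the hypothesis of Lemma \ref{basic_0}. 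I will also invoke Lemma \ref{lem:folklore-learn} with $T=O(\log^2 n\log\log n/\epsilon^6)$ samples to ensure $|D(I_j) - \tilde D(I_j)| \le \epsilon/\ell = \epsilon^3/\log n$ for every $j$ simultaneously.

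For completeness, assume $D$ is monotone. Birge's oblivious decomposition gives $\sum_j D(I_j)\, d_{TV}(D_{I_j},\mathcal U_{I_j}) \le \epsilon^2$, so the set $J' = \{I_j : d_{TV}(D_{I_j},\mathcal U_{I_j}) > \epsilon/4\}$ satisfies $\sum_{I_j \in J'} D(I_j) \le 4\epsilon$. By Lemma \ref{lemmaa_10}, the algorithmic set $J$ (intervals with $|S_{I_j}| \ge O(|I_j|/\epsilon^4)$ that can be split with $|S_1||S_2| \ge O(|I_j|/\epsilon^8)$ and bipartite collision density above $(1+\epsilon^2/64)/|I_j| + \epsilon^2/16$) is a subset of $J'$, hence $\sum_{I_j \in J} D(I_j) \le 4\epsilon$. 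Combining with the folklore approximation yields $\sum_{I_j \in J} \tilde D(I_j) \le 4\epsilon + \ell\cdot(\epsilon^3/\log n) \le 5\epsilon$, so the algorithm does not \reject at Line \ref{line_60}. Finally, Birge's bound $d_{TV}(D,(D^f)^{\mathcal I}) \le \epsilon^2$ plus Lemma \ref{lem:folklore-learn} give $d_{TV}(D,(\tilde D^f)^{\mathcal I}) < 2\epsilon$, so $(\tilde D^f)^{\mathcal I}$ is $2\epsilon$-close to the monotone distribution $D$, and the algorithm \accept{}s at Line \ref{algo_line2}.

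For soundness I argue the contrapositive: if the algorithm \accept{}s, then $D$ is $7\epsilon$-close to monotone. The acceptance condition yields $\sum_{I_j\in J} \tilde D(I_j) \le 5\epsilon$. Split the TV distance between $D$ and its flattening as
\[
d_{TV}(D,(D^f)^{\mathcal I}) = \tfrac12\sum_{I_j}D(I_j)\, d_{TV}(D_{I_j},\mathcal U_{I_j}),
\]
and separate the sum into intervals with $d_{TV}(D_{I_j},\mathcal U_{I_j}) > \epsilon/4$ and those with $d_{TV}(D_{I_j},\mathcal U_{I_j}) \le \epsilon/4$. By Lemma \ref{lemmaa_10}, every heavy interval in the first group lies in $J$, and the tail of light (non-heavy) intervals contributes at most $\ell\cdot(\epsilon^2/\log n) = O(\epsilon^2) \le \epsilon/4$ total weight; the intervals in $J$ contribute at most $\sum_{I_j\in J}(\tilde D(I_j)+\epsilon^3/\log n) \le 6\epsilon$. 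Consequently $d_{TV}(D,(D^f)^{\mathcal I}) < 4\epsilon$. Using Lemma \ref{lem:folklore-learn} to get $d_{TV}((D^f)^{\mathcal I},(\tilde D^f)^{\mathcal I}) < \epsilon$ and the acceptance guarantee that $(\tilde D^f)^{\mathcal I}$ is $2\epsilon$-close to some monotone $M$, two applications of the triangle inequality give $d_{TV}(D,M) < 7\epsilon$.

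The main obstacle is not any single deduction but the bookkeeping that ensures the bipartite collision threshold, the heaviness threshold $\epsilon^2/\log n$, and the folklore error $\epsilon/\ell$ all line up with the Chernoff bound of Lemma \ref{lem:chernoff} so that a single sample budget of $O(n\log n/\epsilon^8)$ suffices; this is the reason the sample complexity blows up from $\sqrt n$ to $n$ relative to Theorem \ref{collision_monotone_tester}. A union bound over the $O(1)$ failure events (Lemma \ref{lem:chernoff}, Lemma \ref{lem:folklore-learn}, and a union bound of Lemma \ref{basic_0} over the $\ell = O(\log n/\epsilon^2)$ intervals, which only forces a constant-factor increase inside the $O(\cdot)$) completes the $2/3$ success probability.
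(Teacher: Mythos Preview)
Your proposal follows the paper's own proof essentially step for step: the same Chernoff/Lemma~\ref{lem:chernoff} argument for heavy intervals, the same completeness chain $J\subseteq\hat J\subseteq J'$ via Lemma~\ref{lemmaa_10} combined with the empirical estimate from Lemma~\ref{lem:folklore-learn}, and the same contrapositive soundness that bounds $d_{TV}(D,(D^f)^{\mathcal I})$ and then applies two triangle inequalities with the monotone witness $M$. The paper's write-up is terser in the soundness step (it simply refers back to the computation in Theorem~\ref{collision_monotone}), but the logic is identical.

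One slip to flag: in your soundness paragraph you bound the contribution of the light (non-heavy) intervals by $\ell\cdot(\epsilon^2/\log n)=O(\epsilon^2)$, but with $\ell=O(\log n/\epsilon^2)$ this product is $O(1)$, not $O(\epsilon^2)$. The paper's proof does not attempt this bookkeeping at all---it tacitly identifies the set $\{I_j:d_{TV}(D_{I_j},\mathcal U_{I_j})\ge\epsilon/4\}$ with $J$ in the soundness calculation---so your extra care here is not a divergence from the paper's approach, but the arithmetic as written does not close the gap you intended it to.
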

\begin{proof}
While sampling $O(n\log{n}/\epsilon^8)$ points according to $D$, an application of Chernoff bound shows %[complete proof is added in Appendix~\ref{sec:mon-bipcoliision}] %
that the intervals with $D(I_j)\geq \epsilon^2/\log{n}$ will contain at least $S_{I_j}=O(|I_j|/\epsilon^4)$ points. There will be at least one such interval with $D(I_j)\geq \epsilon^2/\log{n}$ as there are $O(\log{n}/\epsilon^2)$ partitions.

\textbf{Completeness :} Let $D$ be monotone. By oblivious partitioning with parameter $\epsilon_1=\epsilon^2$, we have $\sum_{j=1}^{\ell} \sum_{x\in I_j}|D(x)-\frac{D(I_j)}{|I_j|}|\leq \epsilon_1$ which implies
        $\sum_{j=1}^{\ell} D(I_j) d_{TV}(D_{I_j},\mathcal{U}_{I_j})\leq \epsilon^2$. Let $J'$ be the set of intervals where for all $I_j$, $d_{TV}(D_{I_j},\mathcal{U}_{I_j})>\frac{\epsilon}{4}$, then $\sum_{I_j\in J'} D(I_j) \leq 4\epsilon$.

Let $\hat{J}$ is the set of intervals where $|S_1||S_2|\geq O(|S_{I_j}|/\epsilon^4)$ and  $d_{TV}(D_{I_j},\mathcal{U}_{I_j})  >\frac{\epsilon}{4}$. So, $\hat{J}\subseteq J'$. From Lemma \ref{lemmaa_10}, we know $\hat{J}$ is the set of intervals where $\frac{coll(S_1,S_2)}{|S_1||S_2|}>\frac{1}{|I_j|}+\frac{\epsilon^2}{64|I_j|}$. Let $J$ be the set of intervals where $|S_1||S_2|\geq O(|S_{I_j}|/\epsilon^4)$ and $\frac{coll(S_1,S_2)}{|S_1||S_2|}>\frac{1+\epsilon^2/64}{|I_j|}+\frac{\epsilon^2}{16} $, then $J\subseteq \hat{J}\subseteq J'$. We know $\sum_{I_j\in J'} D(I_j) \leq 4\epsilon$. So, we can conclude that $\sum_{I_j\in J} D(I_j) \leq 4\epsilon$.

When $d_{TV}(D_{I_j},\mathcal{U}_{I_j})\leq\frac{\epsilon}{4}$, the algorithm does not sum over such $D(I_j)$ even if $|S_1||S_2|\geq O(|S_{I_j}|/\epsilon^4)$. This is because by Lemma \ref{lemmaa_10} we know $\frac{coll(S_1,S_2)}{|S_1||S_2|}\leq\frac{1+\epsilon^2/64}{|I_j|}+\frac{\epsilon^2}{16}$. As a result, we can say that when $D$ is monotone $\sum_{I_j\in J} D(I_j) \leq 4\epsilon$. 

%Now, the rest of the proof is similar to that of Theorem \ref{collision_monotone} presented in Appendix~\ref{sec:mon-oblivious}.%
We use the empirical distribution $\tilde{D}$ and deduce that $\sum_{I_j\in J}\Tilde{D}(I_j)\leq 5\epsilon $. Hence, the algorithm will NOT output \reject in Step \ref{line_60}.
%Following the proof of Theorem \ref{collision_monotone},
We also conclude as $D$ is monotone, the flattened distribution $(\Tilde{D}^f)^{\mathcal{I}}$ is $2\epsilon$ close to monotone and the algorithm will output \accept in Step \ref{algo_line2}.

\textbf{Soundness :} We will prove the contrapositive of the statement. Let the algorithm outputs \accept, then we need to prove that $D$ is $7\epsilon$ close to monotone.

As the algorithm accepts, $\sum_{I_j\in J}\Tilde{D}(I_j)\leq 5\epsilon$, for the set of intervals $J$ where 
$|S_1||S_2|\geq O(|S_{I_j}|/\epsilon^4)$ and $\frac{coll(S_1,S_2)}{|S_1||S_2|}\geq(\frac{1+\epsilon^2/64}{|I_j|}+\frac{\epsilon^2}{16})$. For all such intervals $I_j\in J$ by Lemma \ref{basic_0}, we obtain $d_{TV}(D_{I_j},\mathcal{U}_{I_j}) \geq \frac{\epsilon}{4} $.

Now, we calculate the distance between $D$ and the flattened distribution and we get $ d_{TV}(D,(D^f)^{\mathcal{I}})<4\epsilon $ %[The proof can be found in proof of Theorem \ref{collision_monotone} in Appendix~\ref{sec:mon-oblivious}].

We also know from Lemma \ref{lem:folklore-learn}, $d_{TV}((D^f)^{\mathcal{I}},(\Tilde{D}^f)^{\mathcal{I}})<\epsilon$. By triangle inequality, $d_{TV}(D,(\Tilde{D}^f)^{\mathcal{I}}) <5\epsilon $. As the algorithm outputs accept, there exists a monotone distribution $M$, such that $d_{TV}(\Tilde{D}^f)^{\mathcal{I}}, M)\leq 2\epsilon$. By triangle inequality, we have $d_{TV}(D,M)<7\epsilon$.
\end{proof}
\subsection{Testing Monotonicity in the streaming model}
In this section, we present the monotonicity tester in the streaming settings. A set of samples is drawn according to the standard access model that is revealed online one at a time. The task is to test whether an unknown distribution is a monotone or $\epsilon$ far from monotonicity. Also, there is a memory bound of $m$ bits. We use the notion of bipartite collision monotonicity tester \ref{bipartite_tester} discussed in the previous section. For satisfying the memory bound, we store an optimal number of samples for such intervals and count bipartite collision between the stored samples and the remaining ones. We present the algorithm below,
\begin{algorithm}
\caption{Streaming Monotonicity}
\SetKwInOut{Input}{Input}
        \SetKwInOut{Output}{Output}
\Input{\samp access to $D$, $\ell=O(\frac{1}{\epsilon_1}\log{(n\epsilon_1+1)})$ oblivious partitions $\mathcal{I}=\{I_1,..,I_{\ell}\}$ and error parameter $\epsilon,\epsilon_1\in (0,1]$, where $\epsilon_1=\epsilon^2$, memory requirement $\log^2{n}/\epsilon^6\leq m\leq \sqrt{n}/\epsilon^3$}
Sample $T=\tilde{O}(\frac{1}{\epsilon^6}\log^2{n})$ points from \samp\\
Get the empirical distribution $\Tilde{D}$ over $\ell$\\
Obtain an additional sample $S=O(\frac{n\log{n}}{m\epsilon^{8}})$ from \samp\\
For each interval store the first set of $S_1=O(\frac{m\epsilon^2}{\log^2{n}})$ samples in memory\\
Let $J$ be the set of intervals, where for the next set of $S_2=O(\frac{n}{m\epsilon^{4}})$ points, the following condition is satisfied,
$\frac{coll(S_1,S_2)}{|S_1||S_2|} \geq(\frac{1+\epsilon^2/64}{|I_j|}+\frac{\epsilon^2}{16})  $\\
Check \If{$\sum_{I_j\in J} \Tilde{D}(I_j)>5\epsilon$}{\reject and Exit}\label{line_61}
Define a flat distribution $(\Tilde{D}^f)^{\mathcal{I}}$ over $[n]$\\
Output \accept if $(\Tilde{D}^f)^{\mathcal{I}}$ is $2\epsilon$-close to a monotone distribution. Otherwise output \reject\label{algo_line21}
\end{algorithm}
\begin{theorem}
The algorithm {\sc streaming monotonicity} uses $O(\frac{n\log{n}}{m\epsilon^{8}})$ \samp queries and outputs \accept with probability at least $2/3$ if $D$ is a monotone distribution and outputs \reject with probability at least $2/3$ when $D$ is not $7\epsilon$ close to monotone. It uses $O(m)$ bits of memory for $\log^2{n}/\epsilon^6\leq m\leq \sqrt{n}/\epsilon^3$.
\end{theorem}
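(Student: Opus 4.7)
The plan is to reduce the analysis to Theorem \ref{bipartite} (the bipartite collision monotonicity tester) by verifying that the per-interval split used by the streaming version meets the hypothesis of Lemma \ref{basic_0}. Concretely, the algorithm processes the stream of $S = O(n\log n/(m\epsilon^8))$ \samp points and, for each oblivious interval $I_j$, stores only the first $S_1 = O(m\epsilon^2/\log^2 n)$ samples landing in $I_j$ while using the subsequent $S_2 = O(n/(m\epsilon^4))$ samples in $I_j$ to count bipartite collisions against the stored ones. Because the collision estimator used per interval has exactly the form analyzed in Lemma \ref{basic_0} and Lemma \ref{lemmaa_10}, the remaining correctness argument can mirror that of Theorem \ref{bipartite} once the parameters are shown to line up.

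First, I would repeat the Chernoff argument of Lemma \ref{lem:chernoff} (with the new value of $|S|$) to show that every interval with $D(I_j) \geq \epsilon^2/\log n$ contains enough samples to populate both the $S_1$-prefix and the subsequent $S_2$-suffix with high probability, and that there is at least one such interval. Next comes the crucial sanity check: Lemma \ref{basic_0} requires $|S_1|\cdot|S_2| \geq \Omega(|S_{I_j}|/\epsilon^4)$. Substituting gives $|S_1|\cdot|S_2| = \Omega(n/(\epsilon^2\log^2 n))$ while $|S_{I_j}|/\epsilon^4 = O(n/(m\epsilon^{10}))$, and these agree precisely when $m \geq \log^2 n/\epsilon^6$, which is exactly the stated lower bound on the memory. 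The upper bound $m \leq \sqrt{n}/\epsilon^3$ is what guarantees $S_1 \leq S_2$ so that a genuine bipartite split exists after the stored prefix.

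For the memory requirement, I would account separately for the stored samples, the empirical distribution, and the counters. Storing $S_1 = O(m\epsilon^2/\log^2 n)$ sample indices per interval, each of size $O(\log n)$ bits, across the $\ell = O(\log n/\epsilon^2)$ oblivious intervals costs $\ell \cdot S_1 \cdot \log n = O(m)$ bits. Maintaining $\tilde D$ as frequency counts over $\ell$ cells, together with the per-interval collision counters, contributes only lower-order additive terms that fit within $O(m)$ bits using the assumed lower bound on $m$.

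With the bipartite-collision estimate validated and the memory accounted for, correctness imports from Theorem \ref{bipartite} essentially unchanged. Completeness: if $D$ is monotone, then oblivious decomposition and Lemma \ref{lemmaa_10} give $\sum_{I_j \in J} D(I_j) \leq 4\epsilon$, hence $\sum_{I_j \in J} \tilde D(I_j) \leq 5\epsilon$ by Lemma \ref{lem:folklore-learn}, and $(\tilde D^f)^{\mathcal{I}}$ is $2\epsilon$-close to $D$ by the triangle inequality, so the algorithm accepts. Soundness: if the algorithm accepts, Lemma \ref{basic_0} forces $d_{TV}(D_{I_j},\mathcal{U}_{I_j}) \geq \epsilon/4$ on all $I_j \in J$, which together with $\sum_{I_j \in J}\tilde D(I_j) \leq 5\epsilon$ yields $d_{TV}(D,(D^f)^{\mathcal{I}}) < 4\epsilon$; combining with $d_{TV}((D^f)^{\mathcal{I}},(\tilde D^f)^{\mathcal{I}}) < \epsilon$ and $d_{TV}((\tilde D^f)^{\mathcal{I}},M) \leq 2\epsilon$ for some monotone $M$ gives $d_{TV}(D,M) < 7\epsilon$ by the triangle inequality. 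The main obstacle is the tight bookkeeping of $S_1$, $S_2$, $|S|$, and $\ell$: the admissible window $\log^2 n/\epsilon^6 \leq m \leq \sqrt{n}/\epsilon^3$ is exactly the region where the variance bound of Lemma \ref{basic_0} and the memory budget can be satisfied simultaneously, so the entire proof hinges on this numerical matching.
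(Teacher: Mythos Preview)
Your proposal is correct and follows essentially the same route as the paper: verify via a Chernoff argument that heavy intervals receive enough samples, check that the choices $|S_1|=O(m\epsilon^2/\log^2 n)$ and $|S_2|=O(n/(m\epsilon^4))$ satisfy the hypothesis $|S_1|\cdot|S_2|\ge \Omega(|S_{I_j}|/\epsilon^4)$ of Lemma~\ref{basic_0} exactly in the window $\log^2 n/\epsilon^6\le m\le \sqrt{n}/\epsilon^3$, account for the $O(m)$ bits of memory across the $\ell$ intervals and counters, and then defer correctness to Theorem~\ref{bipartite}. One small arithmetic slip: with $|S_{I_j}|\approx |S_2|=O(n/(m\epsilon^4))$ you should get $|S_{I_j}|/\epsilon^4=O(n/(m\epsilon^8))$ rather than $O(n/(m\epsilon^{10}))$, which is precisely what makes the threshold come out to $m\ge \log^2 n/\epsilon^6$ as you (correctly) conclude.
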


\begin{proof}
As there are $O(\frac{\log{n}}{\epsilon^2})$ partitions, there will be at least one interval with $D(I_j)\geq \frac{\epsilon^2}{\log{n}} $. An application of Chernoff bound shows that with high probability all such intervals contain $|S_{I_j}|=O(n/m\epsilon^4)$ points. In the algorithm, we divide $S_{I_j}$ into two sets $S_1$ and $S_2$ such that for $\log^2{n}/\epsilon^6\leq m\leq \sqrt{n}/\epsilon^3$, $|S_1|+|S_2|=O(m\epsilon^2/\log^2{n})+O(n/m\epsilon^4)=O(n/m\epsilon^4)$ and $|S_1|.|S_2|= O(n/\epsilon^2\log^2{n})\geq O(n/m\epsilon^8)=(1/\epsilon^4)|S_{I_j}|$. (The inequality is obtained by the fact that $m\geq \log^2{n}/\epsilon^6$). This implies that the condition of Lemma \ref{basic_0} is satisfied by these intervals and they are eligible for estimating the collision probability using bipartite collision count. The rest of the analysis follows from Theorem \ref{bipartite}.

The algorithm uses $O(m)$ bits of memory for implementation in a single-pass streaming model. For obtaining the empirical distribution $\Tilde{D}$, we will use one counter for each of the $\ell$ intervals. When a sample $x$ comes, if $x\in I_j$, the corresponding counter for $I_j$ will be incremented by $1$. In the end, the counters will give the number of samples that fall in each of the intervals, and using those values we can explicitly obtain the distribution $\Tilde{D}$. Each counter takes $O(\log{n})$ bits of memory. There are total $\ell=(\log{n}/\epsilon^2)$ counters. So, the memory requirement for this step is $O(\log^2{n}/\epsilon^2)<m$ bits. Also, using the distribution $\tilde{D}$ we can obtain the flattened distribution $(\Tilde{D}^f)^{\mathcal{I}}$ without storing it explicitly. Hence, the Line \ref{algo_line21} does not require any extra space for checking whether $(\Tilde{D}^f)^{\mathcal{I}}$ is $2\epsilon$ close to monotone or not.
For storing the first set of $S_1=O(m\epsilon^2/\log^2{n})$ samples for an interval will take $O(m\epsilon^2/\log{n})$ bits of memory. As we are storing $S_1$ samples for all $\ell=O(\log{n}/\epsilon^2)$ intervals, it will take total $O(m)$ bits of memory. 
\end{proof}
\textbf{Remark}
If the input to the algorithm is a monotone distribution, then the streaming algorithm computes a distribution over the intervals $\mathcal{I}$ such that the flattening is close to a monotone distribution. Since the number of intervals in the partition is $O(\log n/\epsilon)$, the explicit description of the distribution can be succinctly stored.

We would also like to point out that the final step in the algorithm requires testing if the learnt distribution is close to some monotone distribution, and we have not explicitly bounded the space required for that.
%\end{remark}
\subsubsection{Lower bound for testing monotonicity} \label{subsec:lower-bound}
In this section, we prove the lower bound for monotonicity testing problem in the streaming settings. We start with the discussion of the uniformity testing lower bound by (\cite{DGKR19}) in the streaming model and later we show how the same lower bound is applicable in our case.
\begin{theorem}[Uniformity testing lower bound in streaming framework \cite{DGKR19}]
Let $\mathcal{A}$ be an algorithm which tests if a distribution $D$ is uniform versus $\epsilon$-far from uniform with error probability $1/3$, can access the samples in a single-pass streaming fashion using $m$ bits of memory and $S$ samples, then $S.m=\Omega(n/\epsilon^2)$. Furthermore, if $S<n^{0.9}$ and $m>S^2/n^{0.9}$ then $S\cdot m=\Omega(n\log{n}/\epsilon^4)$.
\label{thm:uni-lowerboud}
\end{theorem}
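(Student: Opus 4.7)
The plan is to prove the theorem by the standard Paninski hard-family reduction, augmented with an information-theoretic accounting of the bounded memory state. First I would set up the two-point lower bound. For $n$ even, let $U$ denote the uniform distribution on $[n]$ and, for each $z \in \{-1,+1\}^{n/2}$, define $P_z$ by $P_z(2i-1) = (1+\epsilon z_i)/n$ and $P_z(2i) = (1-\epsilon z_i)/n$; every $P_z$ is $\Omega(\epsilon)$-far from $U$ in total variation. Let $Z$ be uniform on $\{-1,+1\}^{n/2}$ and let $B \in \{0,1\}$ be a uniformly random bit, with the input distribution being $U$ when $B=0$ and $P_Z$ when $B=1$. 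Since $\mathcal{A}$ must output $B$ correctly with probability at least $2/3$, Fano's inequality forces its state $M_S \in \{0,1\}^m$ at the end of the stream to satisfy $I(M_S; B) = \Omega(1)$.

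For the first bound $Sm = \Omega(n/\epsilon^2)$, the plan is to track the per-step growth of mutual information. Viewing the streaming tester as a time-inhomogeneous Markov chain on at most $2^m$ states, I would bound $I(M_t; B \mid M_{t-1})$ using a chi-squared / Hellinger estimate: a single sample contributes $\Theta(\chi^2(P_z\|U)) = \Theta(\epsilon^2/n)$ to distinguishability, and funneling this through an $m$-bit memory contributes at most $O(m\epsilon^2/n)$ to $I(M_t;B)-I(M_{t-1};B)$ after a data-processing step. Summing over $S$ steps yields $I(M_S; B) = O(Sm\epsilon^2/n)$, which combined with $I(M_S;B) = \Omega(1)$ gives $Sm = \Omega(n/\epsilon^2)$.

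For the refined bound $Sm = \Omega(n\log n/\epsilon^4)$ under $S < n^{0.9}$ and $m > S^2/n^{0.9}$, I would strengthen the previous analysis by a scale-by-scale bookkeeping. The constraint $S < n^{0.9}$ puts us well below the birthday threshold, so the tester cannot rely on plain pairwise collisions and must extract information from a more structured statistic; the constraint $m > S^2/n^{0.9}$ ensures the memory is just large enough to maintain the bipartite-collision-style counters from (\cite{DGKR19}), so the goal is to show such strategies are essentially optimal. Concretely I would decompose the empirical signal into $O(\log n)$ logarithmic scales of frequency, bound the per-scale contribution to $I(M_S; Z)$ by a tight Poissonized chi-squared estimate in each scale, and union-bound across scales to produce the extra $\log n/\epsilon^2$ factor over the basic bound.

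The hard part will be the second bound. The first balance argument is essentially a clean information-capacity calculation, but the $\log n$ improvement in the restricted regime demands a much more delicate analysis: one must verify that the per-scale information gain cannot be amplified beyond the claimed rate, which in (\cite{DGKR19}) is carried out via a non-trivial Poisson-sampling reduction together with a chi-squared contraction argument on the memory-update kernel. Reproducing this faithfully, and in particular checking the precise parameter dependencies that place us inside the regime $m > S^2/n^{0.9}$, is the main technical obstacle; everything else is standard Paninski plus an information-capacity accounting.
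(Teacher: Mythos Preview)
This theorem is not proved in the paper at all: it is quoted verbatim as a result of Diakonikolas et al.\ \cite{DGKR19} and used as a black box. The paper's only contribution in Section~\ref{subsec:lower-bound} is to observe that the \emph{No} instance in the Paninski family (which you correctly identify) is also $\Omega(\epsilon)$-far from monotone, so that the same lower bound transfers to monotonicity testing. Consequently there is no ``paper's own proof'' of Theorem~\ref{thm:uni-lowerboud} to compare your proposal against.

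That said, your sketch is a plausible outline of what \cite{DGKR19} actually does for the first bound: the hard family, the information-capacity chain $I(M_t;B)-I(M_{t-1};B)=O(m\epsilon^2/n)$, and Fano. For the refined $\log n/\epsilon^4$ bound, your ``scale-by-scale bookkeeping'' description is vague and does not match the mechanism in \cite{DGKR19}; the extra $\log n$ there does not come from a logarithmic frequency decomposition but from a more careful accounting of how much information about the hidden sign vector $Z$ (not just the bit $B$) can be accumulated in $m$ bits of memory, exploiting that in the sub-birthday regime the per-sample information about any individual $Z_i$ is extremely small. If you actually need to reproduce this proof, you should consult \cite{DGKR19} directly rather than rely on the heuristic you wrote; for the purposes of the present paper, however, no proof is required.
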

The proof of the above lemma proceeds by choosing a random bit $X\in\{0,1\}$, where $X=0$ defines a \emph{Yes} instance (uniform distribution) and $X=1$ defines a \emph{No} instance ($\epsilon$-far from uniform) and calculating the mutual information between $X$ and the bits stored in the memory after seeing $S$ samples. In their formulation, the \emph{Yes} instance is a uniform distribution over $2n$ and the \emph{No} instance is obtained by pairing $(2i-1,2i)$ indices together and assigning values by tossing an $\epsilon$-biased coin. In particular, the \emph{No} distribution is obtained as follows, pair the indices as $\{1,2\}, \{3,4\},...,\{2n-1,2n\}$. Pick a bin $\{2i-1,2i\}$ and for each bin a random bit $Y_i\in \{\pm 1\}$ to assign the probabilities as,

$$
(D(2i-1),D(2i))=
    \begin{cases}
      \frac{1+\epsilon}{2n}, \frac{1-\epsilon}{2n}& \text{if}\ Y_i=1 \\
           \frac{1-\epsilon}{2n}, \frac{1+\epsilon}{2n}& \text{if}\ Y_i=-1 
    \end{cases}
$$

It is straightforward that the \emph{Yes} distribution is a monotone distribution as well. We show that any distribution $D$ from the \emph{No} instance set is $O(\epsilon)$-far from monotonicity. We start by choosing an $\alpha\in(0,\epsilon/4)$ and defining a set of partitions $\mathcal{I}=\{I_1,...,I_{\ell}\}$ such that $|I_j|=\lfloor (1+\alpha)^j\rfloor$ for $1\leq j\leq \ell$. Let $(D^f)^{\mathcal{I}}$ be the flattened distribution corresponding to $\mathcal{I}$. We use the following lemma from (\cite{Can15}) which reflects the fact if $D$ is far from $(D^f)^{\mathcal{I}}$, then $D$ is also far from being monotone. In particular, we define the lemma as follows,
\begin{lemma}[\cite{Can15}] \label{lem:mon-far} Let $D$ be a distribution over domain $[n]$ and $\mathcal{I}=\{I_1,...,I_{\ell}\}$ are the set of partitions defined obliviously with respect to a parameter $\alpha\in (0,1)$ where $\ell=O(\frac{1}{\alpha}\log{n\alpha}) $ and $|I_j|=\lfloor (1+\alpha)^j\rfloor$. If $D$ is $\epsilon$-close to monotone non-increasing, then $d_{TV}(D,(D^f)^{\mathcal{I}})\leq 2\epsilon+\alpha$ where $(D^f)^{\mathcal{I}}$ is the flattened distribution of $D$ with respect to $\mathcal{I}$.
\end{lemma}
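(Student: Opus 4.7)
The plan is to prove the bound by inserting a monotone intermediate distribution and applying the triangle inequality, using three independently standard facts: closeness of $D$ to some monotone $M$, Birge's oblivious decomposition applied to $M$, and the fact that flattening with respect to a fixed partition is a non-expansion in total variation distance.

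By hypothesis there exists a monotone non-increasing distribution $M$ over $[n]$ with $d_{TV}(D,M)\le\epsilon$. Applying the triangle inequality through the two intermediate points $M$ and $(M^f)^{\mathcal{I}}$, I would write
\begin{align*}
d_{TV}(D,(D^f)^{\mathcal{I}}) \;\le\; d_{TV}(D,M) \;+\; d_{TV}(M,(M^f)^{\mathcal{I}}) \;+\; d_{TV}((M^f)^{\mathcal{I}},(D^f)^{\mathcal{I}}).
\end{align*}
The first summand is at most $\epsilon$ by assumption. The second summand is at most $\alpha$ because $M$ is monotone non-increasing and the oblivious partition $\mathcal{I}$ with $|I_j|=\lfloor(1+\alpha)^j\rfloor$ makes $M$ be $(\alpha,\ell)$-flattened (this is exactly the oblivious partitioning lemma of Birge stated earlier in the paper). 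So the only remaining ingredient is to bound the third summand by $\epsilon$.

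For the third summand I would prove the general fact that for any fixed interval partition $\mathcal{I}$ and any two distributions $P,Q$ on $[n]$, one has $d_{TV}((P^f)^{\mathcal{I}},(Q^f)^{\mathcal{I}}) \le d_{TV}(P,Q)$. The argument is a short calculation: on each interval $I_j$ the flattened distributions are constant with masses $P(I_j)/|I_j|$ and $Q(I_j)/|I_j|$, so
\begin{align*}
2\,d_{TV}((P^f)^{\mathcal{I}},(Q^f)^{\mathcal{I}})
&= \sum_{j=1}^{\ell}\sum_{i\in I_j}\left|\tfrac{P(I_j)}{|I_j|}-\tfrac{Q(I_j)}{|I_j|}\right|
= \sum_{j=1}^{\ell}|P(I_j)-Q(I_j)| \\
&\le \sum_{j=1}^{\ell}\sum_{i\in I_j}|P(i)-Q(i)| = 2\,d_{TV}(P,Q),
\end{align*}
where the inequality is just the triangle inequality applied inside each interval. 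Instantiating with $P=M$, $Q=D$ gives the third summand $\le\epsilon$, and combining the three pieces yields $d_{TV}(D,(D^f)^{\mathcal{I}})\le 2\epsilon+\alpha$.

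No step is really an obstacle; the only thing to be careful about is the non-expansion of flattening, which one might at first worry could blow up the distance. The short calculation above settles it via the triangle inequality inside each interval. The rest is purely invoking the already-stated oblivious decomposition lemma and the hypothesis $d_{TV}(D,M)\le\epsilon$, so the proof is essentially a three-line triangle inequality plus one short auxiliary observation.
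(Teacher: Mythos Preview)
Your proof is correct: the triangle-inequality chain through a monotone $M$ and its flattening, combined with the non-expansion of the flattening map in total variation, cleanly yields the bound $2\epsilon+\alpha$. The paper itself does not supply a proof of this lemma but merely cites it from \cite{Can15}, so there is no in-paper argument to compare against; your argument is the standard one and is exactly what one would expect the cited proof to be.
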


Let, $D$ be a distribution chosen randomly from the \emph{No} instance set. We have the following observation,
\begin{lemma} \label{lem:odd-interval}
    Let $\mathcal{I}=\{I_1,...,I_{\ell}\}$ be the oblivious partitions of $D$ with parameter $\alpha$ such that $|I_j|=\lfloor (1+\alpha)^j\rfloor$. 
    \begin{itemize}
    \item If $|I_j|$ is odd, then $\sum_{i\in I_j}|D(i)-\frac{D(I_j)}{|I_j|}|=\frac{\epsilon}{2n}(|I_j|-\frac{1}{|I_j|})$.
    \item If $|I_j|$ is even, then $\sum_{i\in I_j}|D(i)-\frac{D(I_j)}{|I_j|}|\geq\frac{\epsilon}{2n}(|I_j|-\frac{4}{|I_j|})$.
    \end{itemize}
\end{lemma}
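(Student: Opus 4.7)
The plan is to reduce $\sum_{i \in I_j}|D(i) - D(I_j)/|I_j||$ to a clean expression in the counts of ``high'' elements (those with mass $(1+\epsilon)/(2n)$) and ``low'' elements (those with mass $(1-\epsilon)/(2n)$) in $I_j$, and then to control those counts via the pairing structure of the \emph{No} instance.

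Set $k = |I_j|$, let $h$ be the number of high elements in $I_j$, and $l = k - h$ the number of low elements. A direct computation gives
\[
D(I_j) = \frac{k + (h-l)\epsilon}{2n}, \qquad \frac{D(I_j)}{|I_j|} = \frac{1}{2n} + \frac{(h-l)\epsilon}{2nk},
\]
so each high element is at distance $l\epsilon/(nk)$ from this average, and each low element is at distance $h\epsilon/(nk)$. Summing,
\[
\sum_{i \in I_j} \Bigl|D(i) - \tfrac{D(I_j)}{|I_j|}\Bigr| \;=\; h \cdot \frac{l\epsilon}{nk} + l \cdot \frac{h\epsilon}{nk} \;=\; \frac{2hl\,\epsilon}{nk}.
\]
The lemma therefore reduces to determining (or lower-bounding) $hl$ as a function of $k$.

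Next I would perform a short case analysis on the parity of the endpoints of $I_j = [a,b]$. Every pair $\{2p-1,2p\}$ fully contained in $I_j$ contributes exactly one high and one low, so any imbalance between $h$ and $l$ must come from half-pairs at the two boundaries. Checking parities: if $k$ is odd, exactly one of $a, b$ is even and the other odd, so $I_j$ contains $(k-1)/2$ complete pairs and exactly one half-pair element, forcing $|h-l| = 1$ and hence $hl = (k^2-1)/4$; if $k$ is even, either both endpoints align with pair boundaries (giving $0$ half-pairs and $h = l = k/2$) or both endpoints split a pair (giving $2$ half-pair elements which independently take values dictated by the corresponding $Y_i$'s), so $|h-l| \in \{0, 2\}$ and in particular $hl \geq (k/2-1)(k/2+1) = k^2/4 - 1$.

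Substituting back produces both claims: for odd $k$, the sum equals $(2\epsilon/(nk)) \cdot (k^2-1)/4 = (\epsilon/(2n))(k - 1/k)$ exactly; for even $k$, the sum is at least $(2\epsilon/(nk)) \cdot (k^2/4 - 1) = (\epsilon/(2n))(k - 4/k)$. The main subtlety is the boundary case analysis---specifically the observation that on the simplex $h + l = k$ the product $hl$ is minimized precisely when $|h-l|$ is as large as possible, so the lower bound in the even case is attained by the configuration in which both boundary half-pairs contribute elements of the same type.
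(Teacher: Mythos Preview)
Your proof is correct and follows essentially the same approach as the paper: a case analysis on how the pair structure $\{2i-1,2i\}$ interacts with the endpoints of $I_j$, distinguishing the odd case (one boundary half-pair, forcing $|h-l|=1$) from the even case (zero or two boundary half-pairs, giving $|h-l|\in\{0,2\}$). The only difference is presentational---you first derive the closed form $\sum_{i\in I_j}|D(i)-D(I_j)/|I_j|| = 2hl\,\epsilon/(nk)$ and then bound $hl$, whereas the paper computes each case directly; your version is a bit more streamlined but the underlying argument is the same.
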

\begin{proof}
    If $|I_j|$ is odd, it will contain $k$ (any positive integer) number of bin where each bin is of form $(2x-1,2x)$ and an extra index $i'$ which can have the probability weight either $\frac{1+\epsilon}{2n}$ or $\frac{1-\epsilon}{2n}$. Let $D(i')=\frac{1+\epsilon}{2n}$. In this case, $D(I_j)=\frac{|I_j|}{2n}+\frac{\epsilon}{2n}$. 
\begin{align*}
    \sum_{i\in I_j}|D(i)-\frac{D(I_j)}{|I_j|}|&=\sum_{i\in I_j}|D(i)-\frac{1}{2n}-\frac{\epsilon}{2n|I_j|}|\\
    &=\frac{\epsilon}{2n}(1-\frac{1}{|I_j|})\frac{|I_j|-1}{2}+\frac{\epsilon}{2n}(1+\frac{1}{|I_j|})\frac{|I_j|-1}{2}+\frac{\epsilon}{2n}(1-\frac{1}{|I_j|})\\
    &=\frac{\epsilon}{2n}(|I_j|-\frac{1}{|I_j|})
\end{align*}
    When $D(i')=\frac{1-\epsilon}{2n}$, similar calculation will follow.

    If $|I_j|$ is even, there are two possibilities, $(i)$ $I_j$ consists of $k$ (positive integer) bins. So, there will be equal number of $\frac{1+\epsilon}{2n}$ and $\frac{1-\epsilon}{2n}$ in $I_j$ and $D(I_j)=\frac{|I_j|}{2n}$. In this case, it is straightforward to observe that  $\sum_{i\in I_j} |D(i)-\frac{D(I_j)}{|I_j|}|=\frac{\epsilon |I_j|}{2n}$. Another case is, $(ii)$ $I_j$ contains $b_p,...,b_{p+k-1}$ bins completely and $i'\in b_{p-1}$, and $i''\in b_{p+k}$ where $D(i')=D(i'')$; the case when $D(i')\neq D(i'')$ will be similar to $(i)$ that we saw earlier. Let $D(i')=D(i'')=\frac{1+\epsilon}{2n}$. In this case, $D(I_j)=\frac{|I_j|}{2n}+\frac{\epsilon}{n}$. 

\begin{align*}
    \sum_{i\in I_j}|D(i)-\frac{D(I_j)}{|I_j|}|&=\sum_{i\in I_j}|D(i)-\frac{1}{2n}-\frac{\epsilon}{n|I_j|}|\\
    &=\frac{\epsilon}{2n}(1-\frac{1}{|I_j|})\frac{|I_j|-2}{2}+\frac{\epsilon}{2n}(1+\frac{1}{|I_j|})\frac{|I_j|-2}{2}+\frac{\epsilon}{n}(1-\frac{2}{|I_j|})\\
    &=\frac{\epsilon}{2n}(|I_j|-\frac{4}{|I_j|})
\end{align*}
Combining $(i)$ and $(ii)$, we say $\sum_{i\in I_j}|D(i)-\frac{D(I_j)}{|I_j|}|\geq\frac{\epsilon}{2n}(|I_j|-\frac{4}{|I_j|})$. Similar calculation will follow when $D(i')=D(i'')=\frac{1-\epsilon}{2n}$.

\end{proof}

In our case, we apply oblivious partitions on $D$ (chosen randomly from the \emph{No} set) with respect to the parameter $\alpha$ and conclude the following,
\begin{lemma}
    Let $D$ be a randomly chosen distribution from the No instance set, then $D$ is $\epsilon/4$-far from any monotone non-increasing distribution.
\end{lemma}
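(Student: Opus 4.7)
The plan is to contrapose Lemma~\ref{lem:mon-far}. Fix $\alpha\in(0,\epsilon/4)$ and let $\mathcal{I}=\{I_1,\ldots,I_\ell\}$ with $|I_j|=\lfloor(1+\alpha)^j\rfloor$ be the associated oblivious partition. By Lemma~\ref{lem:mon-far}, if $D$ is $\beta$-close to some monotone non-increasing distribution then $d_{TV}(D,(D^f)^{\mathcal{I}})\le 2\beta+\alpha$. So it suffices to exhibit a lower bound of the form $d_{TV}(D,(D^f)^{\mathcal{I}})\ge \epsilon/2-o(\alpha)$ for every $D$ drawn from the \emph{No} set: combined with Lemma~\ref{lem:mon-far} this forces $\beta\ge \epsilon/4-o(\alpha)$, which (up to the vanishing slack that can be absorbed into $\alpha$) is exactly the claimed $\epsilon/4$-farness.

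To produce that lower bound I would expand
\[
2\,d_{TV}(D,(D^f)^{\mathcal{I}}) \;=\; \sum_{j=1}^\ell\sum_{i\in I_j}\Bigl|D(i)-\tfrac{D(I_j)}{|I_j|}\Bigr|
\]
and apply Lemma~\ref{lem:odd-interval} termwise. The key observation is that the weaker even-case bound $\tfrac{\epsilon}{2n}\bigl(|I_j|-4/|I_j|\bigr)$ is also valid for every interval of odd size at least $2$, and is trivially zero for singletons, so it can be used uniformly with no case split. Since $\sum_j |I_j|=2n$, this simplifies to
\[
d_{TV}(D,(D^f)^{\mathcal{I}})\;\ge\;\frac{\epsilon}{2} \;-\; \frac{2\epsilon}{n}\sum_{j:\,|I_j|\ge 2}\frac{1}{|I_j|}.
\]
A short counting argument, noting that each integer value $k\ge 2$ is realised as $|I_j|$ for at most $O\!\bigl(1/(\alpha k)+1\bigr)$ indices $j$, yields $\sum_{|I_j|\ge 2}1/|I_j|=O(1/\alpha+\log(n\alpha))$. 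Consequently the error term is $O(\epsilon/(\alpha n)+\epsilon\log(n\alpha)/n)$, which becomes $o(\alpha)$ once $n$ is sufficiently large compared to $1/\alpha$---the regime in which the streaming lower bound of Theorem~\ref{thm:uni-lowerboud} operates.

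The main obstacle is the combinatorial bookkeeping around $\sum 1/|I_j|$ and the boundary behaviour that appears in Lemma~\ref{lem:odd-interval}. Both are essentially geometric in nature: the short intervals of size $1$ or $2$ account for only $O(1/\alpha)$ of the partition because Birge's decomposition grows exponentially, and intervals that straddle a boundary of some pair $\{2i-1,2i\}$ lose only an $O(1/|I_j|)$ fraction of their variation relative to the aligned case. Once these low-order corrections are absorbed, the bound $d_{TV}(D,(D^f)^{\mathcal{I}})\ge \epsilon/2-o(\alpha)$ follows directly from Lemma~\ref{lem:odd-interval}, and plugging into Lemma~\ref{lem:mon-far} completes the argument.
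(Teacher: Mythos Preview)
Your approach is essentially the paper's: expand $d_{TV}(D,(D^f)^{\mathcal{I}})$ interval by interval, invoke Lemma~\ref{lem:odd-interval}, and contrapose Lemma~\ref{lem:mon-far}. The only real difference is that you work harder than necessary on the error term: the paper simply uses the crude bound $\sum_j 1/|I_j|\le \ell$ together with $\ell<n/10$, whereas you count how many indices $j$ realise each interval length and sum the resulting $O(1/(\alpha k)+1)$ multiplicities --- correct, but not needed here.
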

\begin{proof}
    We calculate $d_{TV}(D,(D^f)^{\mathcal{I}})=\sum_{j=1}^{\ell}\sum_{i\in I_j}|D(i)-\frac{D(I_j)}{|I_j|}|=\sum_{|I_j|\text{is even}}\sum_{i\in I_j}|D(i)-\frac{D(I_j)}{|I_j|}|+ \sum_{|I_j| \text{is odd}}\sum_{i\in I_j}|D(i)-\frac{D(I_j)}{|I_j|}|$. Each odd length interval contributes $\sum_{i\in I_j}|D(i)-\frac{D(I_j)}{|I_j|}|=\frac{\epsilon}{2n}(|I_j|-\frac{1}{|I_j|})$ and each even length interval contributes $\sum_{i\in I_j}|D(i)-\frac{D(I_j)}{|I_j|}|\geq\frac{\epsilon}{2n}(|I_j|-\frac{4}{|I_j|})$ by using Lemma \ref{lem:odd-interval}. %(see the proof of Lemma \ref{lem:odd-interval} in Appendix~\ref{sec:mon-bipcoliision}). 
    
 Hence, simplifying the distance, we get,
    $d_{TV}(D,(D^f)^{\mathcal{I}})\geq\sum_{|I_j| \text{is even}}\frac{\epsilon}{2n}(|I_j|-\frac{4}{|I_j|})+\sum_{|I_j| \text{is odd}}\frac{\epsilon}{2n}(|I_j|-\frac{1}{|I_j|})\geq\frac{\epsilon}{2n}\sum_{I_j\in \ell} |I_j|-\frac{\epsilon}{2n}\big(\sum_{|I_j| \text{is even}}\frac{4}{|I_j|}+\sum_{|I_j| \text{is odd}}\frac{1}{|I_j|}\big)\geq \epsilon-\frac{\epsilon}{2n}.5\ell\geq\frac{3\epsilon}{4}>2\frac{\epsilon}{4}+\alpha$. The third inequality is obtained by using the fact that $|I_j|\geq 1$ and the fourth inequality considers $\ell<n/10$. Now, by using the contra-positive of the Lemma \ref{lem:mon-far}, $D$ is $\epsilon/4$-far from any monotone non-increasing distribution.
\end{proof}
Therefore, the uniformity testing lower bound from \cite{DGKR19} is applicable in our case for distinguishing monotone from $\epsilon/4$-far monotone. We formalize this in the theorem below.
\begin{theorem}
    Let $\mathcal{A}$ be an algorithm that tests if a distribution $D$ is monotone versus $\epsilon/4$-far from monotonicity with error probability $1/3$, can access the samples in a single-pass streaming fashion using $m$ bits of memory and $S$ samples, then $S.m=\Omega(n/\epsilon^2)$. Furthermore, if $n^{0.34}/\epsilon^{8/3}+n^{0.1}/\epsilon^4\leq m\leq \sqrt{n}/\epsilon^3$, then $S.m=\Omega(n\log{n}/\epsilon^4)$.
\end{theorem}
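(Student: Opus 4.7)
The plan is to reduce streaming monotonicity testing directly to the streaming uniformity lower bound stated in Theorem \ref{thm:uni-lowerboud}. All of the technical work has already been done in Section \ref{subsec:lower-bound}: the uniformity hard ensemble consists of a Yes distribution (uniform on $[2n]$, which is trivially monotone non-increasing) and a randomised No distribution built from the paired-bin construction, which the previous lemma shows is $\epsilon/4$-far from every monotone non-increasing distribution. Thus the same two ensembles serve as a hard instance for distinguishing monotone from $\epsilon/4$-far-from-monotone.

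Concretely, I would take an arbitrary single-pass streaming algorithm $\mathcal{A}$ that tests monotonicity versus $\epsilon/4$-farness with error $1/3$, using $S$ samples and $m$ bits of memory, and run it on a stream drawn either from the Yes or the No ensemble of Theorem \ref{thm:uni-lowerboud}. By the monotonicity guarantee, $\mathcal{A}$ accepts on Yes with probability at least $2/3$, and by the previous lemma it rejects on No with probability at least $2/3$. Hence $\mathcal{A}$ is a valid streaming distinguisher for the uniformity lower-bound instance with the same resources $(S,m)$, and Theorem \ref{thm:uni-lowerboud} immediately yields $S\cdot m=\Omega(n/\epsilon^2)$.

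For the second, stronger bound, I would invoke the conditional part of Theorem \ref{thm:uni-lowerboud}, whose hypotheses are $S<n^{0.9}$ and $m>S^2/n^{0.9}$. To obtain the conclusion $S\cdot m=\Omega(n\log n/\epsilon^4)$ in the stated regime, one argues by contradiction: if the target bound failed, then $S=o(n\log n/(m\epsilon^4))$, and plugging this into the two hypotheses reduces them to algebraic constraints on $m$. The upper bound $m\le \sqrt{n}/\epsilon^3$ forces $S=\tilde{\Omega}(\sqrt{n}/\epsilon)$ on any would-be counterexample, which is far below $n^{0.9}$; and the lower bound $m\ge n^{0.34}/\epsilon^{8/3}+n^{0.1}/\epsilon^4$ is precisely tuned so that $m^3>n^{1.1}\log^2 n/\epsilon^8$, which is equivalent to $m>S^2/n^{0.9}$ when $S$ is near the target $n\log n/(m\epsilon^4)$. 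Both preconditions of Theorem \ref{thm:uni-lowerboud} therefore hold and the sharper lower bound transfers.

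The only genuinely non-trivial step is the farness claim for the No ensemble, and that has already been discharged in the preceding lemma by means of the oblivious partition argument and Lemma \ref{lem:mon-far}. Consequently the proof itself is mostly bookkeeping: state the reduction in one sentence, invoke Theorem \ref{thm:uni-lowerboud} for the weak bound, and verify the short parameter-range calculation to license the invocation of the conditional part for the strong bound.
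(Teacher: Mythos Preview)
Your reduction is exactly the paper's approach: it simply notes that the Yes instance (uniform) is monotone and, by the preceding lemma, every No instance is $\epsilon/4$-far from monotone, so Theorem~\ref{thm:uni-lowerboud} transfers verbatim to monotonicity testing. In fact you supply more than the paper does, since the paper asserts the conditional parameter regime ``follows directly'' without carrying out the verification you sketch in your third paragraph.
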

We obtain the above theorem as analogous to the Theorem \ref{thm:uni-lowerboud} by showing that lower bound for uniformity implies lower bound for monotonicity in the streaming framework. In particular, the uniform distribution is monotone non-increasing by default and we show that a randomly chosen distribution from \emph{No} instance set is $\epsilon/4$-far from monotone no-increasing. Hence, the correctness of the above theorem follows directly from the Theorem \ref{thm:uni-lowerboud}.

\section{Learning decomposable distributions in the streaming model}
\label{sec:gamma-l}
The algorithm and analysis from the previous section of monotone distributions extend to a more general class of structured distributions known as $(\gamma, L)$-decomposable distributions. The class of $(\gamma,L)$-decomposable distributions were first studied by Canonne et al (\cite{CanonneDGR18}), who gave a unified algorithm for testing monotonicity, k-modal, histograms, log-concave distributions since $(\gamma,L)$-decomposable distributions contain these other classes. We will first recall the definition.

\begin{definition}[$(\gamma, L)$-decomposable distribution \cite{CanonneDGR18}]
A class $\mathcal{C}$ of distributions is said to be $(\gamma, L)$-decomposable, if for every $D \in \mathcal{C}$, there exists an $\ell \leq L$ and a partition $\mathcal{I}=\{I_1,..,I_{\ell}\}$ of $[n]$ into intervals such that for every interval $I_j \in \mathcal{I}$ one of the following conditions hold.
\begin{itemize}
    \item $D(I_j)\leq \frac{\gamma}{L}$
    \item $max_{i\in I_j} D(i)\leq (1+\gamma)min_{i\in I_j} D(i)$
\end{itemize}
\end{definition}

The following lemma shows that monotone distributions, in particular, are decomposable.
\begin{lemma}[\cite{CanonneDGR18}]
For all $\gamma>0$, the class of monotone distributions $\mathcal{M}$ over $[n]$ is $(\gamma,L)$-decomposable, where $L=O(\frac{\log^2{n}}{\gamma})$.
\end{lemma}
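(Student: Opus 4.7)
The plan is to construct an explicit partition of $[n]$ into intervals by bucketing elements according to the value of $D(i)$ on a geometric scale; monotonicity will hand us for free that each bucket is a set of consecutive indices, i.e., an interval.

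First, I would fix a threshold $\tau = \gamma/(nL)$ and split $[n]$ into the ``heavy'' prefix $H = \{i : D(i) \geq \tau\}$ and the ``light'' suffix $T = [n]\setminus H$. Since $D$ is non-increasing, $T$ really is a suffix, so it forms one interval, and its total mass is bounded by $|T|\cdot \tau \leq n \cdot \gamma/(nL) = \gamma/L$. Hence $T$ is a single interval satisfying condition (i) of the definition.

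Next, I would partition $H$ into geometric value-buckets $J_k = \{i \in H : D(i) \in (\tau(1+\gamma)^k,\, \tau(1+\gamma)^{k+1}]\}$ for $k = 0, 1, 2, \ldots$. Monotonicity implies that each non-empty $J_k$ consists of consecutive indices, and by construction $\max_{i \in J_k} D(i) \leq (1+\gamma)\min_{i \in J_k} D(i)$, so each $J_k$ is an interval satisfying condition (ii).

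It remains only to count. Since $D(i) \leq 1$ for all $i$, the index $k$ can only yield a non-empty bucket if $\tau(1+\gamma)^k \leq 1$, giving at most $\log_{1+\gamma}(1/\tau) \leq (2/\gamma)\log(nL/\gamma)$ heavy buckets (using $\log(1+\gamma) \geq \gamma/2$ for $\gamma \leq 1$; the case $\gamma > 1$ is trivial). Plugging in $L = C\log^2 n/\gamma$ for a sufficiently large constant $C$, the $\log L = O(\log\log n)$ contribution is absorbed and the heavy-bucket count becomes $O(\log n/\gamma)$; together with the single tail interval this is at most $L$. The only delicate point is closing this self-reference (since $\tau$ and hence the bucket count depend on $L$), but because the dependence is merely logarithmic in $L$ there is substantial slack, and the stated bound $L = O(\log^2 n/\gamma)$ is actually loose — the same argument yields $L = O(\log n/\gamma)$.
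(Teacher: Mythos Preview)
The paper does not supply its own proof of this lemma; it is quoted verbatim as a citation to \cite{CanonneDGR18}, so there is nothing to compare your argument against here.

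On its own merits, your geometric bucketing argument is sound and is essentially the standard proof. Two small points worth tightening. First, your buckets $J_k$ are half-open on the side that excludes the value $\tau$ itself, so an element with $D(i)=\tau$ lands in $H$ but in no $J_k$; just flip the interval endpoints or absorb it into $J_0$. Second, the self-referential counting step (``plug in $L=C\log^2 n/\gamma$ and verify'') silently assumes $\log(1/\gamma)=O(\log n)$; for $\gamma$ so small that $L\geq n$ this fails, but then the trivial partition of $[n]$ into $n\leq L$ singletons already satisfies condition~(ii), so that regime is handled separately in one line. With these two fixes the proof is complete, and your remark that the construction actually yields the sharper $L=O\bigl((\log n+\log(1/\gamma))/\gamma\bigr)$ is correct; the extra $\log n$ factor in the stated bound is slack.
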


To obtain an algorithm with trade-offs between sample complexity and space complexity, we will start with the algorithm of Fischer et al (\cite{FLV19}) that improves the sample complexity of \cite{DGKR19}. We will describe an algorithm that obtains an explicit description of an unknown $(\gamma, L)$-decomposable distribution. To that end, we start with the definition of an $(\eta,\gamma)$-fine partition as defined in \cite{FLV19}.

\begin{definition}[$(\eta, \gamma)$-fine Partition]
 Let $D$ be distribution over $[n]$ and $\mathcal{I}=\{I_1,...,I_{\ell}\}$ be an interval partition of $D$. $\mathcal{I}$ is said to be $(\eta, \gamma)$ fine partition if there exists $\eta>0$, $\gamma>0$ and a set $H\subset \mathcal{I}$, such that $H=\{I_j\in \mathcal{I} : D(I_j)>\eta,|I_j|>1\}$ and $\sum_{I_j\in H}D(I_j)\leq \gamma$.
\end{definition}
A set of $(\eta,\gamma)$ partitions can be obtained in the following way: sample $k=O(\frac{1}{\eta}\log{1/\gamma^{\delta}})$ points from $D$, sort them in increasing order $\{x_1<x_2<...<x_k\}$ without repetition and set $x_0=0$. For every point $x_j$; $1\leq j\leq k$ a singleton interval is added and for $x_j>x_{j-1}+1$, an interval $[x_{j-1}+1,x_{j}-1]$ is added. Finally for $x_k<n$, an interval $[x_k,n]$ is also added. Precisely, the following theorem can be summarised:
\begin{theorem}[\cite{FLV19}]\label{thm_pull}
Let $D$ be a distribution over $[n]$. For the parameters $\eta>0,\gamma>0, \delta>0$, there exists an algorithm that uses $O(\frac{1}{\eta}\log{1/\gamma^{\delta}})$ \samp queries and with probability at least $(1-\delta)$, finds a set of $(\eta,\gamma)$ fine partitions $\mathcal{I}=\{I_1,...,I_r\}$ of $D$ where $r=|\mathcal{I}|=O(\frac{1}{\eta}\log{1/\gamma^{\delta}})$.
\end{theorem}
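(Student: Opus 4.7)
The plan is to execute the explicit sampling-and-cutting procedure sketched just before the statement and analyze it in two pieces: the count bound on $|\mathcal{I}|$ and the $(\eta,\gamma)$-fineness of the resulting partition. Draw $k = O(\frac{1}{\eta}\log(1/\gamma^{\delta}))$ i.i.d.\ \samp samples, deduplicate and sort them as $x_1 < \cdots < x_k$, set $x_0 = 0$ and $x_{k+1} = n+1$, and build $\mathcal{I}$ from the singleton parts $\{x_j\}$ together with the non-empty ``bridging'' intervals $B_j = [x_{j-1}+1, x_j - 1]$. The count bound $r = |\mathcal{I}| \le 2k+1 = O(k)$ is immediate. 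Moreover, since the definition of $H$ requires $|I_j| > 1$, every $I_j \in H$ must be a bridging interval, and hence is sample-free by construction. So the task reduces to bounding the $D$-mass of sample-free intervals of length at least $2$ whose $D$-mass exceeds $\eta$.

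The key step is to compare $\mathcal{I}$ against a data-oblivious reference partition $\mathcal{J} = \{J_1, \ldots, J_m\}$ of $[n]$ tailored to $D$. Build $\mathcal{J}$ greedily: isolate each element $i$ with $D(i) > \eta/10$ as a singleton part, and group consecutive light elements into maximal runs of accumulated mass in $[\eta/10, \eta/5]$ (well-defined because no light element exceeds $\eta/10$). Then every part of $\mathcal{J}$ has mass at least $\eta/10$, and $m = O(1/\eta)$. A standard Chernoff/geometric-tail bound shows that a fixed part of mass $\ge \eta/10$ is missed by all $k$ samples with probability at most $e^{-k\eta/10}$. A union bound over the $O(1/\eta)$ parts, with $k$ as chosen, guarantees that every $J \in \mathcal{J}$ contains at least one sample except with probability at most $\delta$.

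Condition on this good event. Fix any bridging interval $B$ with $|B| \ge 2$. Because $B$ contains no sample while every part of $\mathcal{J}$ does, $B$ cannot fully contain any $J \in \mathcal{J}$. Since $\mathcal{J}$ is an interval partition and $B$ is itself an interval, $B$ therefore overlaps at most two parts — the ones containing its two endpoints (which may coincide). Each such boundary part has mass at most $\eta/5$ (heavy singletons are excluded as they would themselves be sampled and therefore cannot lie in the sample-free $B$), so $D(B) \le 2 \cdot \eta/5 < \eta$. Hence, under the good event, no bridging interval lies in $H$, and $\sum_{I_j \in H} D(I_j) = 0 \le \gamma$, which is the required $(\eta,\gamma)$-fineness.

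The main obstacle is the quantitative calibration: the cleanest version of the argument above actually establishes the stronger conclusion $H = \emptyset$, but at sample cost $O(\frac{1}{\eta}\log(1/(\eta\delta)))$ rather than the tighter $O(\frac{1}{\eta}\log(1/\gamma^\delta))$ promised in the statement. Matching the stated bound requires exploiting the $\gamma$-slack explicitly: instead of insisting that \emph{every} reference part be hit, one permits a collection of missed parts of aggregate $D$-mass at most $\gamma$, and then applies the union bound in geometric ``mass buckets'' restricted to the heaviest strata of $\mathcal{J}$. Executing this bucketed variant is the delicate step; the rest of the argument — showing that the unmissed reference structure forces the heavy bridging intervals to aggregate mass at most $\gamma$ — then goes through by the same two-endpoint overlap argument as above.
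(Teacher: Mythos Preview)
The paper does not prove this theorem: it is quoted from \cite{FLV19}, and the only in-paper content is the algorithmic sketch immediately preceding the statement (sample $k$ points, sort, cut into singletons and bridging intervals). Your proposal reproduces that construction correctly and supplies the analysis the paper omits, so there is nothing in the paper to compare against beyond verifying that your algorithm matches theirs --- which it does.

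Your reference-partition argument is the standard one and is correct as far as it goes: every part of $\mathcal{J}$ is hit with high probability, whence every sample-free bridging interval is trapped between two light boundary parts and has mass below $\eta$.

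The ``obstacle'' you identify in your last paragraph is not a real obstacle but an artifact of a typo in the stated bound. Read literally, $\log(1/\gamma^{\delta}) = \delta\log(1/\gamma)$, which \emph{decreases} as the failure probability $\delta\to 0$; no sample-complexity bound can behave this way. The intended expression is almost certainly $O\bigl(\tfrac{1}{\eta}\log\tfrac{1}{\gamma\delta}\bigr)$ (or the morally equivalent $O\bigl(\tfrac{1}{\eta}\log\tfrac{1}{\eta\delta}\bigr)$ that your clean argument already delivers). In the paper's only application (Algorithm~\ref{alg:learn-gamma-l}) one has $\eta=\Theta(\epsilon/L)$ and $\gamma=\Theta(\epsilon)$, so these variants are interchangeable up to constants. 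You should therefore drop the bucketed refinement you sketch at the end: it is chasing a bound that, as written, is neither achievable nor meaningful, and your straightforward analysis already suffices for the corrected statement.
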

After a set of $(\eta,\gamma)$ partitions is obtained, a \emph{weakly tolerant interval uniformity tester} is used to check how many of the intervals are far from uniformity. If a significant number of intervals are far from uniformity, then the obtained partitions can not be used for learning. Otherwise, the partitions are used to construct a distribution according to \ref{lem:folklore-learn}. The following theorem reflects the task of a \emph{weakly tolerant interval uniformity tester}:
\begin{theorem}[\cite{Pan08}]
  Let $D$ be a distribution over $[n]$. There exists an algorithm $\mathcal{A}$ which takes the following as inputs: \samp access to a distribution $D$, an interval $I\subset [n]$, a parameter $m$ defined as the maximum size of an interval, error parameters $0< \epsilon\leq 1$, $0\leq \delta\leq 1$. The algorithm does the following:
\begin{itemize}
    \item If $|I|\leq m$, $D(I)\geq \gamma$ and $bias(D\upharpoonright{I})\leq \frac{\epsilon}{100}$, then the algorithm accepts with probability at least $1-\delta$.
    \item If $|I|\leq m$, $D(I)\geq \gamma$, and $d_{TV}(D\upharpoonright{I},U_{I})>\epsilon$, then the algorithm rejects with probability at least $1-\delta$.
\end{itemize}
In all other cases, the algorithm behaves arbitrarily. The algorithm requires $O(\sqrt{m}\log{(1/\delta)}/\gamma \epsilon^2)$ samples from $D$.
\end{theorem}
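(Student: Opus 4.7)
The plan is to reduce this to standard $\ell_2$-based uniformity testing on the conditional distribution $D\upharpoonright I$, simulating conditional access via rejection sampling from \samp.

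First I would draw samples from $D$ via \samp and keep only those that fall inside $I$. Since $D(I)\geq \gamma$, a Chernoff bound shows that $O(t/\gamma)$ draws from $D$ produce at least $t$ samples from $D\upharpoonright I$ with probability at least $1-\exp(-\Omega(t))$, so the overhead for simulating conditional access on $I$ is a factor of $1/\gamma$.

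Next I would run a collision-based uniformity tester on those conditional samples, viewing $D\upharpoonright I$ as a distribution over a domain of size at most $m$. By Lemma \ref{observation}, a bias of at most $\epsilon/100$ (interpreted as the max-to-min ratio being bounded by $1+\epsilon/100$) implies $\|D\upharpoonright I\|_2^2\leq (1+\epsilon^2/10000)/|I|$, while the contrapositive gives that $d_{TV}(D\upharpoonright I, U_I)>\epsilon$ forces $\|D\upharpoonright I\|_2^2 > (1+\epsilon^2)/|I|$. The two regimes are therefore separated by an additive gap of order $\epsilon^2/|I|$ in the collision probability, and estimating this quantity from the empirical pairwise collision count using $O(\sqrt{|I|}/\epsilon^2)=O(\sqrt{m}/\epsilon^2)$ conditional samples suffices by Chebyshev's inequality applied to the collision $U$-statistic.

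Finally, to drive the failure probability down to $\delta$, I would run the basic tester $O(\log(1/\delta))$ times independently and take a majority vote. Combined, the total sample complexity from \samp is $O(\sqrt{m}\log(1/\delta)/(\gamma\epsilon^2))$, matching the stated bound. The main technical obstacle is controlling the variance of the empirical collision count when $D\upharpoonright I$ is close to but not exactly uniform, so that a single threshold cleanly separates the tolerant completeness case from the soundness case; this is essentially the standard Paninski/BKR calculation, and it is also where the specific $\epsilon/100$ constant in the bias condition comes from, ensuring the completeness collision probability sits safely below the soundness one.
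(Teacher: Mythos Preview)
The paper does not supply its own proof of this theorem: it is quoted as a black-box result from \cite{Pan08} (in the formulation used by \cite{FLV19}) and is invoked only as a tool in Section~\ref{sec:gamma-l}. Your proposed argument is the standard and correct route to this statement: simulate samples from $D\upharpoonright I$ by rejection sampling at a $1/\gamma$ overhead, apply the collision-based $\ell_2$ uniformity tester on a domain of size at most $m$ using $O(\sqrt{m}/\epsilon^2)$ conditional samples, separate the two cases via Lemma~\ref{observation} exactly as you describe, and amplify by majority vote over $O(\log(1/\delta))$ independent runs. There is nothing in the paper to compare against.
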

Now, we explain how we implement the above-mentioned ideas in the streaming settings. It is easy to observe that the Theorem \ref{thm_pull} can be used as it is in the streaming settings without storing any samples. Consider a set of $O(\frac{1}{\eta}\log{1/\gamma^{\delta}})$ points that appear online as a stream one at a time. We can construct the singleton intervals online by looking at the sampled points. Later, we can add the intervals lying in between two singleton intervals.
However, the weakly tolerant interval uniformity tester requires $O(\sqrt{m}\log{(1/\delta)}/\gamma \epsilon^2)$ samples (\cite{Pan08}) for each $I_j$ where $|I_j|\leq m$. This leads to the use of excessive memory storage. We observe that the weakly tolerant interval uniformity tester's task can be replaced by the use of bipartite collision count and we present the following algorithm, which is a small modification of \cite{FLV19}.

\begin{algorithm}[H]
   \caption{Assessing a Partition Streaming} \label{algo_assesing_partition}
\SetKwInOut{Input}{Input}
        \SetKwInOut{Output}{Output}
\Input{\samp access to $D$, a $(\eta,\gamma)$-fine interval partitions $\mathcal{I}=\{I_1,...,I_r\}$, parameters $c, r$, error parameters $\epsilon\in(0,1)$ memory requirement $\log{n}/\epsilon^4 \leq m\leq \sqrt{n\log{n}}/\epsilon^3$}
%\Output{Set of intervals $\mathcal{B}$}
Sample $T=O(\frac{1}{\epsilon^4}r^2\log{r})$ points from $SAMP_D$\\
Estimate the frequencies of each interval $I_j$ to be $\tilde{f}_{I_j}$ (by adding the frequencies of all elements in $I_j$); using \countmin sketch $(\epsilon,\delta)$\\
\For{$k=O(1/\epsilon)$ times}
{
Obtain an additional sample $S=O(\frac{nr}{m\epsilon^8})$ from $SAMP_D$\\
Let $x$ be a sample and $I_j\in \mathcal{I}$ be the interval that contains it\\
\If{$|I_j|\leq \frac{n}{c}$ and $\frac{\tilde{f}_{I_j}}{|T|}\geq \frac{\epsilon}{r}-\frac{\epsilon^2}{r}$} 
{
Store first set of $S_1=O(\frac{m}{\log{n}})$ samples in memory\\
For next set of $S_2=O(\frac{n}{m\epsilon^5})$ points in $I_j$ check\\
\lIf{$\frac{coll(S_1,S_2)}{|S_1||S_2|} >\frac{1+63\epsilon^2/64}{|I_j|}$}
{Add $I_j$ to $\mathcal{B}$}
}
}
\leIf{$|\mathcal{B}|> 4\epsilon k$}{\reject}{\accept}
\end{algorithm}
The following theorem shows how bipartite collision count does the same task as that of the weakly tolerant interval uniformity tester.
\begin{theorem}\label{partition}
    Let $D$ be a distribution over $[n]$, $I\subset [n]$ be an interval such that $D(I)\geq \epsilon/r$. Let $S_I$ be the set of samples that falls inside $I$ while sampling $S=O(\frac{nr}{m\epsilon^8})$ points according to $D$. Consider $S_I$ can be divided into two sets $S_1$ and $S_2$ such that $|S_1|\cdot|S_2|\geq |S_I|/\epsilon^4$. Then the following happens with high probability,
    
    \begin{itemize}
        \item If $bias(D\upharpoonright{I})\leq \frac{\epsilon}{100}$, then $\frac{coll(S_1,S_2)}{|S_1||S_2|} \leq \frac{1+\epsilon'+\epsilon^2/64}{|I_j|}$; where $\epsilon'=\frac{\epsilon^2}{10^4}$
    \item If $d_{TV}(D\upharpoonright{I},U_{I})>\epsilon$, then $\frac{coll(S_1,S_2)}{|S_1||S_2|} > \frac{1+63\epsilon^2/64}{|I|}$
    \end{itemize}
\end{theorem}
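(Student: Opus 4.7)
The plan is to reduce both cases to the bipartite collision estimator in Lemma \ref{basic_0} by first translating the hypothesis about $D\upharpoonright I$ into a bound on the self-collision probability $\|D_I\|_2^2$ via Lemma \ref{observation}, and then transferring that bound to $\frac{\coll(S_1,S_2)}{|S_1||S_2|}$. The hypothesis $|S_1|\cdot|S_2|\geq |S_I|/\epsilon^4$ is given, so the event in Lemma \ref{basic_0} fires with probability at least $2/3$ and can be boosted by repetition if needed.

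For the second bullet, I would start from $d_{TV}(D\upharpoonright I, U_I) > \epsilon$. The Cauchy--Schwarz inequality $d_{TV}(D_I,U_I) \leq \sqrt{|I|}\cdot \|D_I - U_I\|_2$ gives $\|D_I - U_I\|_2^2 > \epsilon^2/|I|$, and using the identity $\|D_I - U_I\|_2^2 = \|D_I\|_2^2 - 1/|I|$ I would conclude $\|D_I\|_2^2 > (1+\epsilon^2)/|I|$. Invoking Lemma \ref{basic_0} then yields
\[
\frac{\coll(S_1,S_2)}{|S_1||S_2|} \;\geq\; \|D_I\|_2^2 - \frac{\epsilon^2}{64|I|} \;>\; \frac{1}{|I|} + \frac{\epsilon^2}{|I|} - \frac{\epsilon^2}{64|I|} \;=\; \frac{1 + 63\epsilon^2/64}{|I|},
\]
which is exactly the second inequality to be proved.

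For the first bullet, the hypothesis $\mathrm{bias}(D\upharpoonright I) \leq \epsilon/100$ should be read in the same sense as in Lemma \ref{observation}, namely $\max_{i \in I} D_I(i) \leq (1 + \epsilon/100)\min_{i \in I} D_I(i)$. Applying Lemma \ref{observation} to the conditional distribution $D_I$ with parameter $\epsilon/100$ gives $\|D_I\|_2^2 \leq (1 + (\epsilon/100)^2)/|I| = (1 + \epsilon')/|I|$, and then the upper half of Lemma \ref{basic_0} yields
\[
\frac{\coll(S_1,S_2)}{|S_1||S_2|} \;\leq\; \|D_I\|_2^2 + \frac{\epsilon^2}{64|I|} \;\leq\; \frac{1 + \epsilon' + \epsilon^2/64}{|I|}.
\]

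The only mild technical point is ensuring that when $D(I) \geq \epsilon/r$ and $S = O(nr/m\epsilon^8)$ samples are drawn, the number $|S_I|$ of samples landing in $I$ is large enough that the split with $|S_1|\cdot|S_2| \geq |S_I|/\epsilon^4$ (demanded by Lemma \ref{basic_0}) is realizable. This follows from an additive Chernoff bound on the indicators ``$i$-th sample lies in $I$,'' exactly as in the proof of Lemma \ref{lem:chernoff}; that is the only step requiring care beyond algebraic substitution, so I would expect the writing burden to be essentially bookkeeping rather than a genuine obstacle.
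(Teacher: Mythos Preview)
Your proposal is correct and follows essentially the same approach as the paper: bound $\|D_I\|_2^2$ from the hypothesis (via Lemma~\ref{observation} for the small-bias case and its contrapositive for the far-from-uniform case), then feed that into Lemma~\ref{basic_0}. The only cosmetic difference is that for the second bullet the paper simply cites the contrapositive of Lemma~\ref{observation} to get $\|D_I\|_2^2 > (1+\epsilon^2)/|I|$, whereas you rederive it inline via Cauchy--Schwarz and the identity $\|D_I-U_I\|_2^2 = \|D_I\|_2^2 - 1/|I|$; the paper also spells out explicitly (rather than deferring to Lemma~\ref{lem:chernoff}) that with $|S_1|=O(m/\log n)$ and $|S_2|=O(n/m\epsilon^5)$ the product condition $|S_1||S_2|\geq |S_I|/\epsilon^4$ holds once $m\geq \log n/\epsilon^4$.
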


\begin{proof}
   As $D(I)\geq \epsilon/r$, an additive Chernoff bound shows that $I$ will contain $|S_I|=O(\frac{n}{m\epsilon^5})$ points out of $S=O(\frac{nr}{m\epsilon^8})$ samples with probability at least $2/3$. Also considering $S_1=O(\frac{m}{\log{n}})$ and  $S_2=O(\frac{n}{m\epsilon^5})$, we get $|S_1|\cdot|S_2|=O(n/\epsilon^5\log{n})\geq O(\frac{n}{m\epsilon^9})$ for $m\geq \log{n}/\epsilon^4$. This implies that  $|S_1|\cdot|S_2|\geq \frac{|S_I|}{\epsilon^4}$. Hence, by Lemma \ref{basic_0} the following happens with probability at least $2/3$, $||D_{I}||_2^2-\frac{\epsilon^2}{64|I|}\leq \frac{coll(S_1,S_2)}{|S_1||S_2|}\leq ||D_{I}||_2^2+\frac{\epsilon^2}{64|I|}$.
Let $bias(D\upharpoonright{I})\leq \frac{\epsilon}{100}$ which implies that for $x\in I$, $max_{x\in I}D(x)\leq (1+\epsilon/100) min_{x\in I}D(x)$. Using the Lemma \ref{observation} we get, $||D_{I}||_2^2\leq \frac{1+\epsilon^2/10^4}{|I|}$. Hence, we obtain, $\frac{coll(S_1,S_2)}{|S_1||S_2|} < \frac{1+\epsilon'+\epsilon^2/64}{|I|}$.
Let $d_{TV}(D\upharpoonright{I},U_{I})>\epsilon$, by the Lemma \ref{observation}, $||D_{I}||_2^2> \frac{1+\epsilon^2}{|I|}$. As  $D(I)\geq \epsilon/r$, we already proved $|S_1|\cdot|S_2|\geq \frac{|S_I|}{\epsilon^4}$. Applying Lemma \ref{basic_0}, we get $\frac{coll(S_1,S_2)}{|S_1||S_2|} >  \frac{1+63\epsilon^2/64}{|I|}$.
\end{proof}
The above theorem shows that the acceptance and rejection conditions of the weakly tolerant interval uniformity tester can be substituted by the bipartite collision count except from the fact that $|I_j|\leq n/c$ is not examined. We check this just by adding an extra condition in our algorithm.

\begin{theorem}
The algorithm {\sc assessing a partition streaming} takes a set of $(\eta,\gamma)$-fine interval partitions $\mathcal{I}=\{I_1,...,I_r\}$ as input, uses $S=O(\frac{nr}{m\epsilon^8})$ samples according to the standard access oracle and does the following when $c\eta+\gamma\leq\epsilon$,
    \begin{itemize}
        \item Define $\mathcal{G_I}=\{I_j \in \mathcal{I}: bias(D\upharpoonright{I_j})\leq \frac{\epsilon}{100}\}$. If $D(\cup_{I_j\in \mathcal{G_I}})\geq 1-\epsilon$, then the algorithm outputs Accept with probability at least $2/3$.
        \item Define $\mathcal{F_I}=\{I_j\in \mathcal{I} : d_{TV}(D\upharpoonright{I_j},U_{I_j})>\epsilon\}$. If $D(\cup_{I_j\in \mathcal{F_I}})\geq 7\epsilon$, then the algorithm outputs Reject with probability at least $2/3$.
    \end{itemize}
    The memory requirement for the algorithm is $O(m)$ bits, where $\log{n}/\epsilon^4 \leq m\leq O(\sqrt{n\log{n}}/\epsilon^3)$.
\end{theorem}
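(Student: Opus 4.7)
My plan is to follow the structure of the \cite{FLV19} analysis for assessing a fine partition, with the single substitution that the per-interval weakly-tolerant uniformity tests are replaced by the bipartite-collision tests of Theorem \ref{partition}. The outer loop runs $k = O(1/\epsilon)$ independent trials; in each trial one samples a point $x$, identifies the interval $I_j \ni x$, and (provided $|I_j|\leq n/c$ and the CountMin estimate $\tilde f_{I_j}/T$ is at least $\epsilon/r-\epsilon^2/r$) runs a bipartite-collision test on the subsequent samples restricted to $I_j$. The final decision compares $|\mathcal{B}|$ to $4\epsilon k$. The heart of the proof is to show that each outer trial behaves essentially like a Bernoulli indicator for ``$I_j \in \mathcal{F_I}$'' (up to bias from skipped intervals and from the constant failure probability of the bipartite test), after which both cases follow from a Chernoff bound over $k$ such indicators.

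I would begin with the sample-quality guarantees. Lemma \ref{cmin} together with an additive Chernoff bound for the initial $T = O(r^2\log r/\epsilon^4)$ samples gives, simultaneously for all $I_j$, $|\tilde f_{I_j}/T - D(I_j)| \leq \epsilon^2/r$; hence the weight-threshold gate admits every $I_j$ with $D(I_j) \geq \epsilon/r$ and rejects those with $D(I_j) < \epsilon/r - 2\epsilon^2/r$. The intervals that the algorithm skips then fall into two classes: those with $D(I_j) < \epsilon/r$, of total mass at most $\epsilon$ since $|\mathcal I|\le r$; and those with $|I_j|>n/c$. For the latter class the $(\eta,\gamma)$-fine definition is exactly what is needed, since any such interval is non-singleton, so either $D(I_j)\le \eta$ (there are at most $c$ of these, contributing total mass at most $c\eta$) or $I_j \in H$ (contributing to the $\gamma$ budget). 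Thus the total skipped mass is at most $c\eta + \gamma + \epsilon \le 2\epsilon$ by the hypothesis $c\eta+\gamma\le\epsilon$.

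With this accounting in hand, the two directions of the theorem become a short Chernoff computation. For completeness, suppose $D(\cup \mathcal{G_I})\ge 1-\epsilon$. Each trial lands in a testable $\mathcal{G_I}$ interval with probability at least $1-3\epsilon$; by Theorem \ref{partition} the bipartite-collision test then refrains from adding $I_j$ to $\mathcal{B}$ with probability at least $2/3$, so each trial contributes to $|\mathcal{B}|$ with probability $O(\epsilon)$, and a Chernoff bound over $k$ trials gives $|\mathcal{B}|\le 4\epsilon k$ with probability at least $2/3$. For soundness, suppose $D(\cup \mathcal{F_I})\ge 7\epsilon$. After subtracting the skipped mass, each trial lands in a testable $\mathcal{F_I}$ interval with probability at least $5\epsilon$, and Theorem \ref{partition} then adds it to $\mathcal{B}$ with probability at least $2/3$; another Chernoff bound over $k = \Theta(1/\epsilon)$ trials yields $|\mathcal{B}|>4\epsilon k$ with probability at least $2/3$.

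Finally I would verify the resource bounds. The stored set $S_1$ occupies $|S_1|\log n = O(m)$ bits, reused across the outer loop; the $r$ running counters together with the CountMin sketch take $O(r\log n + (\log n)/\epsilon)$ bits, and the two running collision and occupancy counters add $O(\log n)$ bits. The lower bound $m\ge \log n/\epsilon^4$ is forced by the hypothesis $|S_1|\cdot|S_2|\ge |S_I|/\epsilon^4$ of Lemma \ref{basic_0}, which with the chosen $|S_1|=O(m/\log n)$ and $|S_2|=O(n/(m\epsilon^5))$ reduces exactly to $m\gtrsim \log n/\epsilon^4$; the upper bound $m\le O(\sqrt{n\log n}/\epsilon^3)$ comes from requiring that the expected number of samples in $I_j$ per trial be at least $|S_1|+|S_2|$, so that both halves of the bipartite partition can be formed. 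The main obstacle I anticipate is the clean bookkeeping of the skipped-interval mass through the $(\eta,\gamma)$-fine property, since this is the only nontrivial departure from \cite{FLV19}; once that estimate is in hand, the Chernoff arguments in the two directions are routine.
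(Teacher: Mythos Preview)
Your approach is essentially the paper's own: define the ``skipped'' set $\mathcal{N_I}=\{I_j:|I_j|>n/c\text{ or }D(I_j)<\epsilon/r\}$, bound $D(\mathcal{N_I})\le 2\epsilon$ using the $(\eta,\gamma)$-fine hypothesis together with $c\eta+\gamma\le\epsilon$, deduce $D(\mathcal{G_I}\setminus\mathcal{N_I})\ge 1-3\epsilon$ and $D(\mathcal{F_I}\setminus\mathcal{N_I})\ge 5\epsilon$, and then run a concentration argument over the $k=O(1/\epsilon)$ outer trials against the threshold $4\epsilon k$. Your accounting for the skipped mass and your space analysis also match the paper's.

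There is, however, a step that does not go through as written. In the completeness direction you write that since Theorem~\ref{partition} guarantees the bipartite-collision test is correct with probability at least $2/3$, ``each trial contributes to $|\mathcal{B}|$ with probability $O(\epsilon)$.'' That inference is false: a constant $1/3$ false-positive rate on intervals in $\mathcal{G_I}\setminus\mathcal{N_I}$ gives $\mathbb{E}[|\mathcal{B}|]\approx k/3$, which dwarfs $4\epsilon k$, and no Chernoff bound can separate the two cases. For the outer-loop argument to work the per-trial test must itself succeed with probability $1-O(\epsilon)$, exactly as the weakly tolerant tester of \cite{FLV19} does via its confidence parameter $\delta$. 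The paper's own proof glosses over this same point, simply asserting that ``these intervals can be caught by counting the number of bipartite collisions and they are indeed correct by Theorem~\ref{partition}''; so your proposal is faithful to the paper, but both share the unaddressed gap. The natural fix is to boost the Chebyshev guarantee of Lemma~\ref{basic_0}, either by enlarging $|S_1|\cdot|S_2|$ by a further $1/\epsilon$ factor or by repeating the collision test $O(\log(1/\epsilon))$ times and taking a majority.
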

The correctness of the above theorem follows from \cite{FLV19}, we give a brief outline as follows,
\begin{proof}
  Let us define the set $\mathcal{N_I}=\{I_j:|I_j|>n/c \ or, D(I_j)<\epsilon/r\}$. It is observed that for a set of $(\eta,\gamma)$-fine intervals where $c\eta+\gamma\leq\epsilon$, $D(\mathcal{N_I})\geq 2\epsilon$. Hence, $D(\mathcal{G_I}\setminus \mathcal{N_I})\geq (1-3\epsilon)$. As a result, out of $O(1/\epsilon)$ iterations, at most $4\epsilon k$ intervals are drawn from the desired set where $|I_j|\leq \frac{n}{c}$ and $D(I_j)\geq \frac{\epsilon}{r}$ and $bias(D\upharpoonright{I_j})\leq \frac{\epsilon}{100}$. Furthermore, these intervals can be caught by counting the number of bipartite collisions and they are indeed correct by Theorem \ref{partition}. Hence, the algorithm outputs \accept.

By the definition of $\mathcal{F_I}$, it is easy to observe that $D(\mathcal{F_I}\setminus \mathcal{N_I})\geq 5\epsilon$. As a result, out of $O(1/\epsilon)$ iterations, more than $4\epsilon k$ intervals are drawn from the desired set where $|I_j|\leq \frac{n}{c}$ and $D(I_j)\geq \frac{\epsilon}{r}$ and $ d_{TV}(D\upharpoonright{I_j},U_{I_j})>\epsilon$. Furthermore, these intervals will be caught by Theorem \ref{partition} and the algorithm outputs \reject with high probability.

\textbf{Space complexity :} The first set of points $T=O(\frac{1}{\epsilon^4}r^2\log{r})$ is sampled for estimating weights of each interval $D(I_j)$. An additive Chernoff bound (followed by a union bound over $r$) shows that by using $T$ samples, for all intervals $I_j$, $|D(I_j)-\tilde{D}(I_j)|\leq\epsilon^2/r$. Instead of storing all the samples, we use \countmin sketch with parameters $(\epsilon,\delta)$ to save the space. When the samples appear one at a time as stream of $T$ elements, we store the frequencies of each element in the \countmin table. If $f_{x}$ be the frequency of an element $x\in T$, by Lemma \ref{cmin}, with probability at least $(1-\delta)$, $f_{x}\leq \tilde{f}_{x}\leq f_{x}+\epsilon |T|$. We can get the frequency of an interval $I_j$ by adding the frequencies of all the elements lying in $I_j$, i.e $f_{I_j}=\sum_{x\in I_j}f_{x}$. We observe that $f_{I_j}\leq \tilde{f}_{I_j}\leq f_{I_j}+ \epsilon |T|^2$. We also know that $\tilde{D}(I_j)=\frac{f_{I_j}}{|T|}$ and $\tilde{D}(I_j)\geq D(I_j)-\epsilon^2/r$. By combining these, to check if $D(I_j)\geq \epsilon/r$, it would be sufficient to check if $\frac{f_{I_j}}{|T|}\geq \epsilon/r-\epsilon^2/r$. The space used for this procedure is $O(\epsilon \log{1/\delta})<m$ by the use of \countmin $(\epsilon,\delta)$. For the rest of the algorithm, we are storing the set $|S_1|=O(\frac{m}{\log{n}})$ samples in memory. So, a total of $|S_1|\cdot\log{n}=O(m)$ bits storage is required for the implementation of the algorithm.
\end{proof}

Now, we describe the final learning algorithm for $(\gamma,L)$-decomposable distribution in the one-pass streaming settings.

\begin{algorithm}[H]
     \caption{Learning $L$-decomposable Distribution Streaming}
\SetKwInOut{Input}{Input}
        \SetKwInOut{Output}{Output}
\Input{\samp access to $D$ supported over $[n]$, parameters $c=20, r=10^5L\log{(1/\epsilon)}/\epsilon$, error parameters $\epsilon,\delta$, memory requirement $\log{n}/\epsilon^4 \leq m\leq O(\sqrt{n\log{n}}/\epsilon^3)$}
\Output{An explicit distribution $(\Tilde{D}^f)^{\mathcal{I}}$}
Use Theorem \ref{thm_pull} to obtain a set of $(\epsilon/2000L,\epsilon/2000)$ fine partitions of $[n]$\\
Run algorithm {\sc assessing partition streaming} \\
\eIf{it Rejects}{\reject}{
Return the flattened distribution of $\Tilde{D}$, i.e., $(\Tilde{D}^f)^{\mathcal{I}}$
}
\label{alg:learn-gamma-l}
\end{algorithm}
The correctness of the algorithm follows from Lemma $7.1$ from (\cite{FLV19}). Our adaption of the lemma is as follows:
\begin{theorem}
If $D$ is an $(\epsilon/2000,L)$-decomposable distribution, then the algorithm {\sc learning $L$-decomposable distribution streaming} outputs a distribution $(\Tilde{D}^f)^{\mathcal{I}}$ such that $d_{TV}(D,(\Tilde{D}^f)^{\mathcal{I}})\leq \epsilon$ with probability at least $1-\delta$. The algorithm requires $O(\frac{nL\log{(1/\epsilon)}}{m\epsilon^9})$ samples from $D$ and needs $O(m)$ bits of memory where $\log{n}/\epsilon^4 \leq m\leq O(\sqrt{n\log{n}}/\epsilon^3)$.
\end{theorem}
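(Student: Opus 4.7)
The plan is to follow the template of Lemma 7.1 in \cite{FLV19}, using the fact that the \textsc{Assessing a Partition Streaming} subroutine (via bipartite collisions) is a drop-in replacement for the weakly tolerant interval uniformity tester used there. First, I would instantiate Theorem \ref{thm_pull} with $\eta = \epsilon/2000L$ and $\gamma = \epsilon/2000$, which yields an $(\eta,\gamma)$-fine partition $\mathcal{I} = \{I_1,\dots,I_r\}$ with $r = O(L\log(1/\epsilon)/\epsilon)$ intervals, using $O(r)$ samples and no storage beyond recording the singleton endpoints. This step is purely structural and inherits its correctness from \cite{FLV19}.

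Next I would argue correctness. Since $D$ is $(\epsilon/2000, L)$-decomposable, there exists a decomposition $\mathcal{J}$ with at most $L$ intervals such that each $J \in \mathcal{J}$ is either light ($D(J) \leq \epsilon/2000L$) or biased by at most $\epsilon/2000$. The key combinatorial lemma from \cite{FLV19} says that after refining $\mathcal{J}$ by the endpoints defining $\mathcal{I}$, the total mass of intervals in $\mathcal{I}$ that simultaneously are large, short, and have bias exceeding $\epsilon/100$ is $O(\epsilon)$; equivalently, $D(\cup_{I_j \in \mathcal{G_I}}) \geq 1-\epsilon$ in the language of the previous theorem. Hence \textsc{Assessing a Partition Streaming} accepts with probability at least $2/3$. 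Upon acceptance, we bound $d_{TV}(D,(\tilde{D}^f)^{\mathcal{I}})$ by splitting the mass across (i) light intervals, which contribute at most $\gamma$, (ii) intervals of size $>n/c$, whose contribution is controlled by $c\eta$, and (iii) intervals on which the conditional distribution has bias at most $\epsilon/100$, each of which contributes at most $\epsilon/100 \cdot D(I_j)$ to $d_{TV}(D,(D^f)^{\mathcal{I}})$. Adding the $\epsilon$-loss from Lemma \ref{lem:folklore-learn} between $(D^f)^{\mathcal{I}}$ and $(\tilde{D}^f)^{\mathcal{I}}$ and applying the triangle inequality gives $d_{TV}(D,(\tilde{D}^f)^{\mathcal{I}}) \leq \epsilon$ with the appropriate constants.

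For the resource bounds, substituting $r = O(L\log(1/\epsilon)/\epsilon)$ into the $O(nr/m\epsilon^8)$ sample complexity of the assessing subroutine yields $O(nL\log(1/\epsilon)/(m\epsilon^9))$ samples. The samples from Theorem \ref{thm_pull} and those used for the empirical distribution $\tilde{D}$ are of lower order. For memory, the fine-partition construction needs only to record the $O(r)$ endpoints online, the empirical weights are kept in a \countmin sketch together with $r$ counters (each $O(\log n)$ bits), and \textsc{Assessing a Partition Streaming} is already shown to require $O(m)$ bits in the range $\log n/\epsilon^4 \leq m \leq O(\sqrt{n\log n}/\epsilon^3)$. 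Summing gives $O(m)$ bits overall.

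The step I expect to be the main obstacle is making the refinement argument precise: one needs to show that after intersecting a $(\gamma,L)$-decomposition with the $(\eta,\gamma)$-fine partition, the mass sitting on intervals that are short, heavy, and biased is genuinely small, so that \textsc{Assessing a Partition Streaming} cannot be fooled. This is a bookkeeping argument already carried out in \cite{FLV19}; my job is to check that replacing the uniformity tester with the bipartite-collision test in Theorem \ref{partition} preserves both the acceptance and rejection guarantees, which it does because the threshold $(1+63\epsilon^2/64)/|I_j|$ is calibrated to exactly match the $bias \leq \epsilon/100$ vs.\ $d_{TV} > \epsilon$ dichotomy used in the original analysis.
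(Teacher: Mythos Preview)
Your proposal is correct and follows the same approach as the paper, which simply defers to Lemma~7.1 of \cite{FLV19} for correctness; you have in fact spelled out the structure of that argument (the refinement of the decomposition by the fine partition, the three-way split of the TV distance, and the substitution of $r = O(L\log(1/\epsilon)/\epsilon)$ into the sample bound) in more detail than the paper does. The only small point to watch is the success probability: the assessing subroutine is stated with probability $2/3$, so to reach $1-\delta$ you should note the standard amplification (or that \cite{FLV19} already handles this), but this is implicit in the paper as well.
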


The above algorithm can be used as a subroutine for testing $(\gamma,L)$-decomposable properties. Given an unknown distribution $D$, we will use Algorithm~\ref{alg:learn-gamma-l} to learn an explicit description of the distribution and the test if it is $(\gamma,L)$-decomposable. Once again, like in the monotone distribution case, we note that the final testing of the explicit description will require additional space that we haven't accounted in the earlier algorithm.

% \begin{algorithm}[H]
%      \caption{Testing $L$- Decomposable properties Streaming}
% \SetKwInOut{Input}{Input}
%         \SetKwInOut{Output}{Output}
% \Input{SAMP access to $D$ supported over $[n]$, parameters $c=20, r=10^5L\log{(1/\epsilon)}/\epsilon$, error parameters $\epsilon,\delta$, function $L: (0,1]\rightarrow \mathbb{N}\times \mathbb{N}$, $(\gamma,L)$ decomposable property $\mathcal{C}$ of distributions, memory requirement $\log{n}\leq m\leq O(n^{3/4})$}
% \Output{Accept if $D$ satisfies $\mathcal{C}$, Reject if $D$ is $\epsilon$ far from $\mathcal{C}$}
% Run {\sc learning $L$- decomposable distribution streaming} \\
% \eIf{$(\Tilde{D}^f)^{\mathcal{I}}$ is $\epsilon$ close to $\mathcal{C}$}{Accept}{Reject}
% \end{algorithm}
\begin{theorem}
    Let $\mathcal{C}$ be a $(\gamma,L)$-decomposable property for $L=L(\epsilon/4000,n)$. The algorithm {\sc testing $L$-decomposable properties streaming} requires $O(\frac{nL\log{(1/\epsilon)}}{m\epsilon^9})$ samples from $D$ and does the following,
    \begin{itemize}
        \item If $D$ satisfies $\mathcal{C}$, it outputs Accept with probability at least $(1-\delta)$
        \item If $D$ is $2\epsilon$ far form $\mathcal{C}$, it outputs Reject with probability $(1-\delta)$
    \end{itemize}
   The algorithm uses $O(m)$ bits of memory where $\log{n}/\epsilon^4 \leq m\leq O(\sqrt{n\log{n}}/\epsilon^3)$.
\end{theorem}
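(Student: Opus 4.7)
The plan is a straightforward learn-then-test reduction. The tester invokes Algorithm~\ref{alg:learn-gamma-l} as a black-box subroutine on the incoming stream with error parameter $\epsilon/4000$ and failure probability $\delta$. If that subroutine rejects, the tester also rejects; otherwise the subroutine returns an explicit description $(\tilde{D}^f)^{\mathcal{I}}$ supported on $O(L\log(1/\epsilon)/\epsilon)$ intervals, and we deterministically check (without drawing any further samples) whether $(\tilde{D}^f)^{\mathcal{I}}$ lies within total variation distance $\epsilon$ of some distribution in $\mathcal{C}$, accepting if so and rejecting otherwise.

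For completeness, suppose $D \in \mathcal{C}$. Since $\mathcal{C}$ is $(\gamma, L)$-decomposable with $L = L(\epsilon/4000, n)$, the distribution $D$ is in particular $(\epsilon/4000, L)$-decomposable, so by the previous theorem the learning subroutine outputs $(\tilde{D}^f)^{\mathcal{I}}$ with $d_{TV}(D,(\tilde{D}^f)^{\mathcal{I}}) \le \epsilon$ with probability at least $1-\delta$. Because $D$ itself lies in $\mathcal{C}$, the hypothesis is automatically $\epsilon$-close to $\mathcal{C}$, the offline check passes, and the tester accepts.

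For soundness, suppose $d_{TV}(D,\mathcal{C}) > 2\epsilon$. One of two cases occurs. Either the learning subroutine rejects during its internal assessing-partition step, in which case the tester rejects; or it returns a hypothesis $(\tilde{D}^f)^{\mathcal{I}}$, and the guarantee of Algorithm~\ref{alg:learn-gamma-l} gives (with probability at least $1-\delta$) that $d_{TV}(D,(\tilde{D}^f)^{\mathcal{I}}) \le \epsilon$. In that case, for every $D' \in \mathcal{C}$ the triangle inequality yields
\[
d_{TV}\bigl((\tilde{D}^f)^{\mathcal{I}},D'\bigr) \;\ge\; d_{TV}(D,D') - d_{TV}\bigl(D,(\tilde{D}^f)^{\mathcal{I}}\bigr) \;\ge\; 2\epsilon - \epsilon \;=\; \epsilon,
\]
so $(\tilde{D}^f)^{\mathcal{I}}$ is $\epsilon$-far from every element of $\mathcal{C}$ and the offline check rejects.

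The sample and memory complexities transfer directly from the learning subroutine, giving $O(nL\log(1/\epsilon)/(m\epsilon^9))$ samples and $O(m)$ bits of working memory under the stated range on $m$, because no additional samples and no additional stored samples are used for the offline check against $\mathcal{C}$. The step I expect to be the principal obstacle is precisely this final offline test: deciding whether an explicit distribution supported on $O(L\log(1/\epsilon)/\epsilon)$ intervals is $\epsilon$-close to some member of $\mathcal{C}$ can in general be nontrivial in both computation and working space, and the paper openly acknowledges that the streaming analysis does not charge the cost of this step. For concrete classes such as monotone or $k$-modal distributions the closeness check is easy, but in the general $(\gamma,L)$-decomposable setting the reduction only pushes the remaining difficulty into the offline decision procedure, so the statement should really be read as a reduction: given a decision procedure for closeness to $\mathcal{C}$ on explicit small-support distributions, the streaming testing problem admits the above sample and memory bounds.
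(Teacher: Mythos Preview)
Your proposal is correct and follows essentially the same learn-then-test reduction as the paper: invoke the streaming learner, then offline check whether the returned hypothesis is $\epsilon$-close to $\mathcal{C}$, with completeness and soundness following from the learning guarantee and the triangle inequality respectively. Your write-up is in fact more careful than the paper's own proof (which is a terse paragraph), and your closing caveat about the uncosted offline closeness check mirrors exactly the caveat the authors insert in the surrounding text.
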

\begin{proof}
    Let $D$ satisfies $\mathcal{C}$. The explicit distribution $(\Tilde{D}^f)^{\mathcal{I}}$ will be $\epsilon$ close to $\mathcal{C}$. Hence, the algorithm outputs Accept in this case. Similarly, when $D$ is $2\epsilon$ far from $\mathcal{C}$,  $(\Tilde{D}^f)^{\mathcal{I}}$ will be $\epsilon$ far from $\mathcal{C}$ and the algorithm outputs Reject. 

    The algorithm requires $O(L/\epsilon)$ samples from $D$ for finding the first set of $(\epsilon/2000L,\epsilon/2000)$ fine partitions by Theorem \ref{thm_pull}. However, an $O(\frac{nL\log{(1/\epsilon)}}{m\epsilon^9})$ samples required for performing the Algorithm \ref{algo_assesing_partition} which gives the total sample complexity. The memory requirement for the algorithm is $O(m)$ bits which is the same as required by the learning $L$- decomposable distribution in the streaming settings.
\end{proof}
\section{Conclusion}
We give efficient algorithms for testing identity, monotonicity and $(\gamma, L)$-decomposability in the streaming model. For a memory constraint $m$, the number of samples required is a function of the support size $n$ and the constraint $m$. For monotonicity testing, our bounds are nearly optimal. We note that the trade-off that we achieve, and lower bounds work for certain parameters of the value $m$. Furthermore, we have not tried to tighten the dependence of the bound on the parameter $\epsilon$. One natural question to ask is if the dependence of sample complexity on $m$ can be improved, and whether it can work for a larger range of values.

\newpage

\bibliographystyle{alpha}
\bibliography{paper}

%\appendix

\end{document}